\newcommand{\executeiffilenewer}[3]{%
 \ifnum\pdfstrcmp{\pdffilemoddate{#1}}%
 {\pdffilemoddate{#2}}>0%
 {\immediate\write18{#3}}\fi%
}
\newcommand{%
 \executeiffilenewer{.svg}{.pdf}%
 {inkscape -z -D --file=.svg %
 --export-pdf=.pdf --export-latex}%
 \input{.pdf_tex}%
}[1]{%
 \executeiffilenewer{#1.svg}{#1.pdf}%
 {inkscape -z -D --file=#1.svg %
 --export-pdf=#1.pdf --export-latex}%
 \input{#1.pdf_tex}%
}
\theoremstyle{definition} 
 \newtheorem{definition}{Definition}[section]
 \newtheorem{remark}[definition]{Remark}
\theoremstyle{plain}      
 \newtheorem{proposition}[definition]{Proposition}
 \newtheorem{theorem}[definition]{Theorem}
 \newtheorem{lemma}[definition]{Lemma}
\newtheorem{conjecture}{Conjecture}
\newtheorem*{theorem*}{Theorem}
\newtheorem{result}{Result}
\DeclareMathOperator{\tr}{Tr}
\DeclareMathOperator{\re}{Re}
\DeclareMathOperator{\im}{Im}
\newcommand{\tw}{T\!w}
\newcommand{\C}{\mathbb{C}}
\newcommand{\la}{\langle}
\newcommand{\ra}{\rangle}
\newcommand{\R}{\mathbb{R}}
\newcommand{\Z}{\mathbb{Z}}
\renewcommand{\P}{\mathbb{P}}
\newcommand{\boM}{\mathcal{M}}
\newcommand{\boT}{\mathcal{T}}
\newcommand{\boL}{\mathcal{L}}
\newcommand{\vp}{\varphi}
\newcommand{\be}{\begin{equation}}
\newcommand{\ee}{\end{equation}}
\newcommand{\su}{\mathrm{SU}_2}
\newcommand{\ba}[1]{\overline{#1}}
\newcommand{\cc}{{\check{c}}}
\newcommand{\br}{\overline{r}}
\renewcommand{\phi}{\varphi}
\newcommand{\ci}{C^{\infty}}
\newcommand{\al}{\alpha}
\renewcommand{\Re}{\operatorname{Re}}
\renewcommand{\Im}{\operatorname{Im}}
\title{Toeplitz operators in TQFT via skein theory}
\date{}
\author{Julien March\'e and Thierry Paul\footnote{Centre de math{\'e}matiques Laurent Schwartz (UMR 7640), Ecole Polytechnique -- 91128 Palaiseau, France}
}
\begin{document}
\maketitle
\begin{abstract}
Topological quantum field theory associates to a punctured surface $\Sigma$, a level $r$ and colors $c$ in $\{1,\ldots,r-1\}$ at the marked points a finite dimensional hermitian space $V_r(\Sigma,c)$. Curves $\gamma$ on $\Sigma$ act as Hermitian operator $T_r^\gamma$ on these spaces. In the case of the punctured torus and the 4 times punctured sphere, we prove that the matrix elements of $T_r^\gamma$ have an asymptotic expansion in powers of $\frac{1}{r}$ and we identify the two first terms using trace functions on representation spaces of the surface in $\su$. We conjecture a formula for the general case. Then we show that the curve operators are Toeplitz operators on the sphere in the sense that $T_r^{\gamma}=\Pi_r f^\gamma_r\Pi_r$ where $\Pi_r$ is the Toeplitz projector and $f^\gamma_r$ is an explicit function on the sphere which is smooth away from the poles. 
Using this formula, we show that under some assumptions on the colors associated to the marked points, the sequence $T^\gamma_r$ is a Toeplitz operator in the usual sense with principal symbol equal to the trace function and with subleading term explicitly computed. 
We use this result and semi-classical analysis in order to compute the asymptotics of matrix elements of the representation of the mapping class group of $\Sigma$ on $V_r(\Sigma,c)$. We recover in this way the result of \cite{tw} on the asymptotics of the quantum 6j-symbols and treat the case of the punctured S-matrix. We conclude with some partial results when $\Sigma$ is a genus 2 surface without marked points.
\end{abstract}
\tableofcontents
\section{Introduction and main results of the paper}\label{intro}
Topological quantum field theory (TQFT) were introduced by E. Witten in 1989 as a physical model for the Jones polynomial of knots (see \cite{witten}). Fix a compact goup $G$, and a representation $V$ of $G$. He defined for any knot $K$ in a 3-manifold $M$ a partition function $Z_r(M,K)$ as a Feynman integral over all connections $A$ on some $G$-bundle over $M$ of the form 
$$Z_r(M,K)=\int \tr_V (\textrm{Hol}_K A) e^{ir\textrm{CS}(A)}\mathrm{d}A.$$ 
In this formula, Hol$_K(A)$ is the holonomy of the connection $A$ along $K$ and CS$(A)$ is the Chern-Simons functional.

The fact that this invariant is indeed computable comes from cut-and-paste rules implied by formal properties of Feynman integration. It allowed E. Witten to recognize (an evaluation of) the Jones polynomial(s) of $K$  and also to predict the asymptotics of such invariants when $r$ goes to infinity. 
If the semi-classical parameter $r$ goes to infinity in the formula for $Z_r(M,K)$, the stationary phase principle implies that the integral should concentrate on critical points of the Chern-Simons functional, that is connections which are flat on $M\setminus K$. Moreover, if $M$ has boundary $\Sigma$, then $Z_r(M,K)$ should be interpreted as an element of the geometric quantization of the moduli space $\boM(\Sigma,G)$ that is gauge equivalence classes of flat $G$-connections on $\Sigma$. 

As far as we know, since then, there is no rigorous geometric construction of this TQFT. One can define the geometric quantization of $\boM(\Sigma,G)$ (although it requires a heavy machinery) but not the state associated to a 3-manifold bounding $\Sigma$. On the other hand, N. Reshetikhin and V. Turaev developed in \cite{rt} a rigorous combinatorial construction of the TQFT. We will use in this article the version of C. Blanchet, G. Masbaum, N. Habegger and P. Vogel (\cite{bhmv}) which works for $G=\su$ and relies only on the combinatorics of the Kauffman bracket. The price to pay for these combinatorial constructions is that the geometry gets hidden, in particular the natural expectations for the semi-classical limit $r\to\infty$ become very mysterious.

We study here curve operators, that is the natural action of curves on $\Sigma$ on the TQFT vector space associated to $\Sigma$. Notice that we use the combinatorial curve operators, as in \cite{mn}, and not the geometric ones, as in \cite{and1,and2,and3} which are by definition Toeplitz operators. In this sense, our strategy differs from Andersen's. Moreover, we do not use the complex structures on $\boM(\Sigma,G)$ parametrized by the Teichm\"{u}ller space of $\Sigma$ but a more explicit one. On the other hand, the moduli space $\boM(\Sigma,G)$ appears only in an indirect way via action-angle coordinates.

\subsection{Main results}\label{main}
Let $\Sigma$ be either a once punctured torus or a 4-times punctured sphere. Then, under generic assumptions on 
the holonomy $t$ around the marked points, the moduli spaces $\boM(\Sigma,t)$ (see Subsection \ref{representation}) are symplectomorphic to the standard sphere $S^2=\C P^1$. Let $V_r(\Sigma,c)$ be the TQFT vector space associated to $\Sigma$ with level $r$ and colors $c$ at marked points (see Subsection \ref{TQFT}).  Any curve $\gamma$ on $\Sigma$ acts on $V_r(\Sigma,c)$ as a Hermitian operator $T_r^{\gamma}$. 
We will define in Subsection \ref{asymptoregime} a sequence of colorings $c_r$ such that $\pi\frac{c_r}r$ converges to $t$ and the dimension of $V_r(\Sigma,c_r)$, denoted by $N$, grows linearly with $r$. One of the main goals of the paper is to realize any curve operator as a Toeplitz operator, once again in the case of the once punctured torus or the 4-times punctured sphere.

\begin{result}[see Theorem \ref{curvetoplitz} in Section \ref{asymptoregime}]
Suppose that $\boM(\Sigma,t)$ is smooth. Then there is a canonical diffeomorphism $\boM(\Sigma,t)\simeq \C P^1$ such that for any curve $\gamma$ on $\Sigma$, the sequence of matrices $(T_r^{\gamma})$ are Toeplitz operators with 
symbol $\sigma^\gamma=\sigma_0^\gamma+\frac 1 N\sigma_1^\gamma+O(N^{-2})$ with

\be\label{princ}
\sigma_0^\gamma(\rho)=-\tr\rho(\gamma)
\ee
\begin{equation}\label{sub}
\sigma_1^\gamma=\frac{1}{2}\Delta_S\sigma_0^\gamma
\end{equation}
where $\Delta_S$ is the Laplacian on the sphere which is equal to $(1+|z|^2)^2\partial_z\partial_{\ba z}$ in the canonical holomorphic coordinate $z$.
 \end{result}
Let us remark that, though there is no Weyl quantization on the sphere, the condition \eqref{sub} corresponds, on the flat case, to having no Weyl subsymbol.

In fact we prove a somehow more precise result which does not require any smoothness assumption. We prove the existence of an exact (non-semiclassical) symbol:

\begin{result}[see Theorem \ref{thmtoe2} in Section \ref{toe2}]
For a holonomy $t$ around the marked points satisfying some mild assumptions, any level $r\in \mathbb{N}$ big enough and any curve $\gamma\subset\Sigma$, there exist a function $f^r_\gamma$ on the sphere, smooth except possibly at the two poles, 
such that
\be\label{exact}
T_r^{\gamma}=\mathcal T_r^{f^r_\gamma}, 
\ee
where $\mathcal T_r^{f}$ is the Toeplitz quantization, namely $\mathcal T_r^{f}=\Pi_r f \Pi_r$ and $\Pi_r$ is the Toeplitz projector.

Moreover
\be\label{asym}
f^r_\gamma\sim -\tr\rho(\gamma)+
\frac{1}{2N}\Delta_S\sigma_0^\gamma+ \sum_{k=2}^\infty N^{-k}\sigma_k^\gamma,
\ee with 
$\sigma_k^\gamma$ smooth away from the poles and where $N=\dim V_r(\Sigma,c)$. 
\end{result}

Let us remark that this result gives a Toeplitz framework for curve operators even in the singular case, 
namely the case where the trace function is not smooth. In this case, standard asymptotic methods will fail for giving a Toeplitz symbol by iterations.

The proof of this theorem goes through a closed formula relating the exact symbol of a Toeplitz operator
to its matrix elements on the canonical basis. We first prove the following theorem:
\begin{result}[see Theorem \ref{thmprincipal} in Section \ref{struc}]
For any curve $\gamma\subset\Sigma$ there exists a non negative integer $k$ such that if we denote by $F_{n,m}^\gamma$ the matrix elements of $T^\gamma_r$ in the canonical basis of $V_r(\Sigma,c)$ then 
$$F_{n,m}=0\quad\text{ if }|n-m|>k.$$
\end{result}

Let $\tau\in [0,1]$ and $\theta\in \R/2\pi\Z$ be cylindrical coordinates on $S^2\simeq\boM(\Sigma,t)$. 
\begin{result}[see Theorem \ref{thmprincipal} in Section \ref{struc}]
There exist a sequence of $C^\infty$ functions $F(\tau,\theta, r)=\sum_{|\mu|\leq k} F_\mu(\tau,r)e^{i\mu\theta}$ on the sphere 
satisfying for $0<\tau<1$,
\be\label{subbb}
F(\tau,\theta, r)=F(\tau,\theta, \infty)+\frac 1{2ir}\partial_\tau\partial_\theta F(\tau,\theta, \infty)+O(r^{-2})
\ee
with $F(\tau,\theta, \infty)=-\tr\rho(\gamma)$ where $\rho$ is parametrized by $(\tau,\theta)$ and such that the matrix elements $F_{n,m}$ of $T_r^{\gamma}$ 
on the canonical basis are for $0<\frac nr<1$,
\be\label{subbbb}
F_{n, n+\mu}=F_\mu\Big(\frac n r,\frac 1 r\Big).
\ee
\end{result}
We also conjecture in Section \ref{struc} these two last results to be true in the higher genus case.

The curve operators $T_r^{\gamma}$ play a crucial role in understanding the asymptotic properties of TQFT. 
They were the key ingredient for proving the asymptotic faithfulness of the quantum representations, see \cite{and1,fww,mn}. In the case of the torus, they are used extensively in \cite{lj1,lj2}. 
Their spectral decomposition is directly linked to the basis coming from the combinatorial TQFT and their symbol and sub-symbol 
give the semi-classical properties at first order. As an application of Result 1 and 4, we can use standard 
techniques from semi-classical analysis to compute the asymptotics of the pairings between different basis element, 
recovering the quantum 6j-symbols computed in \cite{tw} (see Section \ref{6j}) and as a new example, the $S$-matrix of the punctured torus which is detailed below. Let us remark that the condition \eqref{sub}, which determines the subprincipal contribution of the Toeplitz symbol out of the classical trace function \eqref{princ} enters at leading orders in these asymptotics and 
is somehow crucial in order to get topological invariant terms.

Notice that the case of classical 6j-symbols were treated by a  similar method in \cite{polygon}.

\begin{figure}[htbp]
\centering
  \def\svgwidth{5cm}
 \executeiffilenewer{smat.svg}{smat.pdf}%
 {inkscape -z -D --file=smat.svg %
 --export-pdf=smat.pdf --export-latex}%
 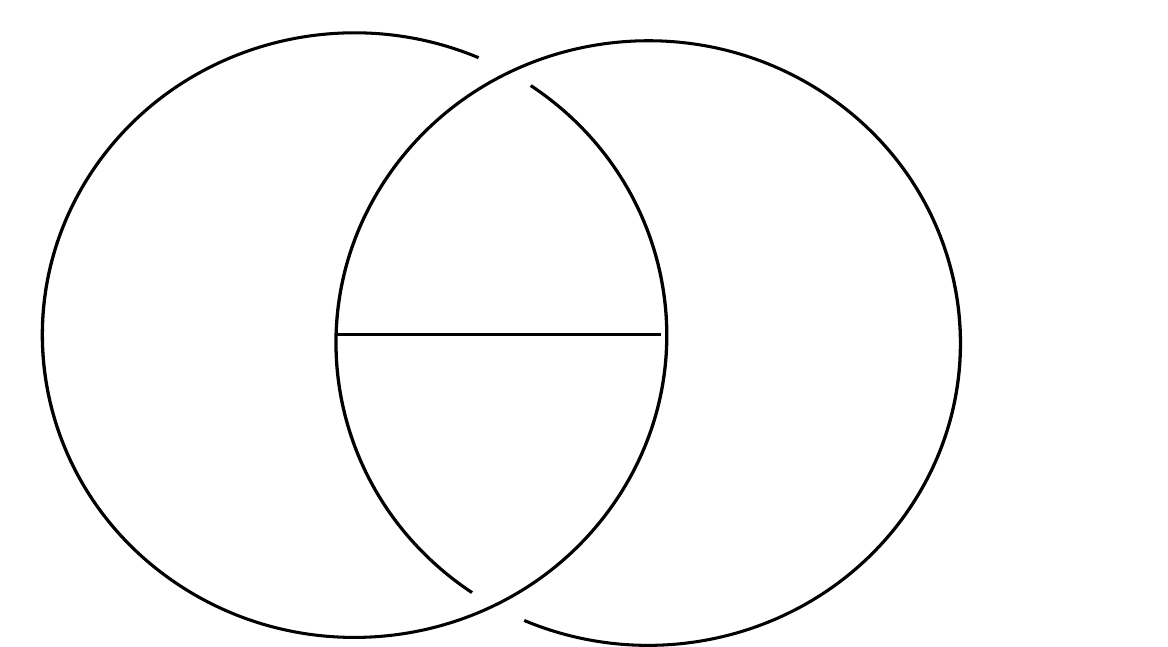%

  \caption{Punctured S-matrix}
  \label{fig:smat} 
\end{figure}

Let $\Gamma$ be the graph shown in Figure \ref{fig:smat} colored by $c=(m_0,m_1,a)$ were $a$ and $D$ are odd integers satisfying $\frac{a}{2}<m_i<D-\frac{a}{2}$ for $i=0,1$. We denote by $\langle \Gamma,c\rangle_r$ the evaluation of the Kauffman bracket of the colored graph $(\Gamma,c)$ at $t=-e^{i\pi/2r}$, see Subsection \ref{jones-wenzl}.
Then we have the following result (see Proposition \ref{prop-smat}):
\begin{result}\label{smatrix}
For any odd $\br$, setting $r=D\br$ one has
$$\langle \Gamma,\br c\rangle_r=\frac{2r}{\pi}N_r\big(G^{-1/4}\cos(\frac{r}{2\pi}S+\frac{\pi}{4})+O(r^{-1})\big)$$
Where 
\begin{itemize}
\item[-] $N_r=\frac{(\langle m_0+\frac{a-1}{2}\rangle!\langle m_0-\frac{a+1}{2}\rangle!\langle m_1+\frac{a-1}{2}\rangle!\langle m_1-\frac{a+1}{2}\rangle!)^{1/2}\langle\frac{a-1}{2}\rangle!^2}{\langle a-1\rangle!\langle m_0-1\rangle!\langle m_1-1\rangle!}$
\item[-] $\langle n\rangle=\sin(\frac{\pi n}{r})$ and $\langle n\rangle!=\prod_{k=1}^n\langle k\rangle$
\item[-] $G=\cos(\frac \alpha 2)^2-\cos(\tau_0)^2-\cos(\tau_1)^2-\cos(\tau_0)^2\cos(\tau_1)^2$
\item[-] $\alpha=\frac{\pi a}{D},\tau_0=\frac{\pi m_0}{D},\tau_1=\frac{\pi m_1}{D}$.
\item[-] $S$ is the area of the moduli space $\{(A,B)\in\su,\tr(ABA^{-1}B^{-1})=2\cos(\alpha),\tr(A)\le 2\cos(\tau_0),\tr(B)\le 2\cos(\tau_1)\}/\sim$
\end{itemize}
\end{result}

\vskip 1cm

Finally we give some partial results concerning the case of a surface of genus 2. 
We identify the TQFT Hilbert space with the geometric quantization of the projective space 
$\P^3$ and 
 consider the three  curves $\gamma,\delta,\eta$ of Figure \ref{fig:genus22}.
\begin{figure}[htbp]
\centering
  \def\svgwidth{6cm}
 \executeiffilenewer{genus2.svg}{genus2.pdf}%
 {inkscape -z -D --file=genus2.svg %
 --export-pdf=genus2.pdf --export-latex}%
 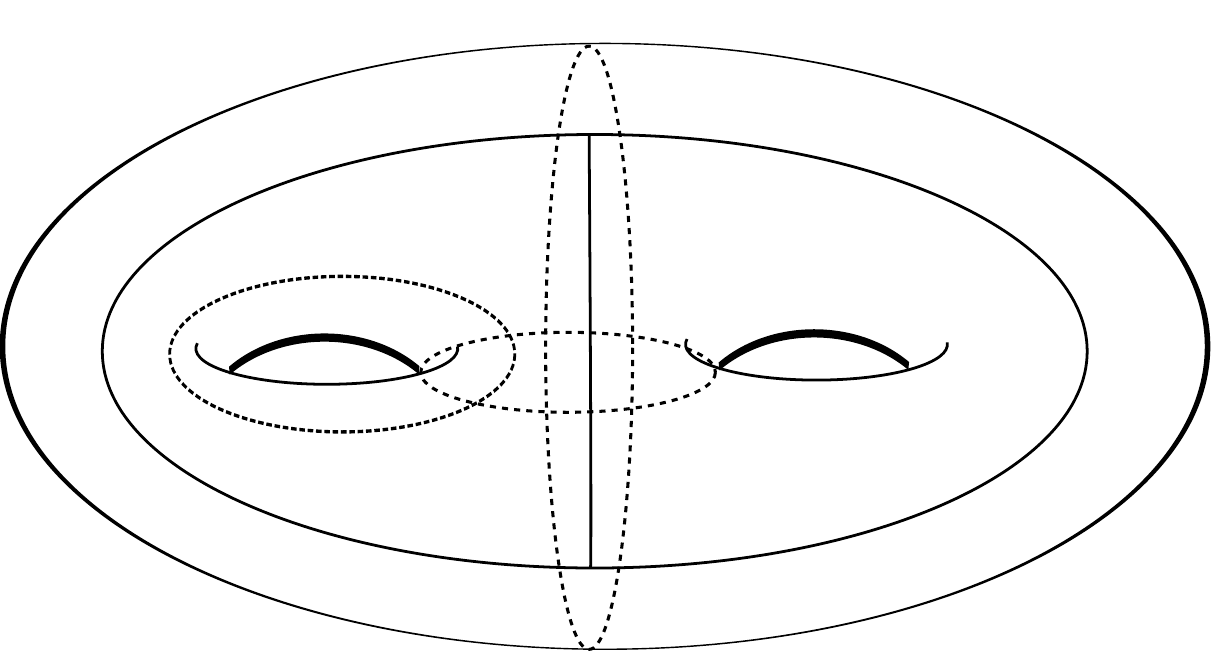%

  \caption{Curves on a genus 2 surface}
  \label{fig:genus22} 
\end{figure}
\begin{result}[see Theorem \ref{symbolgenus} in Section \ref{examples}]
$T^\gamma_r,\ T^\delta_r,\ T^\eta_r$ are Toeplitz operators with exact symbols in the sense of Result 2, 
$f^\gamma_r,\ f^\delta_r,\ f^\eta_r$ 
defined on $\P^3$, admitting a smooth asymptotic expansion in $\P^3/\{z_0z_1z_2z_3=0\}$. 
Moreover $\sigma^\gamma_0=-\tr\rho(\gamma)$ and $\sigma^\delta_0=-\tr\rho(\delta)$. 
\end{result}

We believe that the methods developed in the present paper can  be generalized to the higher genus cases, using the geometric quantization of toric manifolds and the Mellin transform strategy of Section \ref{toe}.
 \textbf{Result} 6 suggests, in addition to \textbf{Conjectures} \ref{principal}  and \ref{deformation} below,  the following one :
\begin{conjecture}\label{conj0}
Result 2 is true in the general higher genus case by replacing the sphere by a toric manifold modeling the moduli space $\mathcal M(\Sigma,t)$, the two poles by the singularities of $\mathcal M(\Sigma,t)$ and the Laplacian on the sphere by the one of $\mathcal M(\Sigma,t)$.
\end{conjecture}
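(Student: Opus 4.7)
The plan is to extend the exact-symbol construction of Result 2 (Theorem~\ref{thmtoe2}) and its partial genus~2 analogue (Result 6, Theorem~\ref{symbolgenus}) from $\mathbb{CP}^1$, respectively $\mathbb{CP}^3$, to a general toric model of $\boM(\Sigma,t)$. First I would fix a pants decomposition $\mathcal P$ of $\Sigma$ and use the associated canonical basis of $V_r(\Sigma,c)$, indexed by admissible colorings of the curves in $\mathcal P$, as the combinatorial input. Under mild genericity assumptions on the holonomies $t$, one should realize $\boM(\Sigma,t)$ as a toric symplectic manifold whose moment polytope is (a rescaling of) the admissible polytope of the pants decomposition, with the Goldman Hamiltonians of the pants curves playing the role of action variables and the dual twist parameters of angle variables. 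Identifying the TQFT Hilbert space with the geometric quantization of this toric manifold via the weight basis converts the canonical basis of $V_r(\Sigma,c)$ into a basis of sections adapted to the Mellin transform strategy of Section~\ref{toe}.

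Next, I would establish the higher-genus analogue of the banded structure (Result 3): any curve $\gamma\subset\Sigma$ can be isotoped into a small neighborhood of $\mathcal P$, where the recoupling/fusion rules give a uniformly bounded bandwidth for $T_r^\gamma$ in the canonical basis, with bandwidth depending only on the geometric intersection numbers of $\gamma$ with the pants curves. This is essentially local and should follow from Kauffman bracket skein manipulations, exactly as in the $S^2_4$ and $T_1$ cases. Once banded, I would apply the multivariate discrete Mellin transform in the angular variables to assemble the matrix elements $F^\gamma_{n,n+\mu}$ into a function $f^r_\gamma$ on the toric manifold, smooth in the interior of the moment polytope, and verify the identity $T_r^\gamma=\mathcal T_r^{f^r_\gamma}$ via the explicit Toeplitz kernel on a toric manifold.

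The asymptotic expansion then should be obtained by Euler--Maclaurin type expansion of the Mellin inversion formula, as in Section~\ref{toe}. The principal term should reproduce $-\tr\rho(\gamma)$ by matching with the classical (Goldman) limit. The subprincipal term should be $\tfrac{1}{2}\Delta_{\boM}\sigma_0^\gamma$, where $\Delta_{\boM}$ is the Laplacian associated with the natural Kähler structure of the toric model; the factor $\tfrac{1}{2}$ and the absence of a Weyl-type half-subsymbol should come, as in the genus~0 and 1 cases, from the symmetry of the skein-theoretic recursion under $n\mapsto -n$ in the action variable, combined with the standard formula for the subprincipal part of a Toeplitz operator on a toric Kähler manifold.

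The main obstacle I expect is twofold. First, outside of very special cases (such as the three curves in Figure~\ref{fig:genus22}) the moduli space $\boM(\Sigma,t)$ is only real-analytically toric, and the subtle non-smooth behavior at the boundary strata of the polytope—which for genus $0,1$ reduces to the two poles and for genus $2$ to $\{z_0z_1z_2z_3=0\}$—becomes a stratified singular locus whose contribution to the Mellin asymptotics must be controlled uniformly. Second, matching the subprincipal symbol with the intrinsic Laplacian $\Delta_{\boM}$ requires knowing that the Kähler metric induced from the Toeplitz quantization coincides (up to a conformal factor absorbable by $\sigma_0^\gamma$) with the symplectic-toric metric; this identification is computational in low genus but, in higher genus, demands a conceptual argument, likely via the comparison between the combinatorial Hitchin connection and its geometric counterpart. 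These two difficulties are, in our view, where the real work of the conjecture lies.
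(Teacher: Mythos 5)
The statement you are asked about is Conjecture \ref{conj0}: the paper itself offers no proof of it, only supporting evidence (Result 2 for the punctured torus and $4$-punctured sphere, and the partial genus~2 computation of Theorem \ref{symbolgenus}), together with the authors' stated belief that the Mellin-transform strategy of Section \ref{toe} combined with geometric quantization of toric manifolds should eventually yield it. Your proposal is essentially that same program, and as you yourself concede in your closing paragraph, it is a program and not a proof. The genuine gaps are concrete. First, your ``banded structure'' step is not merely local: in the paper the trigonometric structure of curve operators and, crucially, the identification of the leading coefficients with the Fourier modes of the trace function (Conjecture \ref{principal}) are proved only for the two small surfaces, and the proof there leans on the specific Kauffman-algebra structure of those surfaces (the Farey triangulation and the product formulas of Proposition \ref{produit}), not on a neighborhood-of-the-pants-decomposition argument; in higher genus this identification is open, and even in the paper's genus~2 example the matching of $f^\eta_0$ with the trace function $f_\eta$ is explicitly left to future work. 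Second, $\boM(\Sigma,t)$ is not a toric manifold in higher genus: the Goldman Hamiltonians give a Lagrangian fibration only over the interior of the polytope, and the paper itself points out (Section \ref{genus2}) that the map $\Phi:\boM(\Sigma)\to\P^3$ fails to be a bijection over boundary strata (the fibre over $(0,0,0)$ is three-dimensional in $\boM(\Sigma)$ but a point in $\P^3$). So the object ``the Laplacian of $\boM(\Sigma,t)$'' in the conjecture, and your proposed matching of the subprincipal symbol with it, presupposes a comparison of metrics and of quantizations across these singular strata that is nowhere established; the paper's own genus~2 analysis already has to fight integrability problems at the divisor $\{z_0z_1z_2z_3=0\}$ term by term, and obtains only the existence of an exact symbol for three particular curves, with no subprincipal statement at all.

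In short, your write-up correctly reproduces the route the authors themselves envisage (weight basis of a toric quantization, multivariate Mellin inversion, stationary-phase expansion, symmetry argument for the absence of a Weyl subsymbol), but every load-bearing step beyond genus~0 and~1 --- the trigonometric/banded structure with trace-function principal part for arbitrary curves, the analytic continuation and growth estimates of the matrix coefficients needed for the Mellin argument, the uniform control at the singular boundary strata, and the identification of the second-order term with $\frac12\Delta_{\boM}\sigma_0^\gamma$ --- remains unproved, so the statement stays a conjecture rather than a theorem.
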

\subsection{Organization of the paper}
Section \ref{struc} is devoted to the general structure of curve operators. The preliminary considerations on TQFT exposed in the beginning of the section lead to the definition of trigonometric operators that we conjecture to be the shape of general curve operators, together with the "subprincipal symbol" property \eqref{subbb}. We prove that the conjecture is true in the case of the punctured torus and the $4$-times punctured sphere in Section \ref{small} by first computing the matrix of three particular curve operators. By relying heavily on the properties of the Kauffman algebra, we then show that the latter computation is sufficient to prove the statement for any curve in these two punctured surfaces.

Section \ref{toe} is somehow the heart of the paper and contains the results on the Toeplitz structure of curve operators. After deriving an expression for the total symbol of a Toeplitz operator out of its matrix elements on a canonical basis, we show that the machinery applies to the case of any curve operator, even in the singular cases. We also compute the asymptotic regime and show that the total symbol is a classical one in the regular case. 

Section \ref{mapping} is devoted to the computation of pairing formulas, namely basis change matrices. We obtain a general asymptotic formula for the matrix elements of the quantum representations of the mapping class group of the two punctured surfaces. In particular we recover, by pure semiclassical methods, the asymptotic of $6j$-symbols and derive the case of the punctured $S$-matrix.
In Section \ref{genus2} we show how our methods apply in the case of genus 2 surface through the quantization of the projective space $\P^3$.

The first author was supported by the Agence Nationale de la Recherche ANR-08-JCJC-0114-01
and the second  is a member of the CNRS.

\section{Structure of curve operators in TQFT}\label{struc}

\subsection{Basics of TQFT}\label{TQFT}

Let $\Sigma$ be a closed oriented surface with marked points $p_1,\ldots,p_n$. 
Fix an integer $r>0$ and let $\mathcal{C}_r=\{1,\ldots,r-1\}$ be the set of colors. Given a coloring $c=(c_1,\ldots,c_n)\in\mathcal{C}_r^n$ of the marked points, the construction of \cite{bhmv} gives a finite dimensional hermitian vector space $V_r(\Sigma,c)$. In their notation, $p=2r$, $A=-\exp(i\pi/2r)$ (notice that we shifted by 1 the colors).

One can construct a basis of this space by considering a banded graph $\Gamma$ that is a finite graph with $n$ univalent vertices labeled by $p_1,\ldots,p_n$ and trivalent vertices with local orientations around vertices. This datum allows to construct a surface $S$ which retracts on $\Gamma$ and such that the univalent vertices belong to the boundary of $S$. 

Let $H$ be the 3-manifold $S\times[0,1]$. A \emph{presentation} of $\Sigma$ is an homeomorphism $h:\partial H\to \Sigma$ which respects orientation and such that for all $i$, $h(p_i\times\{1/2\})=p_i$, see Figure \ref{graphe-surface}. 

Let $E$ be the set of edges of $\Gamma$. We call admissible coloring of $\Gamma$ a map $\cc:E\to \mathcal{C}_r$ such that the following conditions hold:
\begin{itemize}
\item[-] for each edge $e$ connected to a univalent vertex $p_i$ one has $\cc_e=c_i$.
\item[-] for any triple of edges $e,f,g$ adjacent to the same vertex one has 
\item[-] \center{$\cc_e+\cc_f+\cc_g$ is odd}
\item[-] $\cc_e+\cc_f< \cc_g$
\item[-] $\cc_e+\cc_f+\cc_g<2r$.
\end{itemize}

The construction of \cite{bhmv} provides for each admissible coloring $\cc$ a vector $\phi_{\cc}\in V_r(\Sigma,c)$ obtained by cabling the graph $\Gamma$ by a specific combination of multicurves. Moreover, the family $(\phi_{\cc})$ when $\cc$ runs over all admissible colorings is a Hermitian basis of $V_r(\Sigma,c)$. This construction will be sketched in Subsection \ref{jones-wenzl}.

\subsection{Representation spaces in SU$_2$}\label{representation}
Fix as before a surface $\Sigma$ with marked points $p_1,\ldots,p_n$. Suppose that these points are colored by $(t_1,\ldots,t_n)\in(\mathbb{Q}\pi)^n$ and denote by $\gamma_i$ a curve going around $p_i$. We define the following moduli space:
$$\mathcal{M}(\Sigma,t)=\{\rho:\pi_1(\Sigma\setminus\{p_1,\ldots,p_n\})\to\su\text{ s.t. }\forall i,\tr\rho(\gamma_i)=2\cos(t_i)\}/\sim.$$ One has $\rho\sim \rho'$ if there is $g\in$ SU$_2$ such that $\rho'=g\rho g^{-1}$. 

This space is a compact symplectic variety. It is smooth for generic values of $t$.

Let $U$ be the set of all maps $\tau:E\to [0,\pi]$ such that for any edge $e$ adjacent to a marked point $p_i$ one has $\tau_i=t_i$ and  for any triple of edges $e,f,g$ adjacent to the same vertex one has 

\begin{itemize}
\item[-] \center{$\tau_e+\tau_f\le \tau_g$}
\item[-] $\tau_e+\tau_f+\tau_g\le 2\pi$.
\end{itemize}

Given a curve $\gamma$ in $\Sigma$, we define a function $h_{\gamma}:\mathcal{M}(\Sigma,t)\to [0,\pi]$ by the formula $h_{\gamma}(\rho)=\textrm{acos}(\frac{1}{2}\tr\rho(\gamma))$.
Let $p:\mathcal{M}(\Sigma,t)\to U$ be the map defined by $p(\rho)_e=h_{C_e}(\rho)$ where $C_e$ is the circle dual to $e$ in $\Sigma$. The map $p$ is a continuous surjective map which is a smooth Lagrangian fibration over the interior of $U$. 

Moreover, there is a preferred section $s:U\to \mathcal{M}(\Sigma,t)$ defined in the following way: for any edge $e$ we define a circle $D_e$ distinct from $C_e$: if the edge $e$ matches two distinct vertices in $\Gamma$, then $D_e$ goes along the edge $e$, cut $C_e$ in two points and no other circle $C_f$. If the edge $e$ join a vertex to itself, then $D_e$ is the curve lying at the boundary of $S$ going along $e$. It cuts $C_e$ once and no other curve $C_f$. Some examples are shown in Figure \ref{graphe-surface}.

\begin{figure}
\centering
  \def\svgwidth{\columnwidth}
 \executeiffilenewer{graphe-surface.svg}{graphe-surface.pdf}%
 {inkscape -z -D --file=graphe-surface.svg %
 --export-pdf=graphe-surface.pdf --export-latex}%
 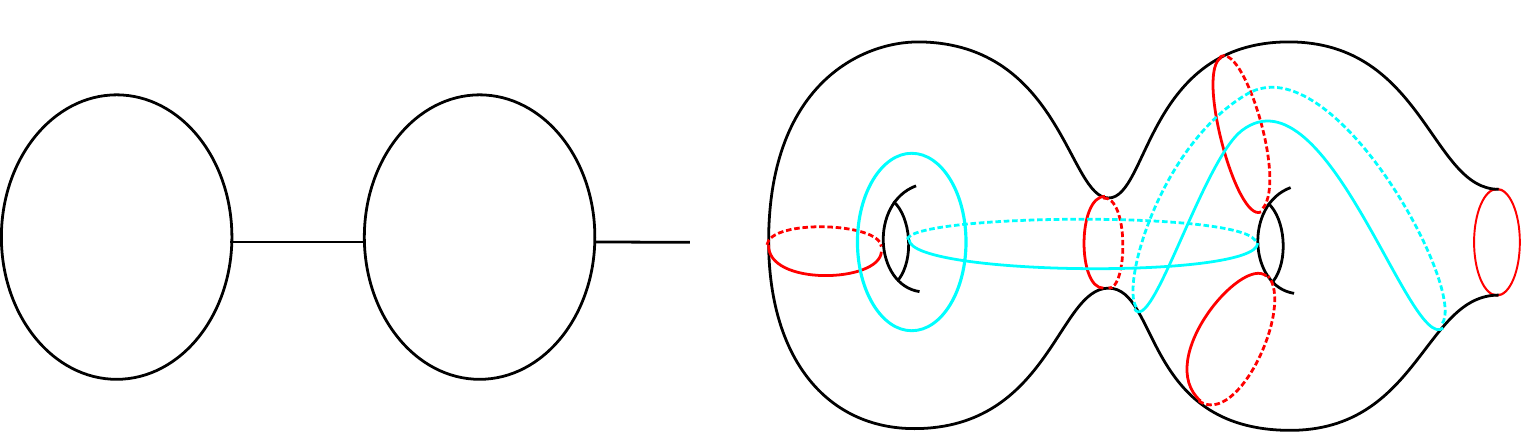%

  \caption{From a graph to a surface with a pants decomposition}
  \label{graphe-surface}
\end{figure}

For $\theta_e$ in $S^1=\R/2\pi\Z$, we denote by $\theta_e.\rho$ the action on $\rho$ of the Hamiltonian flow of the map $h_{C_e}(\rho)=p(\rho)_e$. 
\begin{lemma}\label{angle-coor}
For any $\tau$ in the interior of $U$, there is a unique $\rho\in \mathcal{M}(\Sigma,t)$ which minimizes simultaneously the functions $h_{D_e}$ on the fiber of $p$.
\end{lemma}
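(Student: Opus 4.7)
The plan is to exploit the action-angle structure on $p^{-1}(\mathring U)$, where $\mathring U$ denotes the interior of $U$. On this open set $p$ is a smooth Lagrangian fibration with torus fibers, and after suitable normalization the commuting Hamiltonian flows of the $h_{C_e}$ are $2\pi$-periodic and provide angle coordinates $\theta_e\in\R/2\pi\Z$ on each fiber (this is the content sketched just before the lemma). The strategy is to show that on a fixed fiber each $h_{D_e}$ depends only on $\theta_e$ and, as a function of this single angle, admits a unique minimum; independence of the $\theta_e$ then turns this into the desired unique simultaneous minimizer.

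First I would verify that $h_{D_e}$ is invariant under the Hamiltonian flow of $h_{C_f}$ for every $f\neq e$. By the construction of $D_e$ recalled above, $D_e$ is disjoint from every such $C_f$; the twist flow of $h_{C_f}$ conjugates the holonomy of paths on one side of $C_f$ by a one-parameter subgroup centralizing $\rho(C_f)$, so $\rho(D_e)$ is either fixed or globally conjugated, and $\tr\rho(D_e)$ is preserved. Hence, with $\tau\in\mathring U$ fixed, $h_{D_e}$ is a function of $\theta_e$ alone. Next I would compute this dependence: in a representative of the conjugacy class with $\rho(C_e)=\mathrm{diag}(e^{i\tau_e},e^{-i\tau_e})$, the normalized twist flow inserts a factor $\mathrm{diag}(e^{i\theta_e},e^{-i\theta_e})$ at each crossing of $D_e$ with $C_e$, and a direct $\su$ matrix computation gives
\[
\tr\rho(D_e)(\theta_e)=A\,e^{i\theta_e}+\ba{A}\,e^{-i\theta_e}=2|A|\cos(\theta_e+\arg A),
\]
where $A=A(\tau)$ is a complex coefficient built from the holonomies along the arcs of $D_e$ in the pants adjacent to $C_e$. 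Only the $\pm 1$ Fourier modes appear, so $h_{D_e}=\arccos\bigl(|A|\cos(\theta_e+\arg A)\bigr)$ has a unique minimum on the circle as soon as $|A|\neq 0$.

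The main obstacle is the non-degeneracy $|A|>0$ throughout $p^{-1}(\mathring U)$. Vanishing of $A$ is equivalent to $\rho(D_e)$ swapping the two eigenlines of $\rho(C_e)$, i.e.\ $\rho(D_e)\rho(C_e)\rho(D_e)^{-1}=\rho(C_e)^{-1}$. Expanding this relation in the pants adjacent to $C_e$ by repeated use of the $\su$ identity $\tr(XY)+\tr(XY^{-1})=\tr(X)\tr(Y)$ forces the three boundary colors of that pants to saturate one of the inequalities $\tau_e+\tau_f\le\tau_g$ or $\tau_e+\tau_f+\tau_g\le 2\pi$ defining $U$, so $\tau$ must lie in $\partial U$; the self-loop case is handled analogously using the corresponding one-holed torus relation. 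Granted this non-degeneracy, each $h_{D_e}$ has the unique minimizer $\theta_e^\ast(\tau)=-\arg A(\tau)$ on the fiber over $\tau$, and since the $\theta_e$ are independent coordinates on the fiber the point $\rho\in p^{-1}(\tau)$ with angles $(\theta_e^\ast(\tau))_e$ is the unique simultaneous minimizer of $(h_{D_e})_e$, proving the lemma.
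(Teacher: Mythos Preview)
Your overall strategy matches the paper's: use that $D_e\cap C_f=\emptyset$ for $f\neq e$ to see that $h_{D_e}$ depends only on the angle $\theta_e$ on the fiber, then minimize edge by edge. The paper phrases the first step via Goldman's vanishing of $\{h_{D_e},h_{C_f}\}$ and then explicitly reduces the second step to the two local models (once-punctured torus and 4-punctured sphere), deferring to Lemmas~\ref{toregeom} and~\ref{spheregeom}.

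The gap is in your explicit computation. Your formula $\tr\rho(D_e)(\theta_e)=A e^{i\theta_e}+\ba A e^{-i\theta_e}$ is correct only when $D_e$ meets $C_e$ in a single point, i.e.\ the self-loop (one-holed torus) case. When $e$ joins two distinct vertices, $C_e$ is separating and $D_e$ crosses it \emph{twice}; the twist inserts a factor $U_{\theta_e}$ at each crossing (with opposite orientations), and the trace picks up a constant term together with $e^{\pm 2i\theta_e}$ terms. Concretely, in the notation of Lemma~\ref{spheregeom} one gets $-f_\eta=I(\tau)+2J(\tau)\cos(2\theta)$, not $2|A|\cos(\theta+\arg A)$. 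Your non-degeneracy criterion ``$\rho(D_e)$ swaps the eigenlines of $\rho(C_e)$'' then no longer characterizes the degenerate locus; what you actually need is $J(\tau)\neq 0$, which is the product $|a_2b_2|$ in that lemma and vanishes exactly on $\partial U$.

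There is a second, related issue you do not address: with the correct $\cos(2\theta)$ dependence, the minimum would occur at \emph{two} values of $\theta_e\in\R/2\pi\Z$. Uniqueness is rescued by the fact that for a separating $C_e$ the Goldman twist flow is $\pi$-periodic, so the genuine angle coordinate lives in $\R/\pi\Z$; the paper notes this explicitly after Lemma~\ref{spheregeom}. Without this observation your argument would produce two simultaneous minimizers on each separating edge. Once you split into the two local cases and treat the separating one with the correct formula and periodicity, your argument goes through and coincides with the paper's.
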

\begin{proof}
By Goldman's formula, the Poisson bracket of two functions $h_{\delta}$ and $h_{\delta}$ vanishes if $\gamma$ and $\delta$ do not intersect (see \cite{goldman}). As  $D_e$ intersects $C_f$ if and only if $e=f$, we see that all minimizations can be done independently on each edge. The proof follows by inspection of the two cases, that is the 4 times punctured sphere and the once-punctured torus which is done in Lemmas \ref{toregeom} and \ref{spheregeom}.
\end{proof}

\subsection{The Kauffman algebra}
For any oriented 3-manifold $M$, let $K(M,A)$ be the quotient of the free $\C[A^{\pm 1}]$-module generated by isotopy classes of banded links in $M$ modulo the Kauffman relations shown in Figure \ref{kauf}.

\begin{figure}[width=8cm,height=3cm]
\begin{center}
\begin{pspicture}(-2,0)(3,3)
\includegraphics{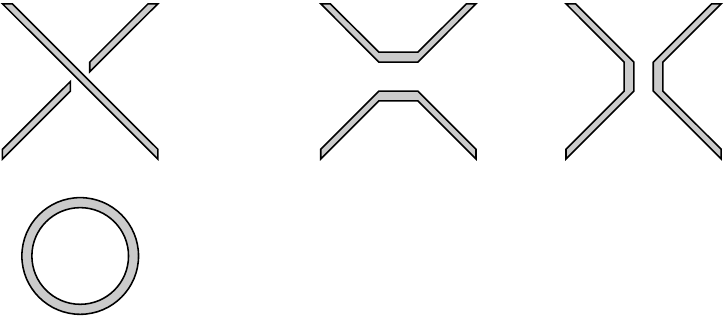}
\put(-5.2,2.3){$=A$}
\rput(-1.9,2.5){$+A^{-1}$}
\rput(-4,0.7){$=(-A^2-A^{-2})\quad \emptyset$}
\end{pspicture}
\caption{Kauffman relations}
\label{kauf}
\end{center}
\end{figure}
Write $K(\Sigma,A)=K((\Sigma\setminus\{p_1,\ldots,p_n\})\times [0,1],A)$.
This module is an algebra where the product $\delta\cdot \gamma$ of two banded links is obtained by stretching $\delta$ and $\gamma$ so that they leave respectively in $\Sigma\times[1/2,1]$ and $\Sigma\times[0,1/2]$ and then glueing the two intervals.

We call \emph{multicurve} on $\Sigma$ any 1-submanifold avoiding the marked points such that no component bounds a disc avoiding the marked points. It is well-known that $K(\Sigma,A)$ is a free $\C[A^{\pm 1}]$-module generated by multicurves. Using this preferred basis, we identify $K(\Sigma,\zeta)=K(\Sigma,A)\otimes \C_{A=\zeta}$ with $K(\Sigma,-1)$ for any $\zeta\ne 0$ and we embed $K(\Sigma,-\exp(i\pi\hbar/2))=K(\Sigma,A)\otimes \C[[h]]_{A=-\exp(i\pi\hbar/2)}$ in $K(\Sigma,-1)[[\hbar]]$.

We have the following theorem:
\begin{theorem}
The map $K(\Sigma,-1)\to C^{\infty}(\mathcal{M}(\Sigma,t))$ defined by $\gamma\mapsto f_\gamma$ where $f_{\gamma}(\rho)=-\tr \rho(\gamma)$ is an injective morphism of algebras.

For any two multicurves $\gamma,\delta$ considered as elements of $K(\Sigma,-1)[[\hbar]]$, one has 
$$\gamma\cdot\delta=f_{\gamma} f_{\delta}+\frac{\hbar}{2i}\{f_{\gamma},f_{\delta}\}+o(\hbar).$$
For each $r>0$, the Kauffman algebra at $\zeta_r=-e^{i\pi/2r}$ acts on $V_r(\Sigma,c)$ where a curve $\gamma$ acts by a Hermitian operator $T_r^{\gamma}$ called curve operator.
\end{theorem}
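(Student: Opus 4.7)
The theorem bundles three independent statements, so I would address them one at a time, relying on well-known skein-theoretic machinery (Bullock, Przytycki--Sikora, Turaev) for the first two and on the BHMV construction already set up in Subsection \ref{TQFT} for the third.

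\textbf{Step 1: the algebra map and its injectivity.} My plan is to check that the map $\gamma\mapsto f_\gamma$ is well defined on $K(\Sigma,-1)$ by matching the Kauffman relations at $A=-1$ against the $\mathrm{SU}_2$ trace identities. At $A=-1$ the crossing relation becomes $\gamma_\times=-(\gamma_0+\gamma_\infty)$ and the trivial loop contributes the scalar $-A^2-A^{-2}=-2$. These two identities are exactly what is needed to reproduce Fricke's identity
\[
\tr\rho(\al\beta)+\tr\rho(\al\beta^{-1})=\tr\rho(\al)\tr\rho(\beta),\qquad \tr\rho(1)=2,
\]
after inserting the sign convention $f_\gamma=-\tr\rho(\gamma)$. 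Since the choice of smoothings around a double point of a multicurve $\gamma\cup\delta$ corresponds to replacing a local word $\al\beta$ by $\al\beta^{\pm1}$ at the level of representations, the map is a well defined algebra morphism. For injectivity I would cite Bullock's theorem, completed in its present form by Przytycki--Sikora, stating that the $A=-1$ Kauffman skein algebra is exactly the coordinate ring of the relative character variety modulo nilpotents, together with the fact (a direct check on multicurves) that this algebra has no nilpotents in the cases at hand.

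\textbf{Step 2: the Poisson bracket correspondence.} Here I would expand $A=-\exp(i\pi\hbar/2)=-1-i\pi\hbar/2+O(\hbar^2)$ in the Kauffman skein relation and compute $\gamma\cdot\delta$ term by term. The zeroth order term is $f_\gamma f_\delta$ by Step 1. For the first order, the contribution of each transverse intersection point $p\in\gamma\cap\delta$ is the signed difference of the two smoothings of the crossing at $p$, multiplied by $-i\pi/2$. Summing over $p$ one obtains exactly Goldman's formula
\[
\{f_\gamma,f_\delta\}=\sum_{p\in\gamma\cap\delta}\varepsilon(p)\bigl(f_{\gamma\cdot_p\delta}-f_{\gamma\cdot_p\delta^{-1}}\bigr),
\]
normalized as in \cite{goldman}, with the $\tfrac{1}{2i}$ prefactor arising from the $\tfrac{i\pi}{2}$ expansion constant and the normalization of the symplectic form. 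This is a direct verification; the only delicate point is checking the signs, which follows from comparing the orientation convention used to define the Goldman bracket with the one used to orient the strands in the Kauffman bracket.

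\textbf{Step 3: the action on $V_r(\Sigma,c)$ and Hermiticity.} By construction \cite{bhmv}, $V_r(\Sigma,c)$ is obtained as a quotient of the skein module of a handlebody $H$ with $\partial H=\Sigma$, and stacking a multicurve $\gamma\subset\Sigma$ on top of any skein in $H$ defines a $K(\Sigma,\zeta_r)$-module structure on $V_r(\Sigma,c)$; this is functorial in the cobordism $\Sigma\times[0,1]$, so I would just take this as the definition of $T_r^\gamma$ and verify module axioms from TQFT. For Hermiticity I would use that the BHMV hermitian form is implemented by gluing $H$ to its mirror $\overline H$ along $\Sigma$, and that the orientation-reversing diffeomorphism $\Sigma\times[0,1]\to\Sigma\times[0,1]$ that exchanges the two boundary components sends a multicurve $\gamma\times\{1/2\}$ to itself. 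Consequently, for any $\psi,\phi\in V_r(\Sigma,c)$, the two pairings $\langle T_r^\gamma\psi,\phi\rangle$ and $\langle\psi,T_r^\gamma\phi\rangle$ are computed by the same closed skein element in $\overline H\cup_\Sigma H$, yielding $(T_r^\gamma)^*=T_r^\gamma$.

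The main obstacle in practice is in Step 1 (injectivity), which is not really an elementary statement and requires the non-trivial result that the $A=-1$ skein algebra has no nilpotents in the present setting; I would quote it rather than reprove it. Step 2 is a direct but sign-sensitive calculation, and Step 3 is essentially a reading of the BHMV axioms.
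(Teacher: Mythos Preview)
Your sketch is correct and matches the way the paper handles this theorem. In the paper the statement is treated as background: no proof environment follows it, part~2 is later invoked with a bare citation to \cite{turaev} in the proof of Proposition~\ref{produit}, and part~3 is unpacked in Subsection~2.4 via exactly the stacking construction you describe. Your Step~1 (Bullock / Przytycki--Sikora for injectivity), Step~2 (Turaev's expansion recovering Goldman's bracket), and Step~3 (the BHMV module structure and the mirror argument for Hermiticity) are the standard arguments the paper is implicitly relying on.
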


\subsection{The curve operators in TQFT}
In this section, we give some details on the skein construction of TQFT in order to explain the definition and computation of the curve operators. 

Let $\Sigma$ be a closed surface and $H$ be a handlebody such that $\partial H=\Sigma$.
Fix an integer $r$ and consider the Kauffman module $K(H,\zeta_r)$. For any embedding $j$ of $H$ in $S^3$, we define a subspace 
$$N^j_r=\{x\in  K(H,\zeta_r), \forall y\in K(S^3\setminus j(H),\zeta_r), \quad \langle x,y \rangle=0 \in K(S^3,\zeta_r)\}$$
Suppose that a 3-manifold $M$ is the union of two 3-manifolds $M_1\cup M_2$. There is a natural pairing $K(M_1,A)\times K(M_2,A)\to K(M,A)$ defined by sending two banded links $L_1\subset M_1$, $L_2\subset M_2$ to their union $L_1\amalg L_2$ viewed as a banded link in $M_1\cup M_2$. 
This is the meaning of the pairing in the formula above for $M_1=H,M_2=S^3\setminus j(H)$ and $M=S^3$.

The point is that $N^j_r$ is a subspace of finite codimension in $K(H,\zeta_r)$ which does not depend on $j$. We set then $V_r(\Sigma)=K(H,\zeta_r)/N^j_r$.

Given a embedding $j:H\to S^3$ it defines an embedding of $\Sigma=\partial H$. Thicken slightly its image in $S^3$ such that one has the decomposition
$$H\cup \Sigma\times[0,1] \cup S^3\setminus j(H)=S^3$$
We deduce from it a natural map 
$$K(H,\zeta_r)\times K(\Sigma,\zeta_r) \times K(S^3\setminus j(H),\zeta_r)\to K(S^3,\zeta_r)=\C$$ denoted by $(x,y,z)\mapsto\langle x|y|z\rangle$. If for any $z$ we have $\langle x|\emptyset|z\rangle=0$ then for any $y$ we also have $\langle x|y|z\rangle=0$ by "pushing $y$ to the right". 
It follows that "pushing $y$ to the left" defines an action of $K(\Sigma,\zeta_r)$ on $V_r(\Sigma)$. 
 
\subsection{Jones-Wenzl idempotents, colorings and marked points}\label{jones-wenzl}
In order to deal with the case when $\Gamma$ has boundary points and to construct the TQFT basis mentioned in Subsection \ref{TQFT}, we need to introduce the Temperley-Lieb algebras $\boT_l$ and the Jones-Wenzl idempotents $f_l\in\boT_l$. Let $P_l\subset (0,1)$ be a finite set with $l$ elements and $\boT_l(A)$ be the $\C[A^{\pm 1}]$-module generated by banded tangles $L\subset [0,1]^3$ such that $\partial L=P_l\times\{1/2\}\times\{\pm 1\}$ modulo Kauffman relations. The product $L_1\cdot L_2$ of two tangles is given by stacking $L_1$ above $L_2$. 

One can define recursively the Jones-Wenzl idempotents by $f_0=0,f_1=1$ and  the relations of Figure \ref{fig:jones-wenzl} where we set $[l]=\frac{A^{2l}-A^{-2l}}{A^2-A^{-2}}$.
\begin{figure}
\centering
  \def\svgwidth{7cm}
 \executeiffilenewer{jones-wenzl.svg}{jones-wenzl.pdf}%
 {inkscape -z -D --file=jones-wenzl.svg %
 --export-pdf=jones-wenzl.pdf --export-latex}%
 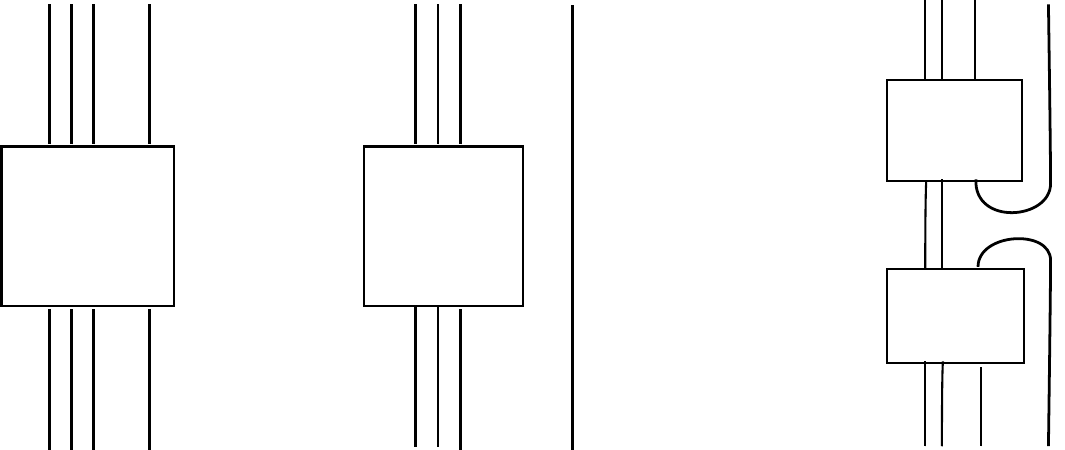%

  \caption{Recurrence relation for the Jones-Wenzl idempotents}
  \label{fig:jones-wenzl} 
\end{figure}
If $A=\zeta_r$, the idempotents $f_l$ are well-defined in $\boT_r$ provided that we have $l<r$. 

Suppose now that $\Sigma$ is a surface with punctures $p_1,\ldots,p_n$. Fix a level $r$ and chose colors $c=(c_1,\ldots,c_n)\in \mathcal{C}_r^n$. We define $V_r(\Sigma,c)$ in the following way. 
Consider for all $i\in \{1,\ldots,n\}$ a subset $P_i\in\Sigma$ of cardinality $c_i-1$ and lying in a small neighborhood of $p_i$. We define the relative skein module $K(H,c,A)$ as the $\C[A^{\pm 1}]$-module generated by banded tangles in $H$ whose intersection with $\Sigma$ is $\bigcup_i P_i$ modulo the Kauffman relations. 
Then for any embedding $j:H\to S^3$ we set 
$$N^j_r=\{x \in K(H,c,\zeta_r),\quad\forall z\in K(S^3\setminus j(H),c,\zeta_r)\quad \langle x| \bigotimes_{i=1}^{r-1} f_{c_i-1}|z\rangle=0 \} $$
We set as before $V_r(\Sigma,c)=K(H,c,\zeta_r)/N^j_r$ and the curve operator $T^\gamma_r$ is defined in the same way provided that the curve $\gamma$ does not touch the punctures $p_1,\ldots,p_n$.

We are ready to define the basis of the latter space. Let $\Gamma$ be a trivalent graph encoding a pants decomposition of $\Sigma$ as in Subsection \ref{TQFT}. Let $S$ be the surface containing $\Gamma$ and set $H=S\times [0,1]$. 
Given any admissible map $\cc:E\to\mathcal{C}_r$ we define $\psi_\cc\in K(H,c,\zeta_r)$ in the following way: 
\begin{itemize}
\item[-] Replace each edge $e$ of $\Gamma$ by $\cc_e-1$ parallel copies lying on $S$. 
\item[-] Insert in the middle of each edge the idempotent $(-1)^{\cc_e-1}f_{\cc_e-1}$
\item[-] In the neighborhood of each trivalent vertex, join the three bunches of lines in $S$ in the unique possible way avoiding crossings. 
\end{itemize}
It happens that this family of vectors is an orthogonal basis of $V_r(\Sigma,c)$ for a natural Hermitian structure we do not define here. We refer to Theorem 4.11 in \cite{bhmv} for the proof and the following formula:
\be\label{norm}
||\psi_\cc||^2=\Big(\frac{2}{r}\Big)^{\chi(\Gamma)/2}\frac{\prod_v \langle c_v^1,c_v^2,c_v^3\ra }{\prod_e \la c_e\ra}
\ee
where for any trivalent vertex $v$ of $\Gamma$ denote by $c_v^1,c_v^2,c_v^3$ the colors of the edges incoming at $v$ and for any internal edge $e$, $c_e$ denotes the color of that edge. We also set $\la n\ra=\sin(\pi n/r)$, $\la n\ra!=\prod_{k=1}^n\la k\ra$ and 
\be
\la a,b,c\ra=\frac{\la i+j+k+1\ra!\la i\ra!\la j\ra!\la k \ra!}{\la j+k\ra!\la i+k\ra!\la i+j\ra!}
\ee
for $i,j,k$ defined by $a=j+k+1,b=i+k+1$ and $c=i+j+1$
The vectors $\phi_\cc$ used in this article are given by the formula $\phi_\cc=\frac{\psi_\cc}{||\psi_\cc||}$. 

\subsection{Trigonometric operators}
Let $t=(t_1,\ldots t_n)$ be a family in $([0,\pi]\cap \mathbb{Q}\pi)^n$. Let $D$ be the common denominator of the rational numbers $t_j/\pi$. We will denote by $c_r$ the sequence of colorings $c_r=(r\frac{t_j}{\pi})\in \mathcal{C}_r^n$ and suppose implicitly that $r$ is a multiple of $D$ so that this coloring take integral values.
\begin{definition}
A family of operators $T_r\in \textrm{End}(V_r(\Sigma,c_r))$ is called \emph{trigonometric} if there is an open subset $V\subset U\times [0,1]$ containing $\text{Int}(U)\times \{0\}$,  a finite family of smooth functions $F_k:V\to \mathbb{R}$ indexed by maps $k:E\to \mathbb{Z}$ such that for all admissible colorings $\cc$, one has $T_r \phi_{\cc}=\sum_k F_k(\frac{\pi\cc}{r},\frac{\pi}{r}) \phi_{\cc+k}$.
\end{definition}
 Any multicurve $\gamma$ gives rise in TQFT to a family of operators $T^{\gamma}_r\in \textrm{End}(V_r(\Sigma,c_r))$: these operators are called curve operators. We make the following conjecture:
\begin{conjecture}\label{principal}
For any multicurve $\gamma$, the curve operator $T^{\gamma}_r$ is trigonometric.
The coefficients $F_k$ are recursively computable and verify the following properties:
\begin{enumerate}
\item
$F_k$ vanishes if for some edge $e$ the geometric intersection of $\gamma$ with $C_e$ is lower than $k_e$. 
\item
The map $F_k(\cdot,0):U\to\mathbb{R}$ is the $k$-th Fourier coefficient of the function $f_{\gamma}$ with respect to the action of $(S^1)^E$ decribed in \cite{goldman,multicurve}.
\end{enumerate}
\end{conjecture}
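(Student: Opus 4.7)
The plan is to proceed by a systematic skein computation organized along the pants decomposition encoded by $\Gamma$. First isotope $\gamma$ so that it meets each dual curve $C_e$ in exactly $m_e = i(\gamma,C_e)$ points and so that between consecutive intersections the strands of $\gamma$ lie inside a thickened pair of pants. Then $T_r^{\gamma}\psi_{\cc}$ is represented in $K(H,c,\zeta_r)$ by the tangle obtained from $\psi_{\cc}$ by inserting $\gamma$ above the cabled graph. Resolving each crossing between $\gamma$ and the $\cc_e-1$ parallel strands along each edge $e$ via the Kauffman bracket relation, then absorbing the resulting cups and caps into the Jones--Wenzl idempotents $f_{\cc_e-1}$, reduces the problem to a local calculation at each edge and at each trivalent vertex.

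The second step is to apply the standard fusion identity that expresses $f_{c-1}\otimes f_{m}$ as a sum of $f_{c-1+k}$ over $k\in\{-m,-m+2,\ldots,m\}$. Iterating along $\gamma$ yields a sum over shifts $k:E\to\Z$ with $|k_e|\le m_e$, which is exactly the support statement (1). Each term is a product, over the vertices of $\Gamma$ visited by $\gamma$, of Racah--Wigner quantum $6j$-symbols, times the normalization factors coming from \eqref{norm}. The explicit Racah formula writes each $6j$-symbol as a polynomial in $\la n\ra = \sin(\pi n/r)$ divided by products $\la a,b,c\ra$ that are non-zero on $\mathrm{Int}(U)$; combined with the smooth factors in \eqref{norm}, this exhibits each coefficient as a smooth function $F_k$ of $(\pi\cc/r,\pi/r)$ on an open neighborhood of $\mathrm{Int}(U)\times\{0\}$, proving that $T_r^{\gamma}$ is trigonometric.

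The Fourier coefficient statement (2) would follow by taking $\hbar\to 0$, equivalently $A\to -1$: by Theorem 2.3 the Kauffman algebra at $A=-1$ embeds into $C^{\infty}(\boM(\Sigma,t))$ via $\gamma\mapsto -\tr\rho(\gamma)$, and the torus $(S^1)^E$ generated by the Hamiltonians $h_{C_e}$ acts precisely by the shifts $\cc\mapsto \cc+k$ that appear in the fusion expansion above. A direct inspection should show that the classical limit of the fusion and $6j$-coefficients reproduces the $\su$ Clebsch--Gordan coefficients, whose summation over the vertices traversed by $\gamma$ recovers the Fourier decomposition of $-\tr\rho(\gamma)$ under the $(S^1)^E$-action.

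The principal obstacle is the bookkeeping when $\gamma$ has high intersection numbers with several edges simultaneously. Iterated $6j$-contractions produce factors that are \emph{a priori} singular near vertices of $\Gamma$ whose three incident colors approach the boundary of the admissibility polytope, and one must arrange genuine cancellations to extend $F_k$ smoothly to the full open set $V$ and to obtain uniform remainder estimates in $r$. The algebraic short-cut available in the small cases, where three distinguished curves generate the Kauffman algebra and the statement can be propagated to arbitrary multicurves by induction on word length, has no obvious counterpart in higher genus; a more global argument, perhaps based on the toric structure of $\boM(\Sigma,t)$ suggested by Conjecture \ref{conj0}, seems necessary.
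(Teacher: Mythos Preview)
The first thing to note is that the statement you are attempting to prove is labelled a \emph{Conjecture} in the paper, not a theorem. The authors do not claim to have a general proof; they explicitly say ``This conjecture is by no means inaccessible: we prove it for the case of the once-punctured torus and the 4th-punctured sphere'' and ``We believe that a general proof will use a detailed study of the fractional Dehn twists on multicurves.'' What is actually proved is Theorem~\ref{thmprincipal}, which covers exactly those two surfaces.

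Your proposal is therefore not to be compared against a proof the paper gives, but against the method the paper uses in the small cases and against the authors' own remarks on the general case. Your strategy---resolve $\gamma$ against the cabled graph by fusion, collect the resulting $6j$-symbols, and read off the shift structure---is the natural direct approach, and you correctly identify why it stalls: controlling smoothness of the iterated $6j$-products near the boundary of the admissibility polytope, and arranging the cancellations needed for uniform $O(\hbar)$ remainders, is the genuine difficulty. The paper does not resolve this; it leaves it open.

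For the two small surfaces the paper avoids this bookkeeping entirely. It computes $T_r^{\gamma}$ by hand for three specific curves (Propositions~\ref{torep} and~\ref{spherep}, Lemmas~\ref{derive-tore} and~\ref{derive-sphere}), checks the $\psi$-symbol property directly on those, and then observes that these three curves generate the Kauffman algebra via the Farey triangulation. A product lemma (Proposition~\ref{produit}) shows that the property $(*)$ is stable under the $\star$-product of $\psi$-symbols, and an induction on Farey depth propagates the result to every simple curve. This is exactly the ``algebraic short-cut'' you mention in your last paragraph; you are right that it has no known analogue in higher genus, and this is precisely why the general statement remains conjectural.

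In short: your proposal is a reasonable outline of how a general proof \emph{might} go, with an honest assessment of the obstruction. It is not a proof, and neither is anything in the paper for the general case.
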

We define the $\psi$-symbol of $T^{\gamma}_r$ as the expression $\sigma^{\gamma}=\sum_{k}F_k(\tau,\hbar)e^{ik\theta}$ where $\tau$ is an element of the interior of $U$, $\hbar$ is a real parameter, $\theta\in (\R/2\pi\Z)^E,k\in \Z^E$ and $k\theta=\sum_e k_e\theta_e$. We remark that the data $(\tau,\theta)$ are action-angle coordinates on the open subset of $\boM(\Sigma,t)$ defined as $p^{-1}(\textrm{Int}(U))$. Hence, $\psi$-symbol $\sigma^{\gamma}$ may be interpreted as a deformation of the trace function $f_{\gamma}$. The terminology $\psi$-symbol stands for pseudo-differential operator.

The following conjecture gives the first order of the deformation:

\begin{conjecture}\label{deformation}
For any multicurve $\gamma$, the $\psi$-symbol $\sigma^{\gamma}$ has the following asymptotic development: 
$$\sigma^{\gamma}=f_{\gamma}+\left(\frac{1}{2i}\sum_e\frac{\partial^2f_{\gamma}}{\partial \theta_e\partial \tau_e}\right)\hbar+o(\hbar)$$
\end{conjecture}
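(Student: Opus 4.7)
The plan is to prove Conjecture \ref{deformation} by induction using the Kauffman algebra, with the pants curves as the base case. Conceptually the assignment $\gamma\mapsto T_r^\gamma$ is a quantization of the Kauffman algebra, and the conjecture asserts that at first order in $\hbar=\pi/r$ this quantization is compatible with a Moyal-type product on the action-angle coordinates $(\tau,\theta)$. The base case is immediate: for a pants curve $C_e$ the skein definition of $\psi_\cc$ absorbs the Jones--Wenzl idempotent $f_{\cc_e-1}$, and the standard recoupling computation gives $T_r^{C_e}\phi_{\cc}=-2\cos(\pi\cc_e/r)\phi_{\cc}$. Thus $T_r^{C_e}$ is trigonometric with $F_0=-2\cos(\tau_e)$ exactly (and $F_k=0$ for $k\ne 0$); since $f_{C_e}$ depends on no angle $\theta_f$, the subprincipal correction $\frac{1}{2i}\sum_f\partial_{\theta_f}\partial_{\tau_f}f_{C_e}$ vanishes identically and the conjecture holds with equality.

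The first real input is a composition formula for trigonometric operators. If $T_r^\gamma\phi_\cc=\sum_k F_k(\tau,\hbar)\phi_{\cc+k}$ and $T_r^\delta\phi_\cc=\sum_l G_l(\tau,\hbar)\phi_{\cc+l}$, then applying $T_r^\gamma$ to $\phi_{\cc+l}$ shifts the argument from $\tau$ to $\tau+l\hbar$, so the coefficient of $\phi_{\cc+m}$ in $T_r^\gamma T_r^\delta\phi_\cc$ is $\sum_l F_{m-l}(\tau+l\hbar,\hbar)G_l(\tau,\hbar)$. A Taylor expansion in $l\hbar$ gives, at the level of $\psi$-symbols,
\begin{equation*}
\sigma^{T_r^\gamma T_r^\delta}=\sigma^\gamma\sigma^\delta+\frac{\hbar}{i}\sum_e\partial_{\tau_e}\sigma^\gamma\,\partial_{\theta_e}\sigma^\delta+O(\hbar^2),
\end{equation*}
so compositions of trigonometric operators are again trigonometric, and the antisymmetric part of the first-order correction recovers $\frac{\hbar}{i}\{\sigma^\gamma,\sigma^\delta\}$ in agreement with Goldman's formula.

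To close the induction, one matches this with the Kauffman deformation: in $K(\Sigma,-e^{i\pi\hbar/2})$ one has $\gamma\cdot\delta=N_0+\hbar N_1+O(\hbar^2)$, with $N_0,N_1$ linear combinations of multicurves whose classical limits are $f_\gamma f_\delta$ and $\frac{1}{2i}\{f_\gamma,f_\delta\}$ respectively. Linearity of the quantization $\mu\mapsto T_r^\mu$ together with the inductive hypothesis applied to the constituents of $N_0,N_1$ yield
\begin{equation*}
\sigma^{\gamma\cdot\delta}=f_\gamma f_\delta+\hbar\Bigl[\tfrac{1}{2i}\sum_e\partial_{\theta_e}\partial_{\tau_e}(f_\gamma f_\delta)+\tfrac{1}{2i}\{f_\gamma,f_\delta\}\Bigr]+O(\hbar^2).
\end{equation*}
Writing $\sigma^\gamma=f_\gamma+\hbar a_\gamma+O(\hbar^2)$ with analogous notation for $\delta$, substituting into the composition formula above and expanding $\partial_{\theta_e}\partial_{\tau_e}(f_\gamma f_\delta)$ via the Leibniz rule, one verifies that the two expressions for $\sigma^{\gamma\cdot\delta}$ coincide exactly when $a_\gamma=\frac{1}{2i}\sum_e\partial_{\theta_e}\partial_{\tau_e}f_\gamma$. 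Hence the conjectured subprincipal form is preserved by Kauffman products. Since the Kauffman algebra is generated by the pants curves together with the dual curves $D_e$, it remains only to verify the conjecture on each $D_e$; for these, the action on $\phi_\cc$ has an explicit skein-theoretic expression in terms of quantum $6j$-symbols built from \eqref{norm}, and the first two asymptotic orders follow from a direct expansion of the relevant sines.

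The main obstacle is uniform control of the $O(\hbar^2)$ error in the composition formula. The Taylor expansion of $F_{m-l}(\tau+l\hbar,\hbar)$ is reliable only on compact subsets of $\mathrm{Int}(U)$ and breaks down near the walls, i.e.\ near the singularities of $\boM(\Sigma,t)$. The formal scheme above must therefore either be localised to $p^{-1}(\mathrm{Int}(U))$, or supplemented by the exact-Toeplitz machinery of Section \ref{toe} (the Mellin transform and the closed symbol formula) in order to handle the walls and the associated failure of trigonometric polynomial structure. This is exactly what is carried out for the punctured torus and four-punctured sphere in Section \ref{small}; extending it to higher genus, where $\boM(\Sigma,t)$ is modelled by a toric manifold rather than $S^2$, is the main remaining difficulty and fits the general framework suggested by Conjecture \ref{conj0}.
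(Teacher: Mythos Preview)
Your overall strategy---verify the formula on generators and show it is closed under the Kauffman product---is exactly the one the paper uses in Proposition~\ref{produit} and the proof of Theorem~\ref{thmprincipal}. The product computation and the Leibniz-rule verification are correct and match the paper.

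However, there is a genuine gap in the inductive scheme. You assert that ``the Kauffman algebra is generated by the pants curves together with the dual curves $D_e$.'' This is the crux, and it fails as stated. Even in the punctured-torus case the paper must verify the base case on \emph{three} curves $\gamma,\delta,\xi=\tw^\gamma(\delta)$ (Lemma~\ref{derive-tore}), not just $C_e=\gamma$ and $D_e=\delta$. The reason is that while $\gamma\cdot\delta=A\xi_1+A^{-1}\xi_2$ and $\delta\cdot\gamma=A^{-1}\xi_1+A\xi_2$ do determine $\xi_1,\xi_2$ algebraically, solving for them requires dividing by $A^2-A^{-2}=2i\sin(\pi\hbar)=O(\hbar)$, which destroys the $o(\hbar)$ control you need. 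The paper circumvents this via the Farey triangulation: with three base curves forming a triangle, every further curve $\eta$ is expressed as $\eta=A^{-1}\gamma_1\cdot\gamma_2-A^{-2}\gamma_3$ with $\gamma_1,\gamma_2,\gamma_3$ already treated, and the coefficients here are units in $\C[[\hbar]]$. The analogous structure for the four-punctured sphere uses the half-twist (Lemma~\ref{derive-sphere}).

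For higher genus no such recursive presentation of multicurves is available, and the curves $C_e,D_e$ do not generate $K(\Sigma,A)$ in any useful sense---this is precisely why the statement remains a conjecture in the paper (see the remark preceding Theorem~\ref{thmprincipal}: ``We believe that a general proof will use a detailed study of the fractional Dehn twists on multicurves''). Your final paragraph locates the obstacle in uniform error control near the walls of $U$; that is a secondary issue. The primary obstruction is algebraic: finding a set of generators on which the formula can be checked directly, together with a recursion expressing every multicurve through them without small denominators.
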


The statements of Conjecture \ref{principal} easily transfer to statements on the $\psi$-symbol $\sigma^{\gamma}$, hence the proof should consist in computing this symbol as precisely as possible. 
This conjecture is by no means inaccessible: we prove it for the case of the once-punctured torus and the 4th-punctured sphere by analyzing three particular cases and using the structure of the Kauffman algebra on both surfaces (linked to SL(2,$\Z$) and the Farey triangulation). We believe that a general proof will use a detailed study of the fractional Dehn twists on multicurves.
\begin{theorem}\label{thmprincipal}
The conjectures \ref{principal} and \ref{deformation} hold if $\Sigma$ is a punctured torus or a 4 times punctured sphere.
\end{theorem}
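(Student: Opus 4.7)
The plan is to reduce the statement to a finite, explicit computation on three seed curves and then propagate through the multiplicative structure of the Kauffman algebra by an induction along the Farey triangulation. For both the once-punctured torus and the four-times punctured sphere, isotopy classes of essential simple closed curves are naturally indexed by primitive integer vectors (equivalently, by vertices of the Farey graph), and any such curve can be reached from a fixed Farey triangle of seeds by a finite sequence of elementary moves. Since any multicurve is a disjoint union of simple curves and disjoint union coincides with the Kauffman product of disjointly supported curves, the statement for simple curves is enough.

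First I would treat three pairwise Farey-adjacent seed curves. In a pants decomposition compatible with the seed, one of them is dual to the unique interior edge $e$ of $\Gamma$ and therefore acts diagonally on the basis $\phi_{\cc}$ with eigenvalue $-2\cos(\pi\cc_e/r)$, which is already the pointwise value of $-\tr\rho(\gamma)$ along the fiber of $p$. The other two seed curves intersect the pants curves once or twice, and their action on $\phi_{\cc}$ can be computed directly by expanding the corresponding banded link through the Jones-Wenzl idempotents and evaluating the resulting trivalent vertices as quantum $6j$-symbols; the normalization constants are read off from \eqref{norm}. In both surfaces every matrix element so obtained is a rational expression in the quantities $\la n\ra=\sin(\pi n/r)$ with $n$ an integer linear combination of the entries of $\cc$ and of $r$, hence a smooth function of $(\pi\cc/r,\pi/r)$ on the open set $V$. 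A direct Taylor expansion in $\hbar=\pi/r$ identifies the classical limit with the Fourier coefficients of $f_\gamma$ and the first-order term with $\frac{1}{2i}\partial_{\theta_e}\partial_{\tau_e}f_\gamma$, using the action-angle description of Lemma \ref{angle-coor}.

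Next I would propagate. For two simple curves $\gamma,\delta$ sitting at Farey-adjacent vertices, the Kauffman product in $K(\Sigma,\zeta_r)$ is governed by a skein identity of the shape
\[
\gamma\cdot\delta=\zeta_r^{\epsilon}\,\gamma\ast_{+}\delta+\zeta_r^{-\epsilon}\,\gamma\ast_{-}\delta,
\]
where the two resolutions are again simple curves, lying at Farey vertices strictly closer to the seed triangle. Since the class of trigonometric operators is stable under composition and scalar linear combinations and since the shift multi-indices add, induction on the Farey distance establishes the trigonometric property for every simple curve; the vanishing property of item $(1)$ of Conjecture \ref{principal} is preserved because geometric intersection is additive under this recursion. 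Item $(2)$ follows from the fact, recalled in Section 2.3, that $\gamma\mapsto f_\gamma$ is an algebra morphism on $K(\Sigma,-1)$, so classical limits multiply correctly through each step.

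The main obstacle is the $\hbar$-expansion statement of Conjecture \ref{deformation}. Composing two trigonometric operators produces a first-order correction with two pieces: the sum of the individual subsymbols, and a shift-translation contribution coming from expanding $F_k(\tau+\hbar k',\hbar)-F_k(\tau,\hbar)$ to order $\hbar$. One must check that this extra piece assembles into exactly $\frac{\hbar}{2i}\{f_\gamma,f_\delta\}$ in the limit, matching the deformation formula for the Kauffman product stated in Section 2.3. This amounts to re-expressing Goldman's bracket in the action-angle coordinates $(\tau,\theta)$ produced by the section $s:U\to\boM(\Sigma,t)$ and verifying the compatibility on the nose; a clean way to organize the bookkeeping is to interpret each Farey move as a fractional Dehn twist and to show that the trigonometric class, together with its subprincipal symbol, is invariant under such twists up to the predicted correction. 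This is also precisely the step that resists a higher-genus extension, where the Kauffman algebra is no longer generated from a finite seed by Farey-type moves.
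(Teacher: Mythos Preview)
Your outline is correct and matches the paper's proof: explicit fusion-rule verification on three seed curves forming a Farey triangle, then induction on Farey depth via the skein product, with the subprincipal compatibility under products checked by expanding the star-product of $\psi$-symbols to order $\hbar$ and matching the shift term against $\{f_\gamma,f_\delta\}$ using Leibniz in action-angle coordinates (this is the paper's Proposition~\ref{produit}). Two minor deviations worth noting: the paper obtains the third seed curve not by a second direct fusion computation but by applying the (half-)Dehn twist along the diagonal curve to the second seed, which reduces the subprincipal check to the known twist eigenvalues; and in the Farey recursion the two smoothings of $\gamma_1\cdot\gamma_2$ sit on opposite sides of the edge $[\gamma_1,\gamma_2]$, so only one is closer to the seed---the induction expresses the far vertex in terms of the product and the near one, rather than both resolutions being closer as you wrote (in the sphere case an extra boundary-curve term $R$ also appears and is treated as a constant).
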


\section{Small genus cases}\label{small}
The proof of Theorem \ref{thmprincipal} will proceed from an explicit computation in two particular cases and various compatibilities with the Kauffman algebra. 
The easiest case is the punctured torus.
\subsection{The once-punctured torus}\label{punctured-torus}
Take $\Sigma$ a punctured torus, and $\Gamma$ the graph with one trivalent vertex and one univalent vertex as shown in Figure \ref{fig:torep}. We call $a\in \mathcal{C}_r$ the (odd) color of the marked point and $n$ the color of the loop $e$. Let $\gamma$ be the circle around the loop ($\gamma=C_e$ in the notation of the preceding section) and $\delta$ the circle parallel to the loop ($\delta=D_e$).

\begin{figure}[htbp]
\begin{center}
\begin{pspicture}(0,0)(1,4.5)
\rput[b1](0,0){\includegraphics{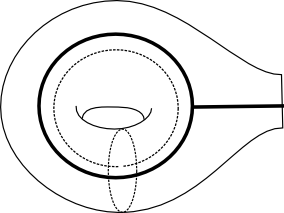}}
\rput[b](2.4,1.8){$\Gamma$}
\rput[b](-2.5,0.5){$\Sigma$}
\rput[b](-0.9,0.2){$\gamma$}
\rput[b](-0.9,3){$\delta$}
\rput[b](3.2,2.2){$a$}
\rput[b](-2.4,2.2){$n$}
\end{pspicture}
\caption{Basis for the punctured torus}
\label{fig:torep}
\end{center}
\end{figure}

\begin{proposition}\label{torep}
For $n$ satisfying $a/2< n< r-a/2$ one has:
\begin{eqnarray*}
T_{r}^{\gamma} \phi_n&=&-2\cos(\frac{\pi n}{r})\phi_n\\
T_r^{\delta}\phi_n&=&u_{n+1}\phi_{n+1}+u_n\phi_{n-1}\text{ where}\\ 
u_n&=&-\left(\frac{\la n+\frac{a-1}{2}\ra\la n-\frac{a+1}{2}\ra}{\la n\ra\la n-1\ra}\right)^{1/2}
\end{eqnarray*}
\end{proposition}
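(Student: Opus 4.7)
The plan is to reduce both statements to standard skein manipulations inside the handlebody $H = S\times[0,1]$ on the basis element $\psi_n$, and then normalize via the formula \eqref{norm}.

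For the eigenvalue computation of $T_r^\gamma$: the curve $\gamma = C_e$ encircles the $n-1$ parallel strands carrying the Jones-Wenzl idempotent $(-1)^{n-1}f_{n-1}$ on the loop edge. A classical skein identity asserts that a trivial loop surrounding $f_{n-1}$ acts as multiplication by $-(A^{2n}+A^{-2n})$. Substituting $A = -e^{i\pi/2r}$ gives the scalar $-2\cos(\pi n/r)$, which is real and preserves the color, so the formula follows (the normalization $\phi_n=\psi_n/\|\psi_n\|$ is preserved because the operator is scalar on each $\psi_n$).

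For $T_r^\delta$: the curve $\delta = D_e$ is parallel to the loop, so the product $\delta\cdot\psi_n$ replaces the bundle of $n-1$ strands by $n$ parallel strands with $(-1)^{n-1}f_{n-1}$ inserted. The Jones-Wenzl recursion (Figure \ref{fig:jones-wenzl}) expresses $f_{n-1}\otimes 1 = f_n + \lambda\cdot(f_{n-1}\otimes 1)\circ U\circ(f_{n-1}\otimes 1)$ for an explicit $\lambda$ involving ratios $[k]/[n]$, where $U$ is the cup-cap. The first term, after inserting the sign $(-1)^n$, yields exactly $\psi_{n+1}$. The second term creates a small loop attached to the trivalent vertex with colors $(n-1,n-1,a-1)$; absorbing that loop via the standard ``bubble'' identity at a trivalent vertex reduces it to a multiple of the configuration with colors $(n-2,n-2,a-1)$, i.e.\ a multiple of $\psi_{n-1}$. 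Thus $\delta\cdot\psi_n = \pm\psi_{n+1} + \mu_n\psi_{n-1}$ with $\mu_n$ computable from the theta-like evaluation $\la n,n,a\ra$.

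Passing to the orthonormal basis via $\phi_n = \psi_n/\|\psi_n\|$ and using \eqref{norm} together with the identification $\la n,n,a\ra = \frac{\la n+\frac{a-1}{2}\ra!\la\frac{a-1}{2}\ra!^2\la n-\frac{a+1}{2}\ra!}{\la n-1\ra!^2\la a-1\ra!}$ (with $\chi(\Gamma)=0$ here), the ratio
\[
\frac{\|\psi_n\|^2}{\|\psi_{n-1}\|^2} = \frac{\la n+\frac{a-1}{2}\ra\la n-\frac{a+1}{2}\ra}{\la n\ra\la n-1\ra}
\]
is exactly $u_n^2$. Hermiticity of $T_r^\delta$ (guaranteed by the general theorem on the Kauffman algebra) forces the off-diagonal matrix elements to be symmetric; combined with the explicit sign coming from the $(-1)^{c_e-1}$ convention in the definition of $\psi_\cc$, we recover the tridiagonal form with the stated $u_n$.

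The main obstacle is the skein computation of the second term in $\delta\cdot\psi_n$: we need the ``cup-cap inserted in $f_{n-1}\otimes 1$'' diagram, after it is plugged into the trivalent vertex with colors $(n,n,a)$, to simplify cleanly to a scalar times $\psi_{n-1}$. This relies on the bubble identity at the vertex, which in turn is essentially the evaluation of the theta graph; in fact, one can bypass computing $\mu_n$ directly by observing that the coefficient of $\psi_{n+1}$ in $\delta\cdot\psi_n$ is $\pm 1$ (from the leading term of the Jones-Wenzl recursion) and that Hermiticity together with the norm formula \eqref{norm} then determines $\mu_n$ completely.
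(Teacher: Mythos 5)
Your argument is correct and follows the same route the paper merely alludes to (``standard computations using fusion rules''): the encircling eigenvalue $-(A^{2n}+A^{-2n})$ for $T_r^\gamma$, the Jones--Wenzl recursion for the leading term of $\delta\cdot\psi_n$, and the norm formula \eqref{norm} with $\la n,n,a\ra$, which indeed gives $\|\psi_n\|^2/\|\psi_{n-1}\|^2=u_n^2$ and hence the stated $u_n$ after tracking the $(-1)^{c_e-1}$ signs. Your use of Hermiticity of $T_r^\delta$ (a general TQFT fact, so not circular) to avoid evaluating the bubble coefficient explicitly is a legitimate small shortcut rather than a different method.
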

\begin{proof}
These formulas come from standard computations using fusion rules: the preceding section and Appendix \ref{fusion} contain all the information needed to do the computation. \end{proof}
Let $\tau=\frac{\pi n}{r},\alpha=\frac{\pi a}{r}$ and $\hbar=\frac{\pi}{r}$. Then, we compute directly from the above formula the $\psi$-symbols of $\gamma$ and $\delta$ namely $\sigma^{\gamma}=-2\cos(\tau)$ and 
\begin{eqnarray*}
\sigma^{\delta}=&-\big(\frac{\sin(\tau+\alpha/2+\hbar/2)\sin(\tau-\alpha/2+\hbar/2)}{\sin(\tau)\sin(\tau+\hbar)}\big)^{1/2}e^{i\theta}\\
&-\big(\frac{\sin(\tau+\alpha/2-\hbar/2)\sin(\tau-\alpha/2-\hbar/2)}{\sin(\tau)\sin(\tau-\hbar)}\big)^{1/2}e^{-i\theta}
\end{eqnarray*}

The first point of Conjecture \ref{principal} is obviously satisfied as $\gamma$ and $\delta$ intersect $\gamma$ respectively 0 and 1 times. The second point is a consequence of the following lemma:

\begin{lemma}\label{toregeom}
Let $\Sigma$ be a punctured torus, $\alpha\in [0,\pi]$, $\gamma$ and $\delta$ two curves on $\Sigma$ intersecting once. Let $(\tau,\theta)$ be the action-angle coordinates given in Lemma \ref{angle-coor} and such that $\tau=h_{\gamma}$. Then, $f_{\delta}=-2\frac{\sqrt{\sin(\tau+\alpha/2)\sin(\tau-\alpha/2)}}{\sin(\tau)}\cos(\theta)$.
\end{lemma}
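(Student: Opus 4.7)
My plan is to parametrize the moduli space directly in matrix coordinates and read off $\tr\rho(\delta)$ from the commutator relation around the puncture. Using the overall $\su$-conjugation freedom, I put $A:=\rho(\gamma)=\mathrm{diag}(e^{i\tau},e^{-i\tau})$ and write
\[
B:=\rho(\delta)=\begin{pmatrix}a&b\\-\bar b&\bar a\end{pmatrix},\qquad |a|^2+|b|^2=1.
\]
The boundary loop is conjugate to $[A,B]$, and a direct expansion gives $\tr[A,B]=2|a|^2+2(1-|a|^2)\cos(2\tau)$. Setting this equal to $2\cos\alpha$ and using the identity $\cos\alpha-\cos(2\tau)=2\sin(\tau+\alpha/2)\sin(\tau-\alpha/2)$ yields
\[
|a|^2=\frac{\sin(\tau+\alpha/2)\sin(\tau-\alpha/2)}{\sin^2\tau}.
\]

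The residual gauge—conjugation by diagonal elements of $\su$—leaves $a$ fixed while rotating $b$. Hence $a$ descends to a function on $\boM(\Sigma,\alpha)$, and its argument provides a well-defined angular coordinate on each fiber $p^{-1}(\tau)$. Writing $a=|a|e^{i\theta}$ I obtain
\[
f_\delta=-\tr(B)=-2\Re(a)=-\frac{2\sqrt{\sin(\tau+\alpha/2)\sin(\tau-\alpha/2)}}{\sin\tau}\cos\theta,
\]
which is the asserted formula.

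To finish, I must identify this $\theta$ with the angle coordinate prescribed by Lemma~\ref{angle-coor}. The section $s$ was characterized there as the point minimizing $h_\delta=\arccos(\tr(B)/2)$ on each fiber, which is equivalent to maximizing $\tr(B)=2|a|\cos\theta$, and this occurs precisely at $\theta=0$; so the normalization matches. For the Hamiltonian-flow property I would invoke Goldman's formula, $\{\tr A,\tr B\}=\tr(AB)-\tfrac12\tr(A)\tr(B)$; with $\tr(AB)=2|a|\cos(\tau+\theta)$ the right-hand side simplifies to $-2|a|\sin\tau\sin\theta$, which is what one also obtains by expanding the bracket in assumed action-angle coordinates in which $\omega$ is proportional to $d\tau\wedge d\theta$.

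The main—and really the only—obstacle is this last identification: the trace computation itself is routine, but one must track signs and overall constants of Goldman's bracket against the Atiyah--Bott symplectic form on $\boM(\Sigma,\alpha)$ in order to conclude that $\arg(a)$ really is the canonical angle coordinate of Lemma~\ref{angle-coor}. Once this normalization is settled the lemma follows from the explicit matrix parametrization above.
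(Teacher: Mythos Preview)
Your argument is essentially the paper's own proof: diagonalize $A=\rho(\gamma)$, write $B=\rho(\delta)$ in standard $\su$ form, read off $|a|^2$ from $\tr[A,B]=2\cos\alpha$, and fix the origin of the angle by the minimizing condition of Lemma~\ref{angle-coor}. The one place you diverge is in identifying $\theta=\arg(a)$ with the canonical angle. The paper sidesteps your ``main obstacle'' entirely by quoting Goldman's description of the Hamiltonian flow of $h_\gamma$ directly: the time-$\theta$ flow sends $(A,B)$ to $(A,U_\theta B)$ with $U_\theta=\mathrm{diag}(e^{i\theta},e^{-i\theta})$, so the upper-left entry of $B$ gets multiplied by $e^{i\theta}$ and hence $\arg(a)$ \emph{is} the angle coordinate up to a shift of origin. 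This is cleaner than your infinitesimal check via Goldman's Poisson bracket, which forces you to chase the normalization of $\omega$; citing the flow formula makes that bookkeeping unnecessary.
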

\begin{proof}
Choose a base-point $x$ on the boundary of $\Sigma$ and represent $\gamma$ and $\delta$ as elements of $\pi_1(\Sigma,x)$. For any $\rho:\pi_1(\Sigma,x)\to\su$, set $A=\rho(\gamma)$ and $B=\rho(\delta)$. The condition for $\rho$ to be in $\boM(\Sigma,\alpha)$ is $\tr(ABA^{-1}B^{-1})=2\cos(\alpha)$. For any $\theta\in \R/2\pi\Z$, set $U_{\theta}=\begin{pmatrix} e^{i\theta}&0\\ 0&e^{-i\theta}\end{pmatrix}$. As $\tr(A)=2\cos(\tau)$, one can suppose up to conjugation that we have $A=U_{\tau}$. The Hamiltonian flow of $h_{\gamma}$ acts on $(A,B)$ by the formula $\theta.A=A$ and $\theta.B=U_{\theta}B$, see \cite{goldman}. 
Write $B=\begin{pmatrix} b_1&b_2\\-\ba{b_2}&\ba{b_1}\end{pmatrix}$ such that $|b_1|^2+|b_2|^2=1$. The formula $\tr(ABA^{-1}B^{-1})=2\cos(\alpha)$ implies that $|b_1|^2=\frac{\cos(\alpha)-\cos(2\tau)}{1-\cos(2\tau)}=\frac{\sin(\tau+\alpha/2)\sin(\tau-\alpha/2)}{\sin(\tau)^2}$.

We also compute $\tr(\theta.B)=2\re(b_1 e^{i\theta})=2\frac{\sqrt{\sin(\tau+\alpha/2)\sin(\tau-\alpha/2)}}{\sin(\tau)}\cos(\theta+\psi)$ where $\psi=\arg(b_1)$. Recall that in Lemma \ref{angle-coor}, the origin for the angle $\theta$ is such that $\tr(\theta.B)$ is maximal. Hence, one can suppose that $\psi=0$ which proves the lemma. \end{proof}

The geometric picture can be nicely visualized in the coordinates $x=\tr(A), y=\tr(B), z=\tr(AB)$. The moduli space $\boM(\Sigma,\alpha)$ is isomorphic to the cubic $\{(x,y,z)\in [-2,2]^3, x^2+y^2+z^2-xyz-2=2\cos(\alpha)\}$. The level sets of $x$ are ellipses on which the Hamiltonian flow $\theta$ acts by rotation. The level set $\theta=0$ is given by the maximum of $y$ on the level sets of $x$. It is then defined as the intersection of the cubic with the half plane $\{z=0,y\ge 0\}$.

Let $\tw^\gamma$ be the Dehn twist along $\gamma$ and denote by $\xi$ the curve $\tw^\gamma(\delta)$. 
\begin{lemma}\label{derive-tore}
For $\eta\in\{\gamma,\delta,\xi\}$, the following equation holds:

$$\sigma^{\eta}=f_{\eta}+\left(\frac{1}{2i}\frac{\partial^2f_{\eta}}{\partial \theta\partial \tau}\right)\hbar+o(\hbar)$$
\end{lemma}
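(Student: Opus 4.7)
The lemma decouples into three independent verifications. For $\eta=\gamma$ the claim is immediate: Proposition~\ref{torep} gives $\sigma^\gamma=-2\cos\tau$, which is $\theta$-independent, so $\partial_\tau\partial_\theta f_\gamma=0$, and $f_\gamma=-2\cos h_\gamma=-2\cos\tau$ by definition of the action coordinate; both sides of the asymptotic equal $-2\cos\tau$ identically.

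For $\eta=\delta$, I would Taylor-expand the explicit formula of Proposition~\ref{torep}. Denoting the square-root factor by $\varphi(\tau,\hbar)$, so that $\sigma^\delta=-\varphi(\tau,\hbar)e^{i\theta}-\varphi(\tau,-\hbar)e^{-i\theta}$, the expansion in $\hbar$ yields
$$\sigma^\delta=-2\varphi(\tau,0)\cos\theta-2i\hbar\,(\partial_\hbar\varphi)(\tau,0)\sin\theta+O(\hbar^2).$$
Lemma~\ref{toregeom} identifies $\varphi(\tau,0)$ with $B(\tau):=\sqrt{\sin(\tau+\alpha/2)\sin(\tau-\alpha/2)}/\sin\tau$, so the leading term is exactly $f_\delta=-2B(\tau)\cos\theta$. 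A logarithmic differentiation of $\varphi^2$ at $\hbar=0$ yields the identity $2(\partial_\hbar\varphi)(\tau,0)=B'(\tau)$, and therefore the $O(\hbar)$ contribution equals $-i\hbar B'(\tau)\sin\theta=\frac{\hbar}{2i}\cdot 2B'(\tau)\sin\theta=\frac{\hbar}{2i}\partial_\tau\partial_\theta f_\delta$, as required.

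For $\eta=\xi=\tw^\gamma(\delta)$, I would compute $T_r^\xi$ from $T_r^\delta$ through the TQFT mapping class group action. The Dehn twist $\tw^\gamma$ acts on $V_r(\Sigma,c)$ by a unitary $U$ that is diagonal in the basis $(\phi_n)$, with eigenvalues $\mu_n=(-A)^{-(n^2-1)}$ at $A=\zeta_r$ (the Jones--Wenzl twist coefficients), a quadratic phase in $n$. Since $T_r^\xi=UT_r^\delta U^{-1}$, the off-diagonal matrix elements of $T_r^\xi$ are those of $T_r^\delta$ multiplied by $\mu_{n\pm 1}/\mu_n=e^{\mp i(\tau\pm\hbar/2)}$ (with $\tau=\pi n/r$, $\hbar=\pi/r$). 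A direct bookkeeping gives
$$\sigma^\xi=e^{\mp i\hbar/2}\bigl[-\varphi(\tau,\hbar)e^{i(\theta\mp\tau)}-\varphi(\tau,-\hbar)e^{-i(\theta\mp\tau)}\bigr]+O(\hbar^2),$$
whose leading order $-2B(\tau)\cos(\theta\mp\tau)$ coincides with $f_\xi=-\tr(\rho(\gamma)\rho(\delta))$ evaluated on the section of Lemma~\ref{angle-coor} (the direction of the shift following the orientation of the twist). The subprincipal identity then follows by expanding in $\hbar$ and invoking once more $2(\partial_\hbar\varphi)(\tau,0)=B'(\tau)$, exactly parallel to the $\delta$ case.

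The main obstacle is the $\xi$ case: the sign of the TQFT twist phase $\mu_n$ and the orientation of the classical shear $(\tau,\theta)\mapsto(\tau,\theta\pm\tau)$ induced by $\tw^\gamma$ on $\boM(\Sigma,t)$ must be matched consistently, which requires careful tracking of Dehn twist conventions on both sides. A purely algebraic alternative avoids the mapping class group altogether: the Kauffman skein identity $A\delta\gamma-A^{-1}\gamma\delta=(A^2-A^{-2})\xi$ in $K(\Sigma,A)$, valid up to the choice of smoothing at the unique intersection of $\gamma$ and $\delta$, expresses $T_r^\xi$ as a quantum commutator of the diagonal $T_r^\gamma$ and the tridiagonal $T_r^\delta$ of Proposition~\ref{torep}, yielding the same $\hbar$-expansion by direct computation.
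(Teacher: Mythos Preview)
Your argument follows the same route as the paper: trivial for $\gamma$, direct expansion for $\delta$, and conjugation by the Dehn twist for $\xi$. For $\delta$, the paper packages the computation via the observation that $F_k^\delta(\tau-\tfrac{k\hbar}{2},\hbar)=F_k^\delta(\tau,0)+o(\hbar)$, which is equivalent to your identity $2(\partial_\hbar\varphi)(\tau,0)=B'(\tau)$; so this part is fine.

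For $\xi$ there are two points to tighten. First, the sign: the paper simply quotes the explicit twist eigenvalue $\tw_r^\gamma\phi_n=e^{\frac{i\pi}{2r}(n^2-1)}\phi_n$ from \cite{bhmv}, which gives $\mu_{n\pm 1}/\mu_n=e^{\pm i(\tau\pm\hbar/2)}$ (opposite to your stated sign) and hence the exact relation $\sigma^{\xi}=e^{i\hbar/2}\sigma^{\delta}(\tau,\theta+\tau)$, with no $O(\hbar^2)$ error. You are right that this is a convention check, but it should be pinned down rather than left as $\mp$; your skein alternative would also settle it. Second, the subprincipal term for $\xi$ is not ``exactly parallel'' to the $\delta$ case: when you expand $e^{i\hbar/2}\sigma^\delta(\tau,\theta+\tau)$ you pick up an extra $\tfrac{i}{2}f_\delta(\tau,\theta+\tau)$ from the prefactor, and on the classical side $\partial_\tau\partial_\theta f_\xi$ acquires an extra $\partial_\theta^2 f_\delta(\tau,\theta+\tau)$ from the chain rule. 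These two extras cancel precisely because $\partial_\theta^2 f_\delta=-f_\delta$ (a consequence of $f_\delta$ having only $e^{\pm i\theta}$ modes). The paper makes this cancellation explicit; you should too.
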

\begin{proof}
This is trivial for $\eta=\gamma$ as one has $\sigma^\gamma=f_\gamma$ and the first order term vanishes. 
For any $\eta$, set $\sigma^\eta=\sum_k F^\eta_k(\tau,\hbar)e^{ik\theta}$. The lemma is an immediate consequence of the following identity:
$$\partial_\hbar F_k^{\eta}=\frac{k}{2}\partial_\tau F_k^\eta$$
We check this identity by interpreting it has $\partial_\hbar F_k^\eta(\tau-\frac{k\hbar}{2},\hbar)$. 
We compute $F_1^\delta(\tau-\hbar/2,\hbar)=-\big(\frac{\sin(\tau+\alpha/2)\sin(\tau-\alpha/2)}{\sin(\tau-\hbar/2)\sin(\tau+\hbar/2)}\big)^{1/2}=F_1(\tau,0)+o(\hbar).$ The same argument works for $F_{-1}^\delta$.

Denote by $\tw_r^\gamma$ the action of $\tw^\gamma$ on $V_r(\Sigma,a)$.
Following \cite{bhmv} p.913,  the twist acts on $\phi_n$ by the formula $\tw_r^\gamma\phi_n=e^{\frac{i\pi}{2r}(n^2-1)}\phi_n$. Hence, from the identity $T_r^{\xi}=\tw_r^\gamma T_r^{\delta}(\tw_r^\gamma)^{-1}$, we obtain $T_r^{\xi}\phi_n=u_{n+1}e^{\frac{i\pi}{2r}(2n+1)}\phi_{n+1}+u_n e^{\frac{i\pi}{2r}(-2n+1)}\phi_{n-1}$ and $$\sigma^{\xi}=e^{i\hbar/2}\sigma^{\delta}(\tau,\theta+\tau).$$

When computing $\partial_{\hbar}\sigma^{\xi}$ and $\frac{1}{2i}\partial_{\tau\theta}f_{\xi}$,  we obtain the same result as before changing $\theta$ into $\theta+\tau$ plus a term $\frac{i\pi}{2}$. This extra term cancel with the term produced in the derivation, proving the formula for $\xi$.
\end{proof}
\subsection{The 4th-punctured sphere}
Consider a 4 times punctured sphere $\Sigma$, and let $\Gamma$ be the graph of Figure \ref{fig:spherep} with two univalent vertices and 4 univalent vertices. We call $a,b,c,d$ the colors of the marked points and $n$ the color of the internal edge $e$. Let $\zeta=C_e$ and $\eta=D_e$.

\begin{figure}[htbp]
\begin{center}
\begin{pspicture}(0,0)(1,4.5)
\rput[b1](0,0){\includegraphics{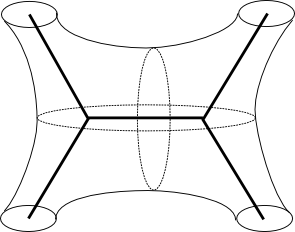}}
\rput[b](2.2,1.2){$\Gamma$}
\rput[b](-3,1){$\Sigma$}
\rput[b](-0.9,1.8){$\eta$}
\rput[b](-0.3,3.4){$\zeta$}
\rput[b](-2.7,4.5){$a$}
\rput[b](2.7,4.5){$b$}
\rput[b](2.7,0.2){$c$}
\rput[b](-2.7,0.2){$d$}
\rput[b](0.1,2.18){$n$}
\end{pspicture}
\caption{Basis for the punctured sphere}
\label{fig:spherep}
\end{center}
\end{figure}

\begin{proposition}\label{spherep}
For $n$ satisfying $\max(|a-d|,|b-c|)< n<\min(a+d,2r-a-d,b+c,2r-b-c)$ and $n+1=a+d=b+c\mod 2$, one has:
\begin{eqnarray*}
T_{r}^{\zeta} \phi_n&=&-2\cos(\frac{\pi n}{r})\phi_n\\
T_r^{\eta}\phi_n&=& w_{n-2}\phi_{n-2}+v_n\phi_n+w_n\phi_{n+2}\text{ where}\\
w_n&=&-4\big( \frac{\langle \frac{a+d-n-1}{2}\rangle\la \frac{a-d+n+1}{2}\ra\la\frac{-a+d+n-1}{2}\ra\la \frac{a+d+n+1}{2}\ra}{\la n\ra\la n+1\ra}\cdot\\
&&\frac{\la \frac{b+c-n-1}{2}\ra\la \frac{b-c+n+1}{2}\ra\la \frac{-b+c+n+1}{2}\ra\la \frac{b+c+n+1}{2}\ra}{\la n+1\ra\la n+2\ra}\big)^{1/2}\\
v_n&=&-2\cos(\frac{\pi(c+d-1)}{r})\\
&&-4\frac{\la \frac{a+d-n-1}{2}\ra\la \frac{a-d+n+1}{2}\ra\la \frac{b+c-n-1}{2}\ra\la \frac{b-c+n+1}{2}\ra}{\la n\ra\la n+1\ra}\\
&& -4\frac{\la \frac{a+d+n-1}{2}\ra\la \frac{-a+d+n-1}{2}\ra\la \frac{b+c+n-1}{2}\ra\la \frac{-b+c+n-1}{2}\ra}{\la n-1\ra\la n\ra}
\end{eqnarray*}
\end{proposition}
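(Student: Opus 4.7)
The approach is a direct skein-module computation, entirely parallel to the proof of Proposition \ref{torep}. Both formulas follow from the Kauffman bracket relations, the Jones-Wenzl absorbing identities, the trivalent fusion rule, and the normalization \eqref{norm}; no analytic input is needed. The plan is to first compute the action on the unnormalized basis $\psi_\cc$ and then pass to the orthonormal $\phi_\cc = \psi_\cc / \|\psi_\cc\|$ at the end via \eqref{norm}.

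The identity $T_r^\zeta \phi_n = -2\cos(\pi n/r)\phi_n$ is immediate: $\zeta = C_e$ is a meridian around the internal edge $e$, which in $\psi_n$ carries the idempotent $f_{n-1}$; the standard absorbing-loop identity gives the eigenvalue $-(A^{2n}+A^{-2n})$, equal to $-2\cos(\pi n/r)$ at $A = -e^{i\pi/2r}$. No normalization effect appears because $\zeta$ acts diagonally.

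For $T_r^\eta$ the computation is more substantial. Since $\eta = D_e$ intersects the meridian $C_e$ in two points, pushing it into the handlebody and cabling adds two parallel strands alongside the $(n-1)$-strand Jones-Wenzl projector on the edge $e$. Applying the fusion rule at each of the two trivalent vertices of $\Gamma$ in turn, the two extra strands resolve into three admissible intermediate colors of the edge $e$, namely $n-2$, $n$ (with two distinct channels), and $n+2$, which accounts for the tridiagonal structure of the statement. The matrix coefficient from $\psi_n$ to $\psi_{n'}$ is then the product of two tetrahedral-network evaluations, one at each vertex, with the extra strand colors $1,1$ attached, divided by the theta-network of the new internal color; evaluating these small networks by the formulas of Appendix \ref{fusion} produces the explicit $\langle\cdot\rangle$-products appearing in $w_n$ and $v_n$. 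The constant term $-2\cos(\pi(c+d-1)/r)$ in $v_n$ arises from the ``trivial'' resolution in which the two cabled strands of $\eta$ fuse into the empty channel at the lower vertex, so that $\eta$ collapses to a loop encircling the strand of color $c+d-1$ produced by the $(c,d)$-fusion, and the absorbing-loop identity again supplies the cosine. Finally, switching from $\psi_\cc$ to $\phi_\cc$ via \eqref{norm} moves the factors $\langle n\rangle, \langle n+1\rangle, \langle n+2\rangle$ under the square root in $w_n$ and turns the asymmetric matrix obtained above into the Hermitian one stated.

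The main obstacle is pure bookkeeping. One must keep track of the sign $(-1)^{\cc_e-1}$ appearing in the definition of $\psi_\cc$, convert quantum integers into sines via $A = -e^{i\pi/2r}$, and repeatedly apply product-to-sum identities to collapse the raw tetrahedral evaluations into the compact closed forms for $w_n$ and $v_n$. No new idea beyond the torus case of Proposition \ref{torep} is needed; once all signs and normalization conventions are fixed, the verification is mechanical.
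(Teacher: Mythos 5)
The paper itself omits this proof, describing it as ``a long and standard computation using fusion rules'' with the toolkit of Appendix \ref{fusion}; your plan --- the eigenvalue of $\zeta$ from the loop-around-idempotent identity, fusion of the two strands of $\eta$ running along the edge $e$ into the intermediate colors $n-2$, $n$, $n+2$, evaluation of the local vertex networks by tetrahedron/theta coefficients, and symmetrization of the resulting tridiagonal matrix via the normalization \eqref{norm} --- is exactly that computation, so your approach is the same as the paper's. The one inaccurate detail is your account of the constant term $-2\cos(\pi(c+d-1)/r)$ in $v_n$: the edges colored $c$ and $d$ are attached to \emph{different} trivalent vertices of $\Gamma$ (the vertex triples are $(a,d,n)$ and $(b,c,n)$), so no strand of color $c+d-1$ is ever produced and no single ``trivial'' resolution yields that cosine directly; like the other two summands of $v_n$, it only emerges after trigonometric recombination of the channel contributions --- a bookkeeping point you have already flagged as mechanical, which does not affect the validity of the method.
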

\begin{proof}
Again, we omit the proof since it is a long and standard computation using fusion rules, see Appendix \ref{fusion}.
\end{proof}

Let $\tau=\frac{\pi n}{r},\hbar=\frac{\pi}{r},\alpha=\frac{\pi a}{r},\beta=\frac{\pi b}{r},\gamma=\frac{\pi c}{r},\delta=\frac{\pi d}{r}$. Then, the $\psi$-symbols of $T^{\zeta}$ and $T^{\eta}$ are respectively $\sigma^{\gamma}=-2\cos(\tau)$ and 
$$\sigma^{\eta}= -I(\tau,\hbar)-J(\tau,\hbar)e^{2i\theta}-J(\tau-2\hbar,\hbar)e^{-2i\theta} \textrm{ where }$$
\begin{eqnarray*}
I(\tau,\hbar)&=& 2\cos(\gamma+\delta-\hbar)\\
 &&+4
\frac{\sin(\frac{\alpha+\delta-\tau-\hbar}{2})\sin(\frac{\alpha-\delta+\tau+\hbar}{2})\sin(\frac{\beta+\gamma-\tau-\hbar}{2})\sin(\frac{\beta-\gamma+\tau+\hbar}{2})}
{\sin(\tau)\sin(\tau+\hbar)}\\
&&+4
\frac{\sin(\frac{\alpha+\delta+\tau-\hbar}{2})\sin(\frac{-\alpha+\delta+\tau-\hbar}{2})\sin(\frac{\beta+\gamma+\tau-\hbar}{2})\sin(\frac{-\beta+\gamma+\tau-\hbar}{2})}
{\sin(\tau)\sin(\tau-\hbar)}
\end{eqnarray*}
and
\begin{eqnarray*}
J(\tau,\hbar)&=&4\Big(\frac{
\sin(\frac{\alpha+\delta-\tau-\hbar}{2})\sin(\frac{\alpha-\delta+\tau+\hbar}{2})\sin(\frac{-\alpha+\delta+\tau+\hbar}{2})\sin(\frac{\alpha+\delta+\tau+\hbar}{2})
}{\sin(\tau)\sin(\tau+\hbar)}\\
&&\frac{\sin(\frac{\beta+\gamma-\tau-\hbar}{2})\sin(\frac{\beta-\gamma+\tau+\hbar}{2})\sin(\frac{-\beta+\gamma+\tau+\hbar}{2})\sin(\frac{\beta+\gamma+\tau+\hbar}{2})}
{\sin(\tau+\hbar)\sin(\tau+2\hbar)}
\Big)^{1/2}
\end{eqnarray*}
As for the case of the punctured torus, the first point of Conjecture \ref{principal} is trivially satisfied. We interpret the formulas where $\hbar=0$ in the following lemma.
\begin{lemma}\label{spheregeom}
Fix $\alpha,\beta,\gamma,\delta\in [0,\pi]$ and let $\boM=\{(A,B,C,D)\in \su, \textrm{ s. t. }\\ABCD=1, \tr(A)=2\cos(\alpha),\tr(B)=2\cos(\beta),\tr(C)=2\cos(\gamma),\tr(D)=2\cos(\delta)\}/\sim$. Let $(\tau,\theta)$ be the action-angles coordinates on $\boM$ given by Lemma \ref{angle-coor} such that $\tau=h_{\zeta}$. Then, $f_{\eta}=\sigma^{\eta}|_{\hbar=0}$.
\end{lemma}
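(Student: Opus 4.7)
The proof will closely mirror that of Lemma \ref{toregeom}, replacing the single pair of pants by the two pairs of pants obtained by cutting $\Sigma$ along $\zeta$. The plan is to parametrize $\boM(\Sigma,\alpha,\beta,\gamma,\delta)$ explicitly in a normal form adapted to the action-angle coordinates $(\tau,\theta)$, compute $\tr\rho(\eta)$ as a function of $\theta$, and match the resulting Fourier expansion against $\sigma^\eta|_{\hbar=0}$.

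Concretely, fix generators $A,B,C,D\in\su$ of the fundamental group of the punctured sphere with $ABCD=1$ and $\tr(A)=2\cos\alpha$, $\tr(B)=2\cos\beta$, etc., chosen so that $\rho(\zeta)=AD$ and $\rho(\eta)=AB$ (so $\eta$ is the image of the edge $D_e$). Since $\tau=h_\zeta$, one may conjugate globally so that $AD=U_\tau=\mathrm{diag}(e^{i\tau},e^{-i\tau})$; consequently $BC=U_{-\tau}$. Write
\[
A=\begin{pmatrix} a_1 & a_2\\ -\overline{a_2}&\overline{a_1}\end{pmatrix},\qquad B=\begin{pmatrix} b_1 & b_2\\ -\overline{b_2}&\overline{b_1}\end{pmatrix}.
\]
The conditions $\tr(A)=2\cos\alpha$ and $\tr(A^{-1}U_\tau)=\tr(D)=2\cos\delta$ fix $\re(a_1)=\cos\alpha$ and pin $\im(a_1)$ in terms of $(\tau,\alpha,\delta)$, leaving only $\arg(a_2)$ free. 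An identical computation for $B$ on the other pair of pants fixes $\re(b_1)=\cos\beta$ and $\im(b_1)$ in terms of $(\tau,\beta,\gamma)$, leaving $\arg(b_2)$ free. By Goldman's formula, the Hamiltonian flow of $h_\zeta$ acts by $B\mapsto U_\theta BU_{-\theta}$, $C\mapsto U_\theta CU_{-\theta}$, leaving $A,D$ unchanged; this amounts to the substitution $b_2\mapsto b_2 e^{2i\theta}$.

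A direct $2\times 2$ product then yields
\[
\tr\bigl(A\cdot U_\theta BU_{-\theta}\bigr)=2\re(a_1b_1)-2\re\!\bigl(a_2\overline{b_2}\,e^{-2i\theta}\bigr),
\]
which is a trigonometric polynomial in $\theta$ of degrees $0$ and $\pm 2$, matching the Fourier structure of $\sigma^\eta|_{\hbar=0}$. By Lemma \ref{angle-coor}, the origin of $\theta$ is chosen so that $h_\eta$ is minimised, i.e.\ $\tr\rho(\eta)$ is maximised; this forces $a_2\overline{b_2}$ to be real negative, so that $f_\eta=-2\re(a_1b_1)-2|a_2||b_2|\cos(2\theta)$. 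Comparing with the formula $\sigma^\eta|_{\hbar=0}=-I(\tau,0)-2J(\tau,0)\cos(2\theta)$, it remains to check the two identities
\[
2\re(a_1b_1)=I(\tau,0),\qquad |a_2||b_2|=J(\tau,0).
\]

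The second identity reduces, after squaring and using $\im(a_1)=(\cos\delta-\cos\alpha\cos\tau)/\sin\tau$, to the factorisation
\[
\sin^2\alpha\sin^2\tau-(\cos\delta-\cos\alpha\cos\tau)^2=4\prod_{\epsilon_1,\epsilon_2=\pm}\sin\!\left(\tfrac{\alpha+\epsilon_1\delta+\epsilon_2\tau}{2}\right),
\]
which is the familiar $\mathrm{SU}_2$ Heron-type identity obtained by grouping the four sines into two applications of $\sin A\sin B=\tfrac12(\cos(A-B)-\cos(A+B))$; an identical identity holds for $|b_2|^2\sin^2\tau$, and multiplying gives precisely $J(\tau,0)^2\sin^4\tau$. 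The first identity follows by expanding $2\re(a_1b_1)=2\cos\alpha\cos\beta-2\,\im(a_1)\im(b_1)$, substituting the values of $\im(a_1),\im(b_1)$, and repeatedly applying product-to-sum formulas. The main obstacle is therefore purely bookkeeping on these trigonometric identities, the most delicate point being to recognize that the cross-term $2\cos(\gamma+\delta)\sin^2\tau$ in $I(\tau,0)$ combines with $\cos(\delta+\gamma)\cos(2\tau)$ produced by the second expansion to reconstitute the clean expression $2\cos\gamma\cos\delta+2\cos\alpha\cos\beta-2\cos\tau(\cos\alpha\cos\gamma+\cos\beta\cos\delta)$; once this is carried out the equality with $2\re(a_1b_1)\sin^2\tau$ is immediate.
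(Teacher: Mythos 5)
Your proposal follows essentially the same route as the paper's proof: conjugate so the holonomy of $\zeta$ is diagonal, let the Goldman flow of $h_{\zeta}$ rotate $B$ and $C$, read off the $\theta$-Fourier modes ($\mu=0,\pm2$) from the $2\times 2$ trace $\tr(A\,U_{\theta}BU_{-\theta})=2\re(a_1b_1)-2\re(a_2\overline{b_2}e^{-2i\theta})$, fix the phase using the normalization of $\theta$ from Lemma \ref{angle-coor}, and verify the two trigonometric identities $2\re(a_1b_1)=I(\tau,0)$ and $|a_2\overline{b_2}|=J(\tau,0)$. One small correction: normalize $DA=U_{\tau}$ rather than $AD=U_{\tau}$, since $ABCD=1$ gives $BC=A^{-1}D^{-1}=(DA)^{-1}$, so only the former makes $BC$ diagonal as your determination of $b_1$ requires (and note the overall sign in your four-sine factorization, which is harmless because only $|a_2\overline{b_2}|$ enters); with these adjustments your argument is the paper's.
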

This lemma uses the following easy sub-lemma which we do not prove.
\begin{lemma}
Let $A,B\in\su$ be two matrices such that $AB=U_{\tau}$ for some $\tau\in \R\setminus \pi\Z$. Then, the upper left entry of $A$ is $\frac{\cos(\alpha)\sin(\tau)+i(\cos(\beta)-\cos(\alpha)\cos(\tau))}{\sin(\tau)}$ where $\tr(A)=2\cos(\alpha)$ and $\tr(B)=2\cos(\beta)$.
\end{lemma}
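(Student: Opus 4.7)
The plan is a direct matrix computation based on the defining equation $AB=U_\tau$ and the standard parametrization of $\su$. First I would write
\[
A=\begin{pmatrix} a_1 & a_2 \\ -\ba{a_2} & \ba{a_1}\end{pmatrix},\qquad |a_1|^2+|a_2|^2=1,
\]
and read off from $\tr(A)=2\cos(\alpha)$ that $\Re(a_1)=\cos(\alpha)$. This already gives the real part of the entry in the stated formula, so the only thing left to determine is $\Im(a_1)$.

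Next I would use the hypothesis $AB=U_\tau$ in the form $B=A^{-1}U_\tau=A^* U_\tau$, valid since $A\in\su$. Taking the trace yields the single scalar equation $\tr(A^* U_\tau)=2\cos(\beta)$, i.e.\
\[
\ba{a_1}e^{i\tau}+a_1 e^{-i\tau}=2\Re(a_1 e^{-i\tau})=2\cos(\beta).
\]
Writing $a_1=\cos(\alpha)+iy$ and expanding $a_1e^{-i\tau}$ gives $\cos(\alpha)\cos(\tau)+y\sin(\tau)=\cos(\beta)$. The hypothesis $\tau\in\R\setminus\pi\Z$ ensures $\sin(\tau)\ne 0$, so I can solve for $y$:
\[
y=\frac{\cos(\beta)-\cos(\alpha)\cos(\tau)}{\sin(\tau)}.
\]

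Combining the real and imaginary parts of $a_1$ over the common denominator $\sin(\tau)$ yields exactly the claimed expression for the upper left entry of $A$. There is essentially no obstacle; the only mild subtlety is the non-vanishing of $\sin(\tau)$, which is precisely what the hypothesis $\tau\notin\pi\Z$ provides, and which reflects the fact that when $U_\tau=\pm I$ the product condition alone no longer cuts out a one-dimensional family of $A$'s with prescribed traces.
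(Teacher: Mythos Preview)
Your proof is correct; it is exactly the kind of direct matrix computation one would expect. The paper itself does not give a proof of this sub-lemma, calling it ``easy'' and omitting the argument, so there is nothing to compare against.
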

\begin{proof}
One has $\tr(AD)=2\cos(\tau)$ and $\tr(AB)=2\cos(h_{\eta})$. One can suppose up to conjugation that $DA=U_{\tau}$. Then, the Hamiltonian flow of $h_{\zeta}$ acts on the 4-tuple $(A,B,C,D)$ by $\theta.(A,B,C,D)=(A,U_{\theta}BU_{-\theta},U_{\theta}CU_{-\theta},D)$. 
If one writes $A=\begin{pmatrix} a_1&a_2\\-\ba{a_2}&\ba{a_1}\end{pmatrix}$ and $B=\begin{pmatrix} b_1&b_2\\-\ba{b_2}&\ba{b_1}\end{pmatrix}$ such that $|a_1|^2+|a_2|^2=|b_1|^2+|b_2|^2=1$, we compute 
$$\tr((\theta.A)(\theta.B))=2\re(a_1b_1)-2\re(a_2\ba{b_2}e^{-2i\theta})$$

As $A^{-1}D^{-1}=U_{-\tau}$ and $BC=U_{-\tau}$, we find using the sub-lemma that $\ba{a_1}=\frac{\cos(\alpha)\sin(\tau)+i(\cos(\alpha)\cos(\tau)-\cos(\delta))}{\sin(\tau)}$ and $b_1=\frac{\cos(\beta)\sin(\tau)+i(\cos(\beta)\cos(\tau)-\cos(\gamma))}{\sin(\tau)}$.
We obtain by tedious trigonometric computations:
\begin{eqnarray*}
2\re(a_1b_1)&=&\!\!\mbox{\scriptsize $\frac{2}{\sin(\tau)^2}\big(\cos(\alpha)\cos(\beta)\!+\!\cos(\gamma)\cos(\delta)\!-\!\cos(\tau)(\cos(\alpha)\cos(\gamma)+\cos(\beta)\cos(\gamma))\big)$}\\
&=&I(\tau,0)
\end{eqnarray*}

We compute in the same way
\begin{eqnarray*}
|a_2|^2&=&1-|a_1|^2=\frac{\sin(\tau)^2-\cos(\alpha)^2-\cos(\delta)^2+2\cos(\alpha)\cos(\delta)\cos(\tau)}{\sin(\tau)^2}\\
&=&\frac{-4\sin(\frac{\alpha+\delta-\tau}{2})\sin(\frac{\alpha-\delta+\tau}{2})\sin(\frac{-\alpha+\delta+\tau}{2})\sin(\frac{\alpha+\delta+\tau}{2})}{\sin(\tau)^2}
\end{eqnarray*} 
Using this relation and the same one involving $|b_2|^2$, we find $|a_2b_2|=J(\tau,0)$.
Hence, we have $-f_{\eta}=I(\tau,0)+2J(\tau,0)\cos(2\theta+\psi)$.
By the same argument as in Lemma \ref{toregeom}, we now that $\theta$ is chosen such that $-f_{\eta}$ is maximal. Hence, one can suppose that $\psi=0$ which proves the lemma. 
\end{proof}

Again, there is a nice geometric picture of the preceding lemma. Consider the map $\boM\to \su^4/\su$ which sends the 4-tuple $(A,B,C,D)$ to the 4-tuple $P_1=1,P_2=A,P_3=AB,P_4=ABC$. Identifying $\su$ to the sphere $S^3$, the length between two elements $P$ and $Q$ is $\textrm{acos}(\frac{1}{2}\tr(PQ^{-1}))$. Hence, the length $l_{ij}$ between $P_i$ and $P_j$ satisfy $l_{12}=\alpha,l_{23}=\beta,l_{34}=\gamma,l_{14}=\delta, l_{13}=\tau$. We may then interpret $\boM$ as the moduli space of spherical quadrilaterals with fixed lengths. The coordinate $\tau$ is the length of the diagonal $P_1P_3$ whereas the angle $2\theta$ is the dihedral angle at the edge $P_1P_3$. The coordinate $h_{\eta}$ is equal to the length $l_{24}$. For a fixed value of $\tau$, it is minimal for non-convex planar quadrilaterals. Hence, the level set $\theta=0$ consists in non-convex planar quadrilaterals whereas the level set $\theta=\pi/2$ consists in convex planar quadrilaterals. Notice that contrary to the case of the punctured torus, the angle coordinate is $\pi$-periodic: this is explained by the fact that the curve $\zeta$ is separating, see \cite{goldman}.

Let $\xi$ be the curve obtained by performing on $\eta$ a half Dehn twist along $\zeta$ as shown in Figure \ref{demi-twist}.
\begin{lemma}\label{derive-sphere}
For $\lambda\in\{\zeta,\eta,\xi\}$, one has 
$$\sigma^\lambda=f_{\lambda}+\left(\frac{1}{2i}\frac{\partial^2f_\lambda}{\partial\theta\partial\tau}\right)\hbar+o(\hbar)$$
\end{lemma}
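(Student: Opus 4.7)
The plan is to adapt the proof of Lemma \ref{derive-tore} verbatim. Writing $\sigma^\lambda = \sum_k F_k^\lambda(\tau,\hbar)\,e^{ik\theta}$, the desired asymptotic identity is equivalent, mode by mode, to
$$\partial_\hbar F_k^\lambda(\tau,0) = \tfrac{k}{2}\,\partial_\tau F_k^\lambda(\tau,0), \qquad (\star)$$
since re-summing against $e^{ik\theta}$ produces $\tfrac{1}{2i}\partial_\tau\partial_\theta f_\lambda$. Equivalently, $(\star)$ asks that $F_k^\lambda(\tau - k\hbar/2,\hbar)$ be stationary in $\hbar$ at order $\hbar^2$ at $\hbar = 0$, so my task is to check this cancellation mode by mode in each of the three cases.

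The case $\lambda = \zeta$ is trivial, since $\sigma^\zeta = -2\cos\tau$ has neither $\theta$- nor $\hbar$-dependence.

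For $\lambda = \eta$, the only non-vanishing Fourier modes are $k \in \{0, \pm 2\}$, which I would handle separately. The modes $k = \pm 2$ reduce to the single claim $J(\tau - \hbar, \hbar) = J(\tau, 0) + O(\hbar^2)$, which is transparent from the explicit formula for $J$: after the substitution $\tau \mapsto \tau - \hbar$ every sine in the numerator becomes $\hbar$-independent, while the denominator becomes $\sin(\tau - \hbar)\sin(\tau)^2\sin(\tau + \hbar) = \sin(\tau)^4 + O(\hbar^2)$. The main obstacle is the $k = 0$ mode, where I need $\partial_\hbar I(\tau, 0) = 0$. Using the manifest symmetry $I_3(\tau, \hbar) = I_2(-\tau, \hbar)$ between the two fractional summands of $I$, this reduces to the identity
$$2\sin(\gamma + \delta) + \partial_\hbar I_2(\tau, 0) + \partial_\hbar I_2(-\tau, 0) = 0,$$
i.e.\ the contribution of the isolated cosine at the front of $I$ must be killed by the $\hbar$-derivatives of the two fractional pieces; I expect to verify this by a tedious but routine expansion of the half-angle sines appearing in $I_2$.

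For $\lambda = \xi$, I would use the BHMV formula for the action of the half Dehn twist along $\zeta$, which is diagonal on the basis $(\phi_n)$ with phase $e^{i\mu_n}$ quadratic in $n$ (a square root of the full-twist phase $e^{i\pi(n^2-1)/2r}$). The intertwining $T_r^\xi = H_r^\zeta\, T_r^\eta\, (H_r^\zeta)^{-1}$ then yields a symbol relation of the form
$$\sigma^\xi(\tau, \theta, \hbar) = e^{i\hbar A}\,\sigma^\eta(\tau, \theta + \tfrac{\tau}{2}, \hbar) + O(\hbar^2),$$
where the constant $A$ and the linear shift $\theta \mapsto \theta + \tau/2$ encode the first and second Taylor coefficients of $\mu_n$ in $\tau = \pi n/r$. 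Exactly as in the final step of Lemma \ref{derive-tore}, $(\star)$ for $\xi$ follows from $(\star)$ for $\eta$: the $\hbar$-derivative of the prefactor $e^{i\hbar A}$ at $\hbar = 0$ contributes an extra term $iA\,f_\xi$ that precisely cancels the additional summand produced when $\partial_\tau\partial_\theta$ sees the $\theta$-shift in $f_\xi(\tau, \theta) = f_\eta(\tau, \theta + \tau/2)$; this last cancellation uses that $f_\eta$ carries only the Fourier modes $0, \pm 2$, so $\partial_\theta^2 f_\eta$ is computable in closed form from its Fourier coefficients.
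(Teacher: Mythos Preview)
Your approach mirrors the paper's closely: both verify the claim mode by mode for $\zeta$ and $\eta$, and then pass to $\xi$ by relating $\sigma^\xi$ to $\sigma^\eta$ via the half-twist along $\zeta$. Your treatment of $\zeta$ and $\eta$ is fine; the paper likewise leaves the $k=0$ identity $\partial_\hbar I(\tau,0)=0$ to a direct (computer) check, and your observation $I_3(\tau,\hbar)=I_2(-\tau,\hbar)$ is a pleasant simplification they do not mention.

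There is however a genuine slip in your $\xi$ step. Conjugation by the half-twist does \emph{not} yield a relation of the form $\sigma^\xi = e^{i\hbar A}\,\sigma^\eta(\tau,\theta+\tau/2,\hbar)+O(\hbar^2)$ with a \emph{constant} $A$. The half-twist coefficient $H(n;b,c)$ is quadratic in $n$, so the ratio $H(n+k;\cdot)/H(n;\cdot)$ produces, besides the $\tau$-dependent $\theta$-shift, a \emph{mode-dependent} phase $e^{-i\hbar k^2/4}$ (and also swaps the colors $b\leftrightarrow c$ in the $\eta$-coefficients, which the paper notes but which is harmless for $(\star)$). The paper's explicit formula
\[
\sigma^\xi \;=\; -e^{i(\tau-\hbar)}F^\eta_{-2}e^{-2i\theta} \;+\; F^\eta_0 \;-\; e^{-i(\tau+\hbar)}F^\eta_{2}e^{2i\theta}
\]
makes this visible: no single scalar $e^{i\hbar A}$ fits all three modes. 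Consequently your cancellation ``$iA f_\xi$ kills the $\partial_\theta^2$-term'' fails as written, since a constant prefactor contributes $iA f_\eta$ while the term to be matched is $\tfrac{i}{4}\partial_\theta^2 f_\eta$, and these disagree already on the $k=0$ mode. The repair is exactly what your final sentence gestures at: the extra phase is $e^{-i\hbar k^2/4}$, whose $\hbar$-derivative gives $\tfrac{i}{4}\partial_\theta^2 f_\eta$, and this is precisely $-\tfrac{1}{4i}\partial_\theta^2 f_\eta$, the summand produced when $\partial_\tau\partial_\theta$ hits the $\tau$-dependent $\theta$-shift in $f_\xi(\tau,\theta)=f_\eta(\tau,\theta-\tau/2+\pi/2)$. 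With that correction your argument goes through and coincides with the paper's ``direct computation''.
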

\begin{proof}
As for Lemma \ref{derive-tore}, the formula is trivial for the case $\lambda=\zeta$ and for the case $\lambda=\eta$, it will be a consequence of the equation 
$$F_k^\eta(\tau-k\hbar/2,\hbar)=F_k^\eta(\tau,0)+o(\hbar)$$

We check directly that one has $J(\tau-\hbar,\hbar)=o(\hbar)$ which implies that the above formula works for $k=2$ and $k=-2$. For $k=0$, one has to show 
$I(\tau,\hbar)=I(\tau,0)+o(\hbar)$. This was checked by a computation with Sage.
The computation for $\xi$ is done in the same spirit as in Lemma \ref{derive-tore} whereas instead of using the action of the twist, we use the half-twist coefficient $H(c;a,b)=(-1)^\epsilon \exp\big(\frac{i\pi}{4r}(c^2-a^2-b^2+1)\big)$ where $\epsilon=(-a^2-b^2-c^2+2ab+2bc+2ac-2a-2b-2c+3)/4$ (see \cite{mv}).
\begin{figure}[htbp]
\begin{pspicture}(0,0)(1,4.5)
\rput[b1](6,2.8){\includegraphics{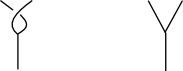}}
\rput[b](6,0){\includegraphics{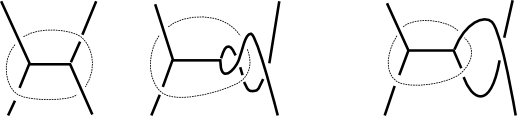}}
\rput[b](6.2,3){$=H(c;a,b)$}
\rput[b](4.4,2.6){$c$}
\rput[b](5,4){$b$}
\rput[b](4,4){$a$}
\rput[b](7.5,2.6){$c$}
\rput[b](8.1,4){$b$}
\rput[b](7.1,4){$a$}
\rput[b](5,0.9){$n$}
\rput[b](5.4,0.7){$c$}
\rput[b](5.5,1.5){$b$}
\rput[b](3.5,1){$=$}
\rput[b](7.7,1){$=\overline{H(n;c,b)}$}
\rput[b](0.5,0.9){$\xi$}
\rput[b](0.7,2.5){$a$}
\rput[b](2.7,2.5){$b$}
\rput[b](2.7,0){$c$}
\rput[b](0.5,0){$d$}
\rput[b](1.5,0.8){$n$}
\end{pspicture}
\caption{Half-twist acting on curves}
\label{demi-twist}
\end{figure}
\end{proof}
In Figure \ref{demi-twist}, we observe that we can compute the operator $T_r^\xi$ knowing $T_r^\eta$ and formulas for the half twist. We find precisely 
\begin{eqnarray*}
T_r^\xi \phi_n&=&\!\!\frac{H(n-2;b,c)}{H(n;c,b)}w_{n-2}\phi_{n-2}\!+\!\frac{H(n;b,c)}{H(n;c,b)}v_n \phi_n\!+\!\frac{H(n+2;b,c)}{H(n;c,b)}w_n\phi_{n+2}\\
&=&-e^{i\pi(n-1)/r}w_{n-2}\phi_{n-2}+v_n \phi_n-e^{-i\pi(n+1)/r}w_n\phi_{n+2}
\end{eqnarray*}
Here, we used the formula for $w_n,v_n$ associated to the operator $T^\eta_r$ given in Proposition \ref{spherep} where we implicitly inverted $b$ and $c$. From the last equation, we compute the symbol $\sigma^\xi=-e^{i(\tau-\hbar)}F^{\eta}_{-2}e^{-2i\theta}+F^{\eta}_0-e^{-i(\tau+\hbar)}F^{\eta}_2$. The equation of the lemma is satisfied by a direct computation.

\subsection{The general case}
Consider a surface $\Sigma$ with marked points $p_1,\ldots,p_n$ colored by rational multiples of $\pi$ denoted by  $t_1,\ldots,t_n$. Let $\Gamma$ be a graph associated to a pants decomposition of $\Sigma$. 
\begin{definition}
We say that a multicurve $\gamma$ satisfies the property $(*)$ if
\begin{enumerate}
\item
$T^{\gamma}_{r}$ is a trigonometric operator (denote by $\sigma^{\gamma}$ its $\psi$-symbol).
\item
$\sigma^{\gamma}=f_{\gamma}+\left(\frac{1}{2i}\sum_e\frac{\partial^2f_{\gamma}}{\partial \theta_e\partial \tau_e}\right)\hbar+o(\hbar)$
\end{enumerate}
\end{definition}
We extend this definition to $K(\Sigma,-1)[[\hbar]]$ by $\C[[\hbar]]$-linearity.

\begin{proposition}\label{produit}
Let $\gamma$ and $\delta$ be two multicurves satisfying (*) then their product in $K(\Sigma,-e^{i\pi\hbar/2})$ satisfies also (*).
\end{proposition}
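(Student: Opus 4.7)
The plan is to use the operator identity $T^{\gamma \cdot \delta}_r = T^\gamma_r T^\delta_r$ in $\text{End}(V_r(\Sigma,c_r))$ and derive $(*)$ for the product from $(*)$ for the two factors, via a composition law for $\psi$-symbols to be established to first order in $\hbar$. The key inputs are the trigonometric expansions of $T^\gamma_r$ and $T^\delta_r$ provided by $(*)$, together with the Kauffman deformation formula $\gamma \cdot \delta = f_\gamma f_\delta + \frac{\hbar}{2i}\{f_\gamma, f_\delta\} + o(\hbar)$ stated in Section \ref{representation}.

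For trigonometricity, writing $T^\gamma_r \phi_\cc = \sum_k F^\gamma_k(\pi\cc/r,\pi/r)\phi_{\cc+k}$ and likewise for $\delta$, a direct computation gives
\[ T^\gamma_r T^\delta_r \phi_\cc = \sum_m \Big( \sum_l F^\gamma_{m-l}\big(\tfrac{\pi(\cc+l)}{r},\tfrac{\pi}{r}\big) F^\delta_l\big(\tfrac{\pi\cc}{r},\tfrac{\pi}{r}\big) \Big) \phi_{\cc+m}, \]
which is of trigonometric form on the intersection of the two open sets $V$, since only finitely many $l$ and $m$ contribute. Taylor-expanding the shift
\[ F^\gamma_{m-l}(\tau+l\hbar,\hbar) = F^\gamma_{m-l}(\tau,\hbar) + \hbar \sum_e l_e \partial_{\tau_e} F^\gamma_{m-l}(\tau,\hbar) + O(\hbar^2) \]
and using the identity $\sum_l l_e G_l(\tau,\hbar)e^{il\theta} = -i\partial_{\theta_e}\sigma^\delta$ yields the composition formula
\[ \sigma^{\gamma\delta} = \sigma^\gamma \sigma^\delta - i\hbar \sum_e (\partial_{\tau_e}\sigma^\gamma)(\partial_{\theta_e}\sigma^\delta) + O(\hbar^2). \]

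Substituting the $(*)$-asymptotic for $\sigma^\gamma$ and $\sigma^\delta$ and collecting first-order terms gives
\[ \sigma^{\gamma\delta} = f_\gamma f_\delta + \frac{\hbar}{2i}\sum_e \big( f_\gamma \partial_{\tau_e}\partial_{\theta_e} f_\delta + f_\delta \partial_{\tau_e}\partial_{\theta_e} f_\gamma + 2(\partial_{\tau_e} f_\gamma)(\partial_{\theta_e} f_\delta) \big) + O(\hbar^2), \]
where the factor of $2$ comes from converting $-i\hbar$ into $\frac{\hbar}{2i}$. On the other hand, property $(*)$ applied by $\C[[\hbar]]$-linearity to $\gamma \cdot \delta$, together with the Kauffman expansion, predicts
\[ f_\gamma f_\delta + \frac{\hbar}{2i}\sum_e \partial_{\tau_e}\partial_{\theta_e}(f_\gamma f_\delta) + \frac{\hbar}{2i}\{f_\gamma, f_\delta\} + O(\hbar^2). \]
Applying the Leibniz rule to $\partial_{\tau_e}\partial_{\theta_e}(f_\gamma f_\delta)$ and writing $\{f_\gamma,f_\delta\} = \sum_e(\partial_{\tau_e} f_\gamma\,\partial_{\theta_e} f_\delta - \partial_{\theta_e} f_\gamma\,\partial_{\tau_e} f_\delta)$ collapses this into exactly the previous expression.

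The main obstacle is the first-order composition law itself: the composition of symbols is asymmetric in $\gamma$ and $\delta$ (only $\sigma^\gamma$ sees the shift $\tau \mapsto \tau+l\hbar$), whereas the Kauffman product is commutative up to an antisymmetric Poisson bracket. The reconciliation is that the asymmetric correction $-i\hbar(\partial_{\tau_e}\sigma^\gamma)(\partial_{\theta_e}\sigma^\delta)$ decomposes as the symmetric Leibniz correction to $\partial_{\tau_e}\partial_{\theta_e}(f_\gamma f_\delta)$ plus the antisymmetric Poisson contribution, which is precisely what $(*)$ demands. Once this bookkeeping is carried out, no further input is required.
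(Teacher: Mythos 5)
Your proof is correct and follows essentially the same route as the paper: trigonometricity comes from the morphism property $T^{\gamma\cdot\delta}_r=T^\gamma_r T^\delta_r$ together with the explicit composition of trigonometric coefficients (the shift $\tau\mapsto\tau+l\hbar$ producing the $\hbar$-correction $\partial_\tau\!\cdot\!\partial_\theta$ term), and the subprincipal identity is then checked against the Kauffman expansion $\gamma\cdot\delta=f_\gamma f_\delta+\frac{\hbar}{2i}\{f_\gamma,f_\delta\}+o(\hbar)$ using Leibniz and the action-angle form of the Poisson bracket. The only deviation is a transposed bookkeeping convention (you put the shift on the $\gamma$-coefficients and use $\{f_\gamma,f_\delta\}=\sum_e(\partial_{\tau_e}f_\gamma\,\partial_{\theta_e}f_\delta-\partial_{\theta_e}f_\gamma\,\partial_{\tau_e}f_\delta)$, whereas the paper shifts the $\delta$-coefficients and states the bracket with the opposite sign); the two choices are mutually consistent and the resulting cancellation is the same.
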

\begin{proof}
Let $\sigma^{\gamma}=\sum_k F_k(\tau,\hbar)e^{ik\theta}$ and $\sigma^{\gamma}=\sum_l F_l(\tau,\hbar)e^{il\theta}$ be the $\psi$-symbols of $T^{\gamma}_r$ and $T^{\delta}_r$. A direct computation shows that the product of the two operators is trigonometric with the following $\psi$-symbol:
\begin{equation}\label{starproduit}
\sigma^{\gamma}\star\sigma^{\delta}=\sum_n e^{in\theta}\sum_{k+l=n}F_k(\tau,\hbar)G_l(\tau+k\hbar,\hbar).
\end{equation}
We compute the product $\gamma\cdot \delta$ in the Kauffman algebra by stacking the two multicurves and smoothing all crossings. As the map $\gamma\mapsto T^{\gamma}_r$ is a morphism from $K(\Sigma,\zeta_r)$ to $\text{End}(V_r(\Sigma,rt/\pi))$, we obtain the first point of the proposition.

Letting $\hbar$ go to $0$ in the formula \eqref{starproduit}, we get the following expansion:
$$\sigma^{\gamma}\star\sigma^{\delta}=\sigma^{\gamma}\sigma^{\delta}+\frac{\hbar}{i}\partial_{\theta}\sigma^{\gamma}\partial_{\tau}\sigma^{\delta}+o(\hbar).$$
On the other hand, as an element of $K(\Sigma,-1)[[\hbar]]$, we have $\gamma\cdot\delta=f_{\gamma}f_{\delta}+\frac{\hbar}{2i}\{f_\gamma,f_\delta\}$, see \cite{turaev}. Hence, we see that the $\psi$-symbols coincide at 0-th order on $\hbar$. To equal the first order terms, one need to show the following: 
\begin{align*}
f_{\gamma}(\frac{1}{2i}\partial_\theta\partial_\tau f_{\delta})+f_{\delta}(\frac{1}{2i}\partial_\theta\partial_\tau f_{\gamma})+\frac{1}{i}\partial_{\theta}f_\gamma\partial_{\tau}f_{\delta}
=\frac{1}{2i}\partial_\theta\partial\tau(f_{\gamma}f_{\delta})+\frac{1}{2i}\{f_{\gamma},f_{\delta}\}.
\end{align*}
This formula holds thanks to Leibniz formula and the fact that $(\theta,\tau)$ are action angle coordinates, meaning that $\{f_\gamma,f_\delta\}=\partial_{\theta}f_\gamma\partial_\tau f_\delta-\partial_{\tau}f_\gamma\partial_\theta f_\delta$.
\end{proof}

We are now able to prove Theorem \ref{principal} by using the following results on the Kauffman algebra of the 1-torus and 4-sphere which are taken from \cite{bp}.

Consider the Farey triangulation of the hyperbolic disc $\mathbb{H}$. We identify the boundary of $\mathbb{H}$ with $\mathbb{P}^1(\mathbb{R})$ and consider the set of rational points $\mathbb{P}^1(\mathbb{Q})$ in the boundary. A point $[a,b]$ correspond to the simple closed curve in the punctured torus with slope $a/b$ (unique up to isotopy). The number of intersection points between two curves $[a_1,b_1]$ and $[a_2,b_2]$ is $|a_1b_2-a_2b_1|$. Join two points by a geodesic in $\mathbb{H}$ if and only if the corresponding curves intersect in one point. We get the Farey triangulation as shown in Figure \ref{fig:farey}.

\begin{figure}[htbp]
\centering
  \def\svgwidth{6cm}
 \executeiffilenewer{farey.svg}{farey.pdf}%
 {inkscape -z -D --file=farey.svg %
 --export-pdf=farey.pdf --export-latex}%
 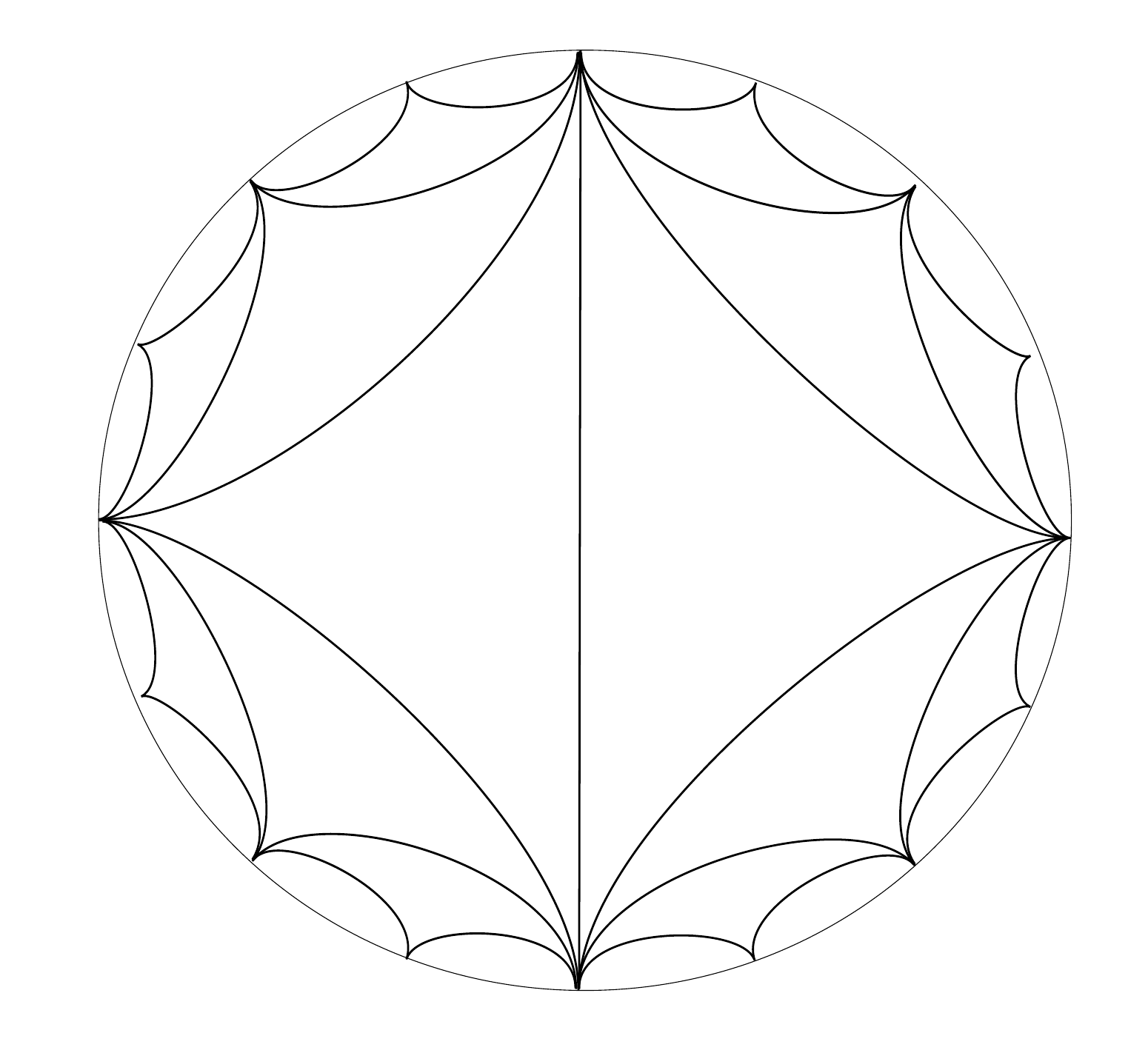%

  \caption{The Farey triangulation}
  \label{fig:farey} 
\end{figure}

Consider two simple closed curves $\gamma_1,\gamma_2$ on the punctured torus $\Sigma$ which intersect once. Then stacking them in the Kauffman algebra $K(\Sigma,A)$ and smoothing the crossing we get $A\delta_1+A^{-1}\delta_2$ where $\delta_1$ and $\delta_2$ are the two curves obtained by smoothing $\gamma_1\cup\gamma_2$. On the Farey triangulation, we simply read that the product of two curves at the boundaries of an edge is a weighted sum of the two curves which are the vertices of the two triangles sharing the original edge.
This immediately shows that the curves associated to the vertices of a fixed triangle generate the Kauffman algebra. 

Consider the curves $\gamma,\delta$ and $\xi$ given in Subsection \ref{punctured-torus}. Suppose that they correspond to $\infty,0$ and $1$ respectively in the Farey triangulation. For any curve $\eta$ such that $T^\eta_r$ is trigonometric, we define its degree as the highest $k\ge 0$ such that $F_{k}^{\eta}$ or $F_{-k}^{\eta}$ is non-zero: it is also equal to the number of non-vanishing diagonals of the matrix of $T^\eta_r$. Then we finally prove Theorem \ref{thmprincipal} in the case of the punctured torus.

\begin{theorem*}
For any simple curve $\eta$ on the once punctured torus, the operator $T_r^{\eta}$ is trigonometric, its degree is equal to the geometric intersection of $\eta$ and $\gamma$, and denoting by $\sigma^\eta$ its $\psi$-symbol, we have
$$\sigma^{\gamma}=f_{\gamma}+\left(\frac{1}{2i}\sum_e\frac{\partial^2f_{\gamma}}{\partial \theta_e\partial \tau_e}\right)\hbar+o(\hbar)$$
\end{theorem*}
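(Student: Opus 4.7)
The plan is to induct along the Farey triangulation of simple closed curves on the punctured torus. The three curves $\gamma, \delta, \xi$ satisfy property $(*)$ by Lemma \ref{derive-tore} and correspond to the vertices $\infty, 0, 1$ of a fixed triangle, so it suffices to propagate $(*)$ from any triangle, across one of its edges, to the fourth vertex of the neighboring triangle; connectedness of the Farey flip graph then yields $(*)$ for every simple closed curve.

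For the propagation step I would argue as follows. Let $a, b, c$ be the vertices of a Farey triangle already known to satisfy $(*)$, and let $d$ be the fourth vertex of the adjacent triangle sharing the edge $bc$. Since the simple curves labeled by $b$ and $c$ meet in exactly one point, the skein relation at a single crossing gives $b \cdot c = A\, d + A^{-1} a$ (up to relabeling of $a$ and $d$) in the Kauffman algebra $K(\Sigma, -e^{i\pi\hbar/2})$, so $d$ is an explicit $\C[A^{\pm 1}]$-linear combination of $b \cdot c$ and $a$. By Proposition \ref{produit}, $b \cdot c$ satisfies $(*)$; since $(*)$ is closed under $\C[[\hbar]]$-linear combinations, $d$ satisfies $(*)$ as well.

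For the degree statement I would run essentially the same induction but tracking the Fourier degree. The product formula \eqref{starproduit} is sub-additive in the degree, so $\deg T^{bc}_r \leq \deg T^b_r + \deg T^c_r = i(b,\gamma) + i(c,\gamma)$. Whenever $d$ is the \emph{outer} child in the Farey construction, so that $d = [p+r, q+s]$ with $b = [p,q]$ and $c = [r,s]$ having second coordinates of the same sign, one has $i(d,\gamma) = |q+s| = |q|+|s| = i(b,\gamma)+i(c,\gamma)$, and subtracting a multiple of $a$ preserves the bound $\deg T^d_r \leq i(d,\gamma)$. For the matching lower bound I would specialize at $\hbar=0$: by $(*)$ this yields $\sigma^d|_{\hbar=0} = f_d$ in the action-angle coordinates $(\tau,\theta)$, and the trace function $f_d$ is a trigonometric polynomial in $\theta$ whose exact Fourier degree equals $i(d,\gamma)$; this last fact is visible at the base by Lemma \ref{toregeom} and propagates under Farey flipping by the Poisson image of the same argument in $K(\Sigma,-1)\simeq C^\infty(\boM(\Sigma,\alpha))$. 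Non-vanishing of the top Fourier coefficient of $f_d$ then forces $\deg T^d_r \geq i(d,\gamma)$.

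The main obstacle I anticipate is bookkeeping: deciding consistently which of the two new vertices plays the role of $d$ versus $a$ on each side of a Farey edge, tracking the signs in $A = -e^{i\pi\hbar/2}$ through the skein relation, and ensuring that the lower bound on the degree holds at every step (in principle the top Fourier mode of $f_d$ could vanish for special values of the puncture holonomy $\alpha$, so one must either restrict to a generic $\alpha$ or rule out this degeneracy from the explicit form of $f_d$). Once these details are in place, the induction is entirely formal and all of the analytic content is concentrated in the base case of Lemma \ref{derive-tore} and in Proposition \ref{produit}.
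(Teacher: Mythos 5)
Your proposal follows essentially the same route as the paper's own proof: induction over the Farey triangulation with base case $\gamma,\delta,\xi$, the single-crossing skein relation (which the paper writes as $\eta=A^{-1}\gamma_1\cdot\gamma_2-A^{-2}\gamma_3$), Proposition \ref{produit} together with $\C[[\hbar]]$-linearity of property $(*)$, and additivity of intersection numbers with $\gamma$ for the degree. Your lower bound on the degree via the nonvanishing top Fourier mode of the trace function $f_d$ is in fact a slightly more careful version of the paper's one-line assertion that the degree of $\gamma_3$ is strictly smaller than the sum of the degrees, so the argument is correct and matches the paper.
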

\begin{proof}
If $\eta$ corresponds to $\infty,0$ or $1$, it is proved by the direct computations of Subsection \ref{punctured-torus}.
Given any $\eta$ parameterized by a vertex $q=[a,b]$ in the Farey triangulation, let $n$ be number of edges crossed by the geodesic $[p,q]$ where $p$ is a point in the triangle $[0,1,\infty]$: we call it the depth of $\gamma$. We observe that the geometric intersection of $\eta$ with $\gamma$ is equal to $|b|$: we call this quantity the degree of $\eta$. 
Then we show the theorem by induction on the depth. Suppose that it holds for all curves with depth less than $n$ and choose $q$ of depth $n+1$. Then the last edge crossed by the geodesic $[p,q]$ joins two curves $\gamma_1,\gamma_2$ of depth $n$. Let $\gamma_3$ be the symmetric of $\eta$ with respect to the geodesic $[\gamma_1,\gamma_2]$: this vertex has depth $n-1$ and we have $\eta=A^{-1}\gamma_1\cdot\gamma_2-A^{-2}\gamma_3$. As the degree of $\gamma_3$ is strictly less than the sum of the degrees of $\gamma_1$ and $\gamma_2$, we obtain that the degree of $\eta$ is the sum of the degrees of $\gamma_1$ and $\gamma_2$. A direct application of Proposition \ref{produit} shows that the theorem holds for $\eta$ of depth $n+1$ and the theorem is proved.
\end{proof}

Let $\Sigma$ be a sphere with four marked points $p_1,p_2,p_3,p_4$. We can use the same argument as before using the following trick: identify $\Sigma$ with the quotient $\R^2/\Z^2\rtimes\Z_2$ where $\Z^2\rtimes\Z_2$ acts on $\R^2$ by $(n_1,n_2,\epsilon).(x_1,x_2)=(n_1+\epsilon x_1,n_2+\epsilon x_2)$ and we have $p_1=[(0,0)],p_2=[(1/2,0)],p_3=[(0,1/2)]$ and $p_4=[(1/2,1/2)]$.

The double cover $T=\R^2/\Z^2$ is a ramified double cover of $\Sigma$. Any simple curve $\gamma$ in $\Sigma\setminus\{p_1,p_2,p_3,p_4\}$ lifts to a curve in $T$. We associate to it the corresponding slope in the Farey triangulation. If two curves $\gamma_1$ and $\gamma_2$ on $\Sigma$ meet twice, the corresponding vertices in the Farey triangulation form an edge.
Finally, one can write in $K(\Sigma,A)$ the formula $\gamma_1\cdot\gamma_2=A^2\delta_1+A^{-2}\delta_2+R$ where $\delta_1,\delta_2$ are the non-trivial curves obtained by smoothing $\gamma_1\cup\gamma_2$ and $R$ is a combination of boundary curves which can be treated as constants.
Hence, we deduce the following theorem:
 \begin{theorem*}
For any simple curve $\eta$ on the 4 times punctured sphere, the operator $T_r^{\eta}$ is trigonometric, its degree is equal to half the geometric intersection of $\eta$ and $\zeta$, and denoting by $\sigma^\eta$ its $\psi$-symbol, we have
$$\sigma^{\gamma}=f_{\gamma}+\left(\frac{1}{2i}\sum_e\frac{\partial^2f_{\gamma}}{\partial \theta_e\partial \tau_e}\right)\hbar+o(\hbar)$$
\end{theorem*}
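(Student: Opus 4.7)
The plan is to mirror the once-punctured-torus proof: establish property $(*)$ directly for three curves corresponding to the three vertices of a fixed triangle of the Farey triangulation, then propagate $(*)$ to all simple curves on $\Sigma$ by induction, using Kauffman-algebra skein identities together with Proposition \ref{produit}.

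For the base case, I would take the three simple curves $\zeta$, $\eta$, and $\xi$ (the one obtained from $\eta$ by a half Dehn twist along $\zeta$, as in Figure \ref{demi-twist}); under the identification of simple curves on $\Sigma$ with rational slopes via the double cover $T=\R^2/\Z^2$, these correspond to the vertices $\infty$, $0$, $1$ of the Farey triangulation. Property $(*)$ for these three curves is exactly the content of Proposition \ref{spherep} and Lemma \ref{derive-sphere}, which gives the required trigonometric structure, the first-order expansion of the $\psi$-symbol, and the agreement of the leading term with the trace function via Lemma \ref{spheregeom}.

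For the inductive step, assign to each simple curve $\lambda$ its \emph{depth}, namely the number of Farey edges crossed by a geodesic joining an interior point of the base triangle $[0,1,\infty]$ to the Farey vertex representing $\lambda$. Given $\lambda$ of depth $n+1$, the last Farey edge crossed joins two curves $\lambda_1,\lambda_2$ of depth at most $n$, and the opposite vertex of the new triangle represents a curve $\lambda_3$ of depth $n-1$. The decisive input, recalled from \cite{bp}, is the skein identity
$$\lambda_1\cdot\lambda_2=A^{2}\lambda+A^{-2}\lambda_3+R\quad\text{in }K(\Sigma,A),$$
where $R$ is a $\C[A^{\pm 1}]$-combination of peripheral curves around the four punctures. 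Peripheral curves act as scalars (specifically by $-2\cos(\pi c_i/r)$) on $V_r(\Sigma,c_r)$, so they trivially satisfy $(*)$. Solving for $\lambda$, applying Proposition \ref{produit} to $\lambda_1\cdot\lambda_2$, and using the induction hypothesis on $\lambda_1,\lambda_2,\lambda_3$ together with $\C[[\hbar]]$-linearity of $(*)$, one concludes that $\lambda$ itself satisfies $(*)$. For the degree claim, observe that $\lambda_3$ and $R$ contribute in Fourier degrees strictly smaller than $\deg\lambda_1+\deg\lambda_2$, so $\deg\lambda=\deg\lambda_1+\deg\lambda_2$; a short check via lifts in the double cover shows this equals $\tfrac{1}{2}i(\lambda,\zeta)$.

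The main obstacle is the careful verification of the skein identity with the correct coefficients and remainder $R$ on the $4$-punctured sphere. The $A^{\pm 2}$ exponents (as opposed to $A^{\pm 1}$ in the torus case) reflect the fact that two curves at the endpoints of a Farey edge meet \emph{twice} on $\Sigma$, as they meet once in the double cover $T$ and the non-trivial deck involution is ramified at the punctures. One must also verify that in the resolution of the two intersection points there are exactly two non-peripheral resolutions (corresponding to the two Farey neighbors) and that all other smoothings collapse to boundary curves contributing only to $R$. Once this combinatorial verification and the Farey-slope dictionary are settled, the structural propagation via Proposition \ref{produit} proceeds exactly as in the torus case.
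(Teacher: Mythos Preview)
Your proposal is correct and follows essentially the same route as the paper: the paper sets up the double-cover dictionary between simple curves on the $4$-punctured sphere and Farey vertices, records the skein identity $\gamma_1\cdot\gamma_2=A^{2}\delta_1+A^{-2}\delta_2+R$ with $R$ a combination of boundary (peripheral) curves, and then simply says ``Hence, we deduce the following theorem'', leaving the inductive Farey argument implicit. Your write-up fills in exactly that induction (base case $\zeta,\eta,\xi$ via Proposition~\ref{spherep} and Lemma~\ref{derive-sphere}, propagation via Proposition~\ref{produit}, degree count via the Farey combinatorics), so there is nothing to add.
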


\section{Curve operators as Toeplitz operators}\label{toe}

In this section we will establish the fact that the matrices $T^\gamma_r$ defined in the preceding section are the matrices
of Toeplitz operators on the sphere, associated to symbols with specific regularity.
\subsection{Toeplitz operators}
\subsubsection{Matrices of Toeplitz operators}
In this section we will consider the quantization of the sphere in a very down-to-earth way.
Given an integer $N$, we define the space $\mathcal H_N$  of polynomials in the complex variable $z$ of order less than
$N-1$ and set 
\be\label{basis}
\la P,Q\ra=\frac{i}{2\pi}\int_\C\frac{P(z) \ba{Q(z)}}{(1+|z|^2)^{N+1}}dzd\bar z\quad \text{and}\quad\vp^N_n (z)=\sqrt{\frac{N!}{n!(N-1-n)!}}z^n
\ee
The vectors $(\vp^N_n)_{n=0\dots N-1}$ form an orthonormal basis of $\mathcal H_N$.

By the stereographic projection $$S^2\ni (\tau,\theta)\in [0,1]\times S^1\to z=\sqrt{\frac \tau{1-\tau}}e^{i\theta}\in \C\cup\{\infty\},$$ 
The space $\mathcal H_N$ can be seen as a space of functions on the sphere (with a specific behavior at the north pole).
Write $d\mu_N$ the measure $\frac{i}{2\pi}\frac{dzd\bar z}{(1+|z|^2)^{N+1}}$. As the space of analytic functions in $L^2(\C,d\mu_N)$,  the space $\mathcal  H_N$ is closed.

For $z_0\in \C$, we define the coherent state \be\label{repro}\rho_{z_0}(z)=N(1+\bar z_0 z)^{N-1}.\ee These vectors satisfy $\la f,\rho_{z_0}\ra=f(z_0)$ for any $f\in \mathcal H_N$ and the orthogonal projector $\pi_N:L^2(\C,d\mu_N)\to \mathcal H_N$ satisfies $(\pi_N\psi)(z)=\la \psi,\rho_z\ra$.

For $f\in \ci (S^2,\R)$ we define an operator $T_f:\mathcal H_N\to\mathcal H_N$ by the equation $T_f\psi=\pi_N(f\psi)$. A Toeplitz operator on $S^2$ is a sequence of operators $(T_N)\in \mathrm{End} (\mathcal H_N)$ such that 
there exists a sequence $f_k\in\ci(S^2,\R)$ such that for any integer $M$ the operator $R^M_N$ defined by the equation $$T_N=\sum_{k=0}^M N^{-k}T_{f_k}+R_N^M$$ is a bounded operator whose norm satisfies $||R_M||=O(N^{-M-1})$.

An easy use of the stationary phase Lemma shows that the (anti-)Wick symbol of $T_f$, namely $\frac {\la T_f\rho_z,\rho_z\ra}{\la\rho_z,\rho_z\ra}$ satisfies
\be\label{lap}
\frac {\la T_f\rho_z,\rho_z\ra}{\la\rho_z,\rho_z\ra}=f+ \frac 1  N \Delta_S f +O(N^{-2}),
\ee
where $\Delta_S=(1+\vert z\vert^2)^2\partial_z\partial_{\bar z}$ is the Laplacian on the sphere.

\subsubsection{From matrix elements to the total symbol}\label{mateletotal}
In this section we give an explicit formula for computing the exact Toeplitz symbol out of the matrix elements of a given operator expressed by a matrix 
in the basis $\vp_n^N$ defined by \eqref{basis}. The key idea is the following remark.

The matrix element of an operator of the form $T_f:\mathcal H_N\to\mathcal H_N$ between $\vp_n^N$ and $\vp_m^N$ is given by the formula

\begin{eqnarray*}\label{me1}
F_{m,n}&=&\langle T_f\vp_n^N,\vp_m^N\rangle\\
&=&\notag{}\!\!
\frac{i}{2\pi}\int f(z,\bar z)z^n\bar z^m\frac{dzd\bar z}{(1+\vert z\vert^2)^{N+1}}\sqrt{\frac{N!}{n!(N-1-n)!}\frac{N!}{m!(N-1-m)!}}
\end{eqnarray*}
If we suppose that $f$ is real valued, the matrix of $T_f$ is hermitian and setting $z=\sqrt\rho e^{i\theta}$ we can expand $f$ into Fourier series:

$$f(\sqrt\rho e^{i\theta},\sqrt\rho e^{-i\theta})=\sum_{\mu\in\Z}f_\mu(\rho) e^{i\mu\theta}.$$ 
where $f_{-\mu}=\ba{f_\mu}$. Therefore we have that:
\be\label{me2}
\left\{\begin{array}{rcl}
 F_{n,n+\mu}&=&\int_0^\infty f_\mu(\rho)\rho^{n+\frac{\mu}2}\frac{d\rho}{(1+\rho)^{N+1}}\frac{N!}{\sqrt{n!(N-1-n)!((n+\mu)!(N-1-n-\mu)!}}\\
\ba{F_{n,n-\mu}}&=&\int_0^\infty f_\mu(\rho)\rho^{n-\frac{\mu}2}\frac{d\rho}{(1+\rho)^{N+1}}\frac{N!}{\sqrt{n!(N-1-n)!((n-\mu)!(N-1-n+\mu)!}}
\end{array}\right.
\ee

Let us define the Mellin transform $\mathcal M f$ of $f$ as
\be\label{mellin}
\mathcal Mf(s)=\int_0^\infty x^s f(x)\frac{dx}x
\ee
and its inverse
\be\label{mellinv}
\mathcal M^{-1}F(x)=\frac 1{2\pi i}\int_{c-i\infty}^{c+i\infty} x^{-s} F(s)ds.
\ee
We easily get
\begin{proposition}\label{mel1}
\be\label{me3}
f_\mu(\rho)=\frac{1}{2i\pi}\rho^{-1-\frac\mu 2}(1+\rho)^{N+1}
\int_{c-i\infty}^{c+i\infty} F_{s,s+\mu}\rho^{-s}C(N,\mu,s)ds
\ee
and
\be\label{me3bis}
\ba{f_\mu(\rho)}=\frac{1}{2i\pi}\rho^{-1+\frac\mu 2}(1+\rho)^{N+1}
\int_{c-i\infty}^{c+i\infty} F_{s,s-\mu}\rho^{-s}C(N,-\mu,s)ds
\ee
for a convenient value of $c\in \R$ and where 
$$C(N,\mu,s)=\frac{\sqrt{\Gamma(s+1)\Gamma(s+\mu+1)\Gamma(N-s)\Gamma(N-s-\mu)}}{\Gamma(N+1)}$$
\end{proposition}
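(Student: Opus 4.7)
The plan is to recognize the integrals in \eqref{me2} as values of a Mellin transform at integer points, and then recover $f_\mu$ by Mellin inversion. Set
$$g_\mu(\rho)=\frac{f_\mu(\rho)\rho^{\mu/2}}{(1+\rho)^{N+1}},$$
so that the first line of \eqref{me2} reads
$$\int_0^\infty g_\mu(\rho)\,\rho^{n}\,d\rho \;=\; F_{n,n+\mu}\,\frac{\sqrt{n!(N-1-n)!(n+\mu)!(N-1-n-\mu)!}}{N!}.$$
Rewriting factorials as Gamma values, the right-hand side is exactly $F_{n,n+\mu}\,C(N,\mu,n)$, so according to the definition \eqref{mellin} we have
$$\mathcal{M}g_\mu(n+1)\;=\;F_{n,n+\mu}\,C(N,\mu,n)\qquad\text{for every integer }n\in\{0,\dots,N-1-\mu\}.$$
The whole point of the statement is to view the right-hand side as the value at $s=n$ of the meromorphic function $s\mapsto F_{s,s+\mu}\,C(N,\mu,s)$, where $F_{s,s+\mu}$ is the analytic extension defined by that very relation.

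Next I would apply the inverse Mellin formula \eqref{mellinv}. Thus, for a vertical line $\mathrm{Re}(s)=c+1$ lying in the strip of holomorphy of $\mathcal{M}g_\mu$,
$$g_\mu(\rho)\;=\;\frac{1}{2\pi i}\int_{c+1-i\infty}^{c+1+i\infty}\rho^{-s}\,\mathcal{M}g_\mu(s)\,ds \;=\;\frac{1}{2\pi i}\int_{c+1-i\infty}^{c+1+i\infty}\rho^{-s}\,F_{s-1,s-1+\mu}\,C(N,\mu,s-1)\,ds.$$
Shifting the integration variable $s\mapsto s+1$ and multiplying by $\rho^{-\mu/2}(1+\rho)^{N+1}$ to recover $f_\mu$ from $g_\mu$ gives exactly the formula \eqref{me3}. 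The derivation of \eqref{me3bis} is identical, starting from the second line of \eqref{me2}, where $\mu$ is replaced by $-\mu$ in the index of $C$ while the conjugation on the left accounts for the symmetry $f_{-\mu}=\overline{f_\mu}$.

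The only non-routine points are existence and location of the contour. First, $F_{s,s+\mu}\,C(N,\mu,s)$ must be holomorphic on some vertical strip $c<\mathrm{Re}(s)<c'$, which holds because the Gamma factors produce only shifted poles on the real axis and because the integer samples $F_{n,n+\mu}C(N,\mu,n)$ are bounded: one chooses any $c$ with $-1<c<N-1-\mu$ and avoids the poles of the four Gamma functions entering $C$. Second, to apply Mellin inversion one needs decay of the integrand along vertical lines. This follows from Stirling's estimate
$$|\Gamma(c+it)|\;\sim\;\sqrt{2\pi}\,|t|^{c-1/2}\,e^{-\pi|t|/2}\qquad(|t|\to\infty),$$
together with the fact that the matrix elements $F_{n,m}$ of a bounded operator grow at most polynomially in $n,m$; so $F_{s,s+\mu}$ (defined via analytic interpolation from bounded integer values through the Carlson-type rigidity coming from the Mellin representation of $g_\mu$) does not spoil the rapid decay coming from the four half-powers of Gamma in $C(N,\mu,s)$. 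This is the technical hurdle; once it is checked, the contour deformations implicit in the formulas \eqref{me3}, \eqref{me3bis} are automatic and the proposition is proved.
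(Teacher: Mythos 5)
Your proof is correct and is exactly the argument the paper intends: the paper states the proposition with no written proof ("we easily get"), the point being precisely that \eqref{me2} identifies $F_{n,n+\mu}C(N,\mu,n)$ with the Mellin transform of $f_\mu(\rho)\rho^{\mu/2}(1+\rho)^{-(N+1)}$ at integer points, so \eqref{me3} and \eqref{me3bis} are just Mellin inversion after the unit shift of the variable. Your discussion of the contour and vertical-line decay is looser than it could be (if $F_{s,s+\mu}$ is read as the extension defined by the Mellin transform itself, the decay comes from the regularity of $f_\mu$ rather than from Stirling plus interpolation), but this matches the level of precision of the statement, whose "convenient value of $c$" defers such issues to the hypotheses of Theorem \ref{totalsymbolabs}.
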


Write $C(N,\mu,s)=E(N,\mu,s)\Pi(N,s)$ where 
$$\Pi(N,s)=\frac{\Gamma(s+1)\Gamma(N-s)}{\Gamma(N+1)}.$$
We compute that 
\begin{equation}\label{defE}
E(N,\mu,s)=\left\{
\begin{array}{c}
\big(\prod_{k=1}^\mu \frac{s+k}{N-s-k}\big)^{1/2}\text{ if }\mu>0\\
\big(\prod_{k=1}^{-\mu} \frac{N-s-1+k}{s+1-k}\big)^{1/2}\text{ if }\mu<0\\
1\text{ if }\mu=0
\end{array}\right.
\end{equation}
\begin{theorem}\label{totalsymbolabs}
Fix $N$ and $k$ two integers with $k<N$. Let $T$ be an operator in $\mathcal H_N$ whose matrix elements 
are $F_{m,n}$ in the basis $\{\vp_n^N\}_{n=0\dots N-1}$ and vanish for $|m-n|> k$. Suppose that the following assumptions hold:
\begin{itemize}
\item[-] For any integer $\mu$ such that $|\mu|\le k$, there is an analytic extension of $F_{s,s+\mu}E(N,\mu,s)$ to a holomorphic  function $G(\mu,s)$ on the strip $ M_1+k/2<\re(s)+\mu/2<M_2-k/2$.
\item[-] 
For all $s$ and $\mu$, $\ba{G(-\mu,s+\mu)}E(N,\mu,\ba{s})^2=G(\mu,\ba{s})$
\item[-] There are constants $\alpha,\beta$ such that $|G(\mu,s)|\le \alpha \exp(\beta|\im s|)$ for all $s$ such that $ M_1+k/2<\re(s)+\mu/2<M_2-k/2$. Moreover $\beta$ satisfies $\beta<\pi$.
\end{itemize}
Suppose that $M_1<-k-1$ and $M_2>N+k$. Then the formulas of Proposition \ref{mel1} define a function $f$ on the sphere such that $T=T_{f}$. Moreover $f$ is 
\begin{itemize}
\item[-]of class $C^{\infty}$ on the sphere minus the poles,
\item[-]of class $C^{M_S}$ at the south pole  where $M_S<-M_1-1-k$
\item[-]of class $C^{M_N}$ at the north pole where $M_N<M_2-N-k$.
\end{itemize}
\end{theorem}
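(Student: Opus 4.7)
The plan is to use the formulas of Proposition \ref{mel1} as a \emph{definition} of the candidate symbol $f$ and then verify three things: that the construction makes sense, that $T = T_f$, and that the regularity at and away from the poles is as claimed. The analytic extension $G(\mu, s)$ supplied by the hypotheses is the essential input, because it replaces the discrete matrix data by a holomorphic function on a wide strip, and contour deformation inside that strip produces the asymptotic expansion at each pole of the sphere.

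First I would set, for each $|\mu| \leq k$,
$$f_\mu(\rho) = \frac{1}{2i\pi}\,\rho^{-1-\mu/2}(1+\rho)^{N+1}\int_{c-i\infty}^{c+i\infty} G(\mu, s)\,\Pi(N, s)\,\rho^{-s}\,ds,$$
choosing the contour on a vertical line $\re(s) = c$ that lies simultaneously in the strip of analyticity of $G(\mu,\cdot)$, namely $M_1 + k/2 - \mu/2 < c < M_2 - k/2 - \mu/2$, and in $(-1, N)$ where $\Pi(N, \cdot)$ is holomorphic; the hypotheses $M_1 < -k-1 < -1$ and $M_2 > N + k > N$ guarantee that this intersection is non-empty. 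Stirling's formula gives $|\Pi(N, c+it)| = O(|t|^{N} e^{-\pi|t|})$ on such a line, which combined with $|G(\mu, s)| \leq \alpha e^{\beta|\Im s|}$ and $\beta < \pi$ yields an integrand decaying like $e^{-(\pi - \beta)|t|}$. This guarantees absolute convergence, smoothness of $f_\mu$ on $(0, \infty)$ by differentiation under the integral, and the freedom to deform the contour inside the common strip. The symmetry hypothesis on $G$ then gives $\overline{f_\mu} = f_{-\mu}$, so $f(\rho, \theta) = \sum_{|\mu|\leq k} f_\mu(\rho)\,e^{i\mu\theta}$ is real.

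To verify $T = T_f$ I would apply Mellin inversion. Formula \eqref{me2} identifies $\langle T_f \varphi_n^N, \varphi_{n+\mu}^N\rangle$ (up to the factor $C(N,\mu,n)^{-1}$) with the Mellin transform of $f_\mu(\rho)\rho^{\mu/2}/(1+\rho)^{N+1}$ at the point $n + 1$, while by construction $f_\mu$ is the inverse Mellin transform of $G(\mu, s)\Pi(N, s)$ up to the same prefactors. Evaluating at integer $s = n$ and using $G(\mu, n) = F_{n, n+\mu}E(N, \mu, n)$ together with the factorisation $C(N, \mu, n) = E(N, \mu, n)\Pi(N, n)$ makes the two sides coincide. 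The hypothesis $F_{m, n} = 0$ for $|m - n| > k$ matches the fact that we only retained $|\mu| \leq k$ in $f$, so all matrix entries agree and $T = T_f$.

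The hardest step is the regularity analysis at the two poles. At $\rho = 0$ I would push the contour to $\re(s) = c'$ just above $M_1 + k/2 - \mu/2$, picking up residues of $\Pi(N, s)$ at $s = -1, -2, \ldots, -J$, each contributing a term of the form $G(\mu, -j)\,\mathrm{Res}_{-j}\Pi\cdot\rho^{j - 1 - \mu/2}(1+\rho)^{N+1}$. The crucial cancellation comes from the explicit formula \eqref{defE}: when $\mu > 0$ one has $E(N, \mu, -j) = 0$ for $1 \leq j \leq \mu$, hence $G(\mu, -j) = 0$ at these points, and exactly the contributions that would produce powers of $\rho$ below $\rho^{|\mu|/2}$ drop out. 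The surviving residues give $\rho^{|\mu|/2}$ times a polynomial factor (a smooth function of $(z, \bar z)$ near $z = 0$), and the remainder is $\rho^{|\mu|/2}\,O(\rho^{M_S})$ for any $M_S < -M_1 - 1 - k$, yielding the announced $C^{M_S}$ regularity at the south pole. The north-pole analysis is symmetric: push the contour to the right, pick up residues of $\Pi$ at $s = N, N+1, \ldots$, and use the zeros of $E(N, -|\mu|, N+j)$ for $0 \leq j < |\mu|$ to kill the spurious leading terms, obtaining $C^{M_N}$ regularity for $M_N < M_2 - N - k$. The main obstacle is precisely this bookkeeping: matching the integer zeros of the square-root prefactor $E$ to the poles of $\Pi$ crossed by the contour so that $f_\mu(\rho)/\rho^{|\mu|/2}$ extends smoothly at each pole in the right differentiability class. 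The bound $\beta < \pi$ is essential throughout, because without it Stirling's decay of $\Pi$ would not dominate the growth of $G$ and no contour deformation would be legal.
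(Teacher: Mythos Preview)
Your proposal is correct and follows essentially the same route as the paper: define $f_\mu$ by the inverse-Mellin formula of Proposition~\ref{mel1}, use the estimate $|\Pi(N,c+it)|\sim |t|^N e^{-\pi|t|}$ together with $\beta<\pi$ for convergence and smoothness on $(0,\infty)$, then deform the contour toward $M_1$ (resp.\ $M_2$), collect residues of $\Pi$, and exploit the vanishing of $G(\mu,\cdot)$ at the relevant integer points to factor out $\rho^{|\mu|/2}$. One small clarification: your step ``$E(N,\mu,-j)=0$, hence $G(\mu,-j)=0$'' is not automatic from the factorisation $G=F\cdot E$ alone, since $F$ is only an analytic continuation; the paper obtains this vanishing from the second hypothesis $\overline{G(-\mu,s+\mu)}E(N,\mu,\bar s)^2=G(\mu,\bar s)$, which forces $G(\mu,s)=0$ wherever $E(N,\mu,s)^2=0$ (this is the content of the Remark preceding the proof).
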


\begin{remark}
The second hypothesis is an enhancement of the Hermitian condition $F_{\ba{s},\ba{s}+\mu}=\ba{F_{s+\mu,s}}$, observing that for $s\in \R, E(N,-\mu,s+\mu)=E(N,\mu,s)^{-1}$. 
It follows that $G(\mu,s)$ vanishes when $E(N,\mu,s)^2$ does, that is if $s+\mu\in\{0,\ldots,N-1\}$ but $s\notin\{0,\ldots,N-1\}$.
\end{remark}

\begin{proof}

Let us start with the following remark,  also useful in Section \ref{genus2}.
\begin{remark}\label{derder}
In order that a function $f$ on the sphere, regular away from the poles, defines an operator through the formula $T=T_f$ at fixed value of $N$, we only need that all the matrix elements exist, which means that the function has to be integrable at the poles. This property is reflected by the holomorphy conditions on the functions $G(\mu,s)$ as we are going to see now.
\end{remark}

Using many times the formula $\Gamma(s+1)=s\Gamma(s)$ and the reflection formula $\Gamma(s)\Gamma(1-s)=\frac{\pi}{\sin(\pi s)}$ we get 

\be\label{pii}\Pi(N,s)=\frac{\pi s(1-s)(2-s)\cdots(N-1-s)}{\sin(\pi s)N!}\ee
If $\re(s)$ remains bounded and writing $\xi=\im s$, we deduce the following uniform estimate: 
\be\label{estimpi}
\Pi(N,s)\underset{|\xi|\to\infty}{\sim}\frac{\pi |\xi|^N}{e^{\pi|\xi|}N!}
\ee
Rewriting the first formula of Proposition \ref{mel1} we have 
\be\label{mel3}
f_\mu(\rho)=\frac{1}{2i\pi}\rho^{-1-\frac\mu 2}(1+\rho)^{N+1}
\int_{c-i\infty}^{c+i\infty} G(\mu,s)\rho^{-s}\Pi(N,s)ds
\ee

The estimation $|G(\mu,s)|\le \alpha\exp(\beta |s|)$ implies that the integral in the proposition is well defined and smooth for $\rho\in (0,\infty)$ as soon as $\beta<\pi$. In order to get the regularity property we need to show that
\begin{itemize}
\item[-] $f_\mu\rho^{-\frac\mu2}\in C^{M_S}$ for $\rho\sim 0$.
\item[-] $f_\mu\rho^{+\frac\mu2}\in C^{M_N}$ for $\rho\sim \infty$.
\end{itemize}
We do the proof for $\rho\sim 0$, the case $\rho\sim\infty$ being the same.

Taking the constant $c=M_1+\frac{k-\mu}{2}+\epsilon$ we get by the residue formula
$$f_\mu(\rho)=-\rho^{-1-\frac\mu2}(1+\rho)^{N+1}\sum_{c<\ell<0} G(\mu,l)
\binom{N-l-1}{-l-1}
\rho^{-\ell}+O(\rho^{-M_1-1-\frac{k}{2}-\epsilon})$$
If $\mu>0$, we have that $G(\mu,l)=0$ if $l<0$ and $l+\mu\ge 0$. Hence the first non-zero term in the sum is for $l=-1-\mu$. This shows that one can factor $\rho^{\frac \mu 2}$ in the sum as expected. If $\mu<0$, the first non zero residue is for $l=-1$ and this time, one can factor the term $\rho^{-\frac \mu 2}$.

We deduce the $C^{M_S}$ regularity property where $M_S$ is the highest integer with $M_S<-M_1-1-k$.   
\end{proof}

\subsubsection{Asymptotic expansion}

\begin{theorem}\label{asymptotic}
Let $T^N$ be a sequence of operators in $\mathcal H_N$ whose matrix elements are denoted by $F^N_{n,n+\mu}$. Let $g(\mu,\tau,\frac 1 N)$ be a family of functions holomorphic in $\tau$ satisfying 
\be\label{coeffs}
g(\mu,\frac{n}{N},\frac{1}{N})=F_{n,n+\mu}E(N,\mu,n)\textrm{ for all }n\text{ such that }n,n+\mu\in \{0,\ldots,N-1\}
\ee
We suppose that 
\begin{itemize}
\item[-] There exists $M_1<0<1<M_2$ such that $g(\mu,\tau,\frac 1 N)$ are holomorphic on the strip $M_1+\frac{k}{2N}<\re(\tau)+\frac{\mu}{2N}<M_2-\frac{k}{2N}$ and have an asymptotic expansion when $N$ goes to infinity which is uniform on compact sets. 
\item[-] $\ba{g(-\mu,\tau+\frac{\mu}{N},\frac 1 N)}E(N,\mu,N\ba{\tau})^2=g(\mu,\ba{\tau},\frac{1}{N})$ for all $N$ and $\tau$.
\item[-] There are constants $\alpha,\beta$ such that one has 
$$|g(\mu,\tau,\frac 1 N)|\le \alpha e^{\beta|\im \tau|}\text{ for }\tau\text{ in the domain}.$$  
\end{itemize}
Denote by $f^N$ the sequence of functions given by Theorem \ref{totalsymbolabs} so that $T^N=T_{f^N}$. 
Then $f^N$ has an asymptotic expansion of the form 
$$f^N\sim\sum_{n=0}^{\infty}f^{(n)} N^{-n}$$
where $f^{(n)}$ are smooth functions on $S^2$ and $f^{(0)}(\tau,\theta)=\sum_\mu f^{(0)}_\mu(\tau)e^{i\mu\theta}$ where $f_\mu^{(0)}(\tau)=\lim\limits_{N\to \infty}F_{\lfloor N\tau\rfloor ,\lfloor N\tau\rfloor+\mu}$.
 In other terms, the family $T^N$ is a family of Toeplitz operators on the sphere whose principal symbol is $f^{(0)}$. 
\end{theorem}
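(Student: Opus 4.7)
The plan is to apply the saddle-point method to the Mellin formula of Theorem~\ref{totalsymbolabs}. Substituting $\tau=s/N$ rewrites
$$f^N_\mu(\rho)=\frac{N}{2i\pi}\rho^{-1-\mu/2}(1+\rho)^{N+1}\!\int g(\mu,\tau,1/N)\rho^{-N\tau}\Pi(N,N\tau)\,d\tau$$
on a vertical contour $\Re\tau=c/N$. Stirling's formula applied uniformly on compact subsets of the strip yields
$$\rho^{-N\tau}\Pi(N,N\tau)=\sqrt{\frac{2\pi\tau}{N(1-\tau)}}\,e^{N\Phi(\tau)}\Bigl(1+\sum_{k\ge 1}a_k(\tau)N^{-k}\Bigr),$$
with phase $\Phi(\tau)=\tau\ln(\tau/\rho)+(1-\tau)\ln(1-\tau)$, whose unique critical point is $\tau_*=\rho/(1+\rho)$, satisfying $\Phi(\tau_*)=-\ln(1+\rho)$ and $\Phi''(\tau_*)=(1+\rho)^2/\rho>0$; crucially $\tau_*$ is the latitude coordinate on $S^2$ at the point with $|z|^2=\rho$.

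The holomorphy of $g(\mu,\cdot,1/N)$ on the prescribed strip, together with the growth hypothesis $|g|\le\alpha e^{\beta|\Im\tau|}$ with $\beta<\pi$ and the decay \eqref{estimpi}, allows deforming the contour to $\Re\tau=\tau_*$ without picking up residues. The integrand is then Gaussian-peaked around $\tau_*$ and the standard saddle-point asymptotic expansion applies, with amplitude $g(\mu,\tau,1/N)\sqrt{2\pi\tau/(N(1-\tau))}$ multiplied by the Stirling tail. Using the identity $\sqrt{\tau_*/((1-\tau_*)\Phi''(\tau_*))}=\tau_*$, the leading contribution collapses against the prefactor $\rho^{-1-\mu/2}(1+\rho)^{N+1}$ into
$$f^N_\mu(\rho)=\rho^{-\mu/2}g(\mu,\tau_*,1/N)+O(N^{-1}),$$
and iterating the saddle-point expansion together with the Taylor expansion of $g$ in $1/N$ produces a full asymptotic $f^N_\mu(\rho)\sim\sum_{k\ge 0}N^{-k}f^{(k)}_\mu(\tau_*)$ with $f^{(k)}_\mu$ smooth on $(0,1)$.

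For the principal symbol, the elementary limit $E(N,\mu,N\tau_*)\to(\tau_*/(1-\tau_*))^{\mu/2}=\rho^{\mu/2}$ read from \eqref{defE}, combined with the interpolation \eqref{coeffs}, gives $\rho^{-\mu/2}g(\mu,\tau_*,0)=\lim_N F^N_{\lfloor N\tau_*\rfloor,\lfloor N\tau_*\rfloor+\mu}$, so summing the Fourier series in $\mu$ (finite by the banded structure of $T^N$) recovers $f^{(0)}(\tau,\theta)=\sum_\mu f^{(0)}_\mu(\tau)e^{i\mu\theta}$; together with $T^N=T_{f^N}$ and the asymptotic expansion of $f^N$ in powers of $1/N$, this expresses $T^N$ as a Toeplitz operator with principal symbol $f^{(0)}$. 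The main obstacle is uniformity across the whole sphere: the saddle-point expansion has to be controlled uniformly as $\tau_*$ ranges over compact subsets of $(0,1)$, and the coefficients $f^{(k)}_\mu$ must extend smoothly across the poles $\tau\in\{0,1\}$. The latter follows from the Hermitian-symmetry hypothesis $\overline{g(-\mu,\tau+\mu/N,1/N)}E(N,\mu,N\overline\tau)^2=g(\mu,\overline\tau,1/N)$: since $E(N,\mu,s)^2$ vanishes at the integer points where $s\in\{0,\dots,N-1\}$ while $s+\mu\notin\{0,\dots,N-1\}$, this forces $g(\mu,\cdot,1/N)$ to vanish at the endpoints $\tau=0$ and $\tau=1-\mu/N$ to the order needed for each $f^{(k)}$ to be smooth on all of $S^2$.
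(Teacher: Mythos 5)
Your treatment of the interior of the sphere is sound and is essentially the paper's argument in a different technical dress: you run a saddle-point analysis on the Mellin representation of Proposition \ref{mel1} with the rescaled variable $\tau=s/N$, using Stirling for $\Pi(N,N\tau)$, whereas the paper keeps the representation $\Pi(N,s)=\int_0^\infty u^s(1+u)^{-N-1}du$ and applies a two-dimensional stationary phase with Hessian determinant exactly $1$, after cutting off the large imaginary frequencies with \eqref{estimpi}. On compact subsets of $\tau_*\in(0,1)$ the two routes give the same expansion and your leading-order bookkeeping ($\Phi''(\tau_*)=1/(\tau_*(1-\tau_*))$, collapse of the prefactors to $\rho^{-\mu/2}g(\mu,\tau_*,1/N)$, identification of $f^{(0)}_\mu$ via \eqref{coeffs} and \eqref{defE}) is correct.

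The genuine gap is at the poles, where the theorem demands that every coefficient $f^{(k)}$ be smooth on all of $S^2$. First, your Stirling-based expansion degenerates as $\tau_*\to 0$ or $1$ (the Gaussian width $\sqrt{\tau_*(1-\tau_*)/N}$ and the amplitude collapse, and Stirling is not uniform when $N\tau_*=O(1)$), so the expansion you control on compacts of $(0,1)$ says nothing by itself about behavior at the poles; the paper's choice of the Beta-integral representation is made precisely because its stationary phase is uniform up to $\rho\to 0$. Second, your mechanism for smoothness at the poles is misstated and insufficient. With the single formula \eqref{me3} you get $f^N_\mu\approx\rho^{-\mu/2}g(\mu,\tau_*,1/N)$ for \emph{all} $\mu$, so for $\mu>0$ smoothness at the south pole would require $g(\mu,\cdot)$ to vanish to order $\mu$ at $\tau=0$. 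But $E(N,\mu,s)^2=\prod_{k=1}^{\mu}\frac{s+k}{N-s-k}$ vanishes at $s=-1,\dots,-\mu$, i.e.\ where $s+\mu\in\{0,\dots,N-1\}$ but $s\notin\{0,\dots,N-1\}$ (the opposite of what you wrote), so the symmetry hypothesis forces $g(\mu,\cdot,1/N)$ to vanish only at $\tau=-1/N,\dots,-\mu/N$ (and, for $\mu<0$, at $\tau=1,1+1/N,\dots$), not at ``$\tau=0$ and $\tau=1-\mu/N$''. Moreover, vanishing of $g(\mu,\cdot,1/N)$ at points collapsing to $0$ at rate $1/N$ yields, upon expanding $g=\sum_j g^{(j)}N^{-j}$, only coupled relations among the Taylor coefficients of the $g^{(j)}$ at $0$ (e.g.\ $g^{(0)}(0)=0$, $g^{(1)}(0)=\partial_\tau g^{(0)}(0)$, \dots), not order-$\mu$ vanishing of each $g^{(j)}$; so smoothness of the higher coefficients $f^{(k)}$ across the poles does not follow from your one-line argument. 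The paper sidesteps all of this: near the south pole it uses the second formula \eqref{me3bis} (i.e.\ works with $f_{-\mu}$, $\mu\ge 0$), so the prefactor is $\rho^{+\mu/2}$ and the coefficients are manifestly smooth there, and symmetrically it uses \eqref{me3} near the north pole; the Hermitian-symmetry hypothesis is used only to guarantee that the two representations define the same real symbol, not to produce endpoint vanishing. To repair your proof you should either adopt this two-representation device, or supply a genuine argument (uniform in $\tau_*$ up to the endpoints) that the combined saddle-point corrections and the lattice-point vanishing conspire to make every $f^{(k)}$ smooth at the poles.
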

\begin{proof}

Suppose that $\mu\ge 0$ and consider the second formula of Proposition \ref{mel1}:
$$f_{-\mu}^N(\rho)=\frac{1}{2i\pi}\rho^{-1+\frac\mu 2}(1+\rho)^{N+1}
\int_{c-i\infty}^{c+i\infty} g(-\mu,\frac{s}{N},\frac 1 N)\rho^{-s}\Pi(N,s)ds.$$

Change the variable $s=-1+\al$ with $\al=\frac\rho{1+\rho}+i\xi, \xi\in\R$.

We get 
$$f_{-\mu}^N(\rho)=\frac{1}{2i\pi}(1+\rho)^{N+1}\rho^{\mu/2}
\int g(-\mu,\frac{-1+\al}{N},\frac 1 N) \rho^{-\al}\Pi(N,-1+\al)d\al$$
Let us define $\chi_M\in\ci(\R)$ such that $\chi_M(\xi)=1$ for $|\xi|<M$ and $\chi_M(\xi)=0$ for $|\xi|>M+1$. 
Performing a decomposition of the identity \\$1=\chi_M(\xi/N)+(1-\chi_M(\xi/N))$, 
we write $$f_{-\mu}^N(\rho)=\frac{1}{2i\pi}(1+\rho)^{N+1}\rho^{\mu/2}(J(N,\rho)+I(N,\rho)).$$

We first consider the integral $I(N,\rho)=$
$$\rho^{-\frac\rho{1+\rho}}
\int \!\!g(-\mu,\frac{-1+\frac\rho{1+\rho}+i\xi}{N},\frac 1 N) \rho^{-i\xi}\Pi(N,-1+\frac\rho{1+\rho}+i\xi)(1-\chi_M(\xi/N))d\xi$$

We  show that $|(1+\rho)^{N+1}I(N,\rho)|=O(N^{-\infty})$ uniformly for $\rho$ in compacts subsets of $[0,\infty)$. 
Indeed we have
$$ |I(N,\rho)|\le \!\int\limits_{|\xi|>MN}\!\!C e^{\frac \beta N |\xi|}|\Pi(N,-1+\frac\rho{1+\rho}+i\xi)|d\xi\le C_1\! \int\limits_{|\xi|>MN} \!\!e^{\frac \beta N |\xi|}\frac{|\xi|^N}{e^{\pi |\xi |}N!}d\xi$$
Here we used the estimate for $\Pi(N,-1+i\xi)$ given in Equation \eqref{estimpi}.
We compute explicitly 
$$\int_{|\xi|>MN} e^{\frac \beta N |\xi|}\frac{|\xi|^N}{e^{\pi |\xi |}N!}=\frac{2}{(\pi-\frac\beta N)^{N+1}}\frac {1} {N!} \int_{M(N\pi-\beta)}^\infty u^N e^{-u}du$$

Let $N$ be big enough such that $M(N\pi-\beta)\ge (M+1) N$ and $\pi-\frac \beta N \ge 1$. We get $|I(N,\rho)|\le C_2 (M+1)^Ne^{-MN}$ as an application of the following elementary lemma:
\begin{lemma}
For $a\ge 1$, we have $\frac{1}{N!}\int_{a N}^\infty u^N e^{-u}d u\le a^N e^{(1-a)N}$.
\end{lemma}
This shows that $(1+\rho)^{N+1}I(N,\rho)$ is $O(N^{-\infty})$ uniformly for $\rho\le \frac{e^M}{2M}$ and hence on any compact sets as $M$ can be chosen arbitrarily big.

For $J(N,\rho)$, remarking that 
$\Pi(N,s)=\frac{\Gamma(s+1)\Gamma(N-s)}{\Gamma(N+1)}=\int_0^{\infty}\frac{u^s d u}{(1+u)^{N+1}}$ if $-1<\re s<N$ and multiplying $\al$ by $N$,  we get 

$$J(N,\rho)=N\int g(-\mu,-1/N+\al,\frac 1 N) 
\rho^{-N\al}\int_0^\infty \frac{u^{-1+N\al} d u}{(1+u)^{N+1}}\chi(\Im\al)d\al$$
Putting $u=\rho v$ in the second integral and using Fubini Theorem we get 
\begin{eqnarray*}
J(N,\rho)&=&
N\int d\al\int_0^\infty dv\, g(-\mu,-1/N+\al,\frac 1 N)\chi(\Im\al) \frac{v^{-1+N\al}}{(1+\rho v)^{N+1}}\\
&=&
N\int d\al\int_0^\infty dv\, g(-\mu,-1/N+\al,\frac 1 N) \chi(\Im\al)\frac{e^{N(\al\ln v-\ln(1+\rho v))}}{v^{}(1+\rho v)}\label{oufff}
\end{eqnarray*}

The phase $\Phi(\al,v)=\al\ln v-\ln(1+\rho v)$ is stationary for $v=1$ and $\al=\frac{\rho}{1+\rho}$, that is $v=1,\xi=0$.
Moreover the determinant of the Hessian at the stationary point is equal to 
$1$
; therefore we can apply the stationary phase lemma uniformly up to $\rho\to 0$ for the part on the integral near the stationary point. 
For the rest of the integral we notice that $\Phi(\frac\rho{1+\rho}+i\xi,v)=-\log{(1+\rho)}+\Phi_1(\xi,v)$ with $\Re\Phi_1< 0$ for $\vert\xi\vert >0$. Therefore the non-stationary phase lemma
 gives  that the contribution
to $J(N,\rho)$ given by $\vert\xi\vert >0$ in the integral \eqref{oufff} is $O((1+\rho)^{-N}N^{-\infty})$

We get 
$$J(N,\rho)\sim\frac{2i\pi}N (1+\rho)^{-(N+1)}\sum_{k=0}^\infty N^{-k}g_k(-\mu,\frac{\rho}{1+\rho})$$
where $g_0=g(-\mu,\frac{\rho}{1+\rho},0)$ and $g_k(-\mu,\tau)$ are smooth functions of  $\tau\in[0,+\infty)$.

Putting everything together we get 

$$f^N_{-\mu}(\rho)\sim\rho^{\mu/2}\sum_{k=0}^\infty N^{-k}g_k(-\mu,\frac{\rho}{1+\rho})$$
This proves that for any $k$, the function $f_k(\tau,\theta)=\sum_{\mu}\rho^{-\mu/2}g_k(-\mu,\tau)e^{i\mu\theta}$ is a smooth function at the south pole.

From Equation \eqref{coeffs} we get $G(-\mu,\tau,\frac 1 N)=F_{N\tau,N\tau-\mu}E(N,-\mu,N\tau)$ and from Equation \eqref{defE} we have $E(N,-\mu,N\tau)=(\frac{1-\tau}{\tau})^{\mu/2}=\rho^{-\mu/2}$. 
Hence, $f^N_\mu(\rho)\simeq f^{(0)}_\mu(\tau)$ as claimed in the theorem.

\end{proof}

\subsection{Curve operators as Toeplitz operators}\label{toe2}
Let us relate Theorem \ref{totalsymbolabs} to TQFT and curve operators.
Let $\Sigma$ be either the once punctured torus or the 4 times punctured sphere. Let $r$ be an integer and $\cc$ be an admissible coloring of the marked points of $\Sigma$. We consider the following two cases:
\begin{enumerate}
\item 
If $\Sigma$ is a torus, let $r$ be the level and $a$ the (odd) color of the marked point. The basis of $V_r(\Sigma,a)$ is parametrized by an integer $m\in(a/2,r- a/2)$. Hence its dimension is $N=r-a$.
We identify the basis $\phi_m$ of $V_r(\Sigma,a)$ and the basis $\phi_n^N$ of $\mathcal H_N$ by setting $m=n+\frac{a+1}{2}$ for $0\le n<N$.
\item
 If $\Sigma$ is a sphere, we write $\cc=(a,b,c,d)$. The basis of $V_r(\Sigma,\cc)$ is parametrized by an integer $m$ satisfying $\max(|a-d|,|b-c|)<m<\min(a+d,2r-a-d,b+c,2r-b-c)$ and $m+1=a+d=b+c\mod 2$. Its dimension is $N$ where 
 $$N=\frac{1}{2}\big(\min(a+d,2r-a-d,b+c,2r-b-c)-\max(|a-d|,|b-c|)\Big).$$
 We  index the basis $\phi_m$ of $V_r(\Sigma,\cc)$ and $\phi_n^N$ of $\mathcal H_N$ by setting $m=\max(|a-d|,|b-c|)+1+2n$ for $0\le n<N$.
\end{enumerate}
In any case, we define an isomorphism $I_r:V_r(\Sigma,\cc)\simeq \mathcal{H}_{N}$.
Fix a curve $\gamma$ on $\Sigma$. The curve operator $T^\gamma_r$ gives a matrix $I_rT^\gamma_r I_r^{-1}\in \textrm{End}(\mathcal{H}_{N})$ that we also denote by $T^{\gamma}_r$ for short.
We first prove in Subsection \ref{analytic-continuation} the following theorem:

\begin{theorem}\label{thmtoe2}
Let $\Sigma$ be either the once punctured torus or the 4 times punctured sphere, $r$ a level and $\cc$ an admissible coloring of the marked points. Let $N=\dim V_r(\Sigma,\cc)$ and let $I_r:V_r(\Sigma,\cc)\simeq \mathcal H_N$ be the isomorphism described above. 

Let $\gamma$ be a curve on $\Sigma$ of degree $k$. 
Suppose that the quantity $M_P=\frac{a}{2}-1-k$ is non negative in the torus case and that the quantities 
$M_S=M-1-k$ and $M_N=r-m-k$ are non negative in the sphere case where $M=\frac{1}{2}\max(|a-d|,|c-d|)$ and $m=\frac{1}{2}\min(a+d,2r-a-d,b+c,2r-b-c)$.

Then, there is a function $f^\gamma_r$ on the sphere such that $$T^\gamma_r=T_{f^\gamma_r}.$$
This function $f^{\gamma}_r$ is smooth away from the poles and is 
\begin{itemize}
\item[-] of class $\lfloor M_P\rfloor$ at the poles in the torus case.
\item[-] of class $\lfloor M_S\rfloor$ at the south pole and of class $\lfloor M_N\rfloor$ at the north pole in the case of the 4 times punctured sphere.
\end{itemize}
\end{theorem}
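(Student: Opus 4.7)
The strategy is to apply Theorem \ref{totalsymbolabs} directly to the matrix elements $F_{n,n+\mu}^\gamma$ of $T_r^\gamma$ in the canonical basis of $V_r(\Sigma,\cc)$, after transport by the isomorphism $I_r$. By Theorem \ref{thmprincipal} these matrix elements are trigonometric: there exist smooth functions $F_\mu$, vanishing for $|\mu| > k$, such that $F_{n,n+\mu}^\gamma = F_\mu(\pi n/r,\pi/r)$ in the reindexed basis. Moreover, from the base cases of Propositions \ref{torep} and \ref{spherep} together with the inductive Kauffman-algebra argument of Proposition \ref{produit}, each $F_\mu$ is an algebraic-trigonometric expression in $\sin$ applied to affine functions of $s$ and $\hbar=\pi/r$, possibly under a single square root.

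The first step is to identify, for each $|\mu|\le k$, a holomorphic extension of $s\mapsto F_{s,s+\mu}^\gamma E(N,\mu,s)$ to a strip of the form $M_1+k/2 < \re(s)+\mu/2 < M_2-k/2$. The denominators of $F_{s,s+\mu}$ are products of $\sin(\pi(s+c)/r)$ for explicit half-integer shifts $c$ coming from the formulas in Propositions \ref{torep} and \ref{spherep}. A direct inspection shows that the zeros of these denominators closest to the physical range $[0,N-1]$ sit at distance $(a-1)/2$ (torus) respectively $M$ and $r-m$ (sphere) from the endpoints; the admissibility conditions on $\cc$ guarantee these buffers are strictly positive and set the values of $M_1$ and $M_2$. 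Any zeros of the argument of the square root in $F_\mu$ that land inside the strip are matched by the zeros of $E(N,\mu,s)$ (which occur at $s=-1,\ldots,-\mu$ or $s=N-1,\ldots,N-\mu$ depending on the sign of $\mu$), so that the product $G(\mu,s):=F_{s,s+\mu}^\gamma E(N,\mu,s)$ extends to a single-valued holomorphic function on the strip.

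The three hypotheses of Theorem \ref{totalsymbolabs} are then verified as follows. Holomorphy on the strip is the content of the preceding paragraph. The symmetry $\overline{G(-\mu,s+\mu)}\,E(N,\mu,\bar s)^2=G(\mu,\bar s)$ follows from the Hermiticity $\overline{F_{n+\mu,n}^\gamma}=F_{n,n+\mu}^\gamma$ (extended to complex $s$ by the reflection principle), combined with the elementary identity $E(N,-\mu,s+\mu)E(N,\mu,s)=1$ which is immediate from \eqref{defE}. For the exponential growth, each factor $\sin(\pi(s+c)/r)$ grows at rate $\pi/r$ in $|\im s|$, and after taking a finite product, a ratio, and a square root one still has $|F_{s,s+\mu}^\gamma|=O(e^{\pi|\im s|/r})$; since $E(N,\mu,s)$ is rational in $s$, this gives $|G(\mu,s)|\le\alpha e^{\beta|\im s|}$ with $\beta=\pi/r<\pi$, as required.

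Applying Theorem \ref{totalsymbolabs} thus produces a function $f_r^\gamma$ on $S^2$, smooth away from the poles, with $T_r^\gamma=T_{f_r^\gamma}$. The regularity at the poles is read off from the strip width through the formulas $M_S<-M_1-1-k$ and $M_N<M_2-N-k$: in the torus case $M_1=-(a-1)/2$ and $M_2=N+(a-1)/2$ by the buffer computation above, giving the stated $\lfloor M_P\rfloor$; in the sphere case the analogous computation yields $\lfloor M_S\rfloor$ and $\lfloor M_N\rfloor$. The main obstacle I expect is the bookkeeping of the square-root branches and the precise matching of zeros of the sine products in $F_{s,s+\mu}^\gamma$ against the zeros and poles of $E(N,\mu,s)$; this has to be carried out carefully enough to certify that $G(\mu,s)$ is genuinely holomorphic on the full maximal strip, since it is exactly the width of this strip that determines the pole regularity asserted in the theorem.
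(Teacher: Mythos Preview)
Your proposal is correct and follows essentially the same route as the paper: define $G(\mu,s)=F_{s,s+\mu}^\gamma E(N,\mu,s)$, verify the hypotheses of Theorem~\ref{totalsymbolabs} on the generators, and propagate them to arbitrary curves via the Kauffman-algebra induction along the Farey triangulation. Two small points where the paper is more precise: first, the inductive step is not Proposition~\ref{produit} (which concerns the $\psi$-symbol and its first-order expansion) but a separate product lemma for the $G$-functions themselves, namely $G^{12}(\mu,s)=\sum_\nu G^1(\nu,s)G^2(\mu-\nu,s+\nu)$, from which holomorphy on the strip of width $k_1+k_2$ and the growth bound follow directly; second, the exponential rate is not uniformly $\pi/r$ but rather $\beta_\gamma/r$ with $\beta_\gamma$ accumulating additively under products, which still gives $\beta_\gamma/r<\pi$ for the fixed curve $\gamma$.
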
 

\subsubsection{Analytic continuation of matrix elements}\label{analytic-continuation}
Let $\gamma$ be a curve of degree $k$ in $\Sigma$. Theorem \ref{thmprincipal} states that there exists coefficients $F_{n,n+\mu}$ for $|\mu|\le k$ such that $T_r^\gamma\phi_n=\sum_{\mu}F_{n,n+\mu}\phi_{n+\mu}$.

The formulas for $F_{n,n+\mu}$ are rational expressions involving square roots, trigonometric functions, the parameter $n$, the coloring $\cc$ and the level $r$. There is an obvious extension of $F_{n,n+\mu}$ to a multivalued analytic function $F_{s,s+\mu}$ whose domain contains $[0,N-1]\cap ([0,N-1]-\mu)$. 
\begin{definition}\label{Gcoeff}
For any curve $\gamma$, set 
\be\label{eq:Gcoeff}
G^{\gamma}(\mu,s)=F_{s,s+\mu}E(N,\mu,s).
\ee
\end{definition}
We will also use the following definition in order to estimate $G(\mu,s)$.
\begin{definition}
Let $f(s)$ be some meromorphic function in a domain of the form $x<\re(s)<y$. Given some $\beta\ge 0$ we will say that $f$ has order $\beta$ if there exists $M,\alpha,\alpha'>0$ such that  for any $s$ with $|\Im(s)|\ge M$ and $x<\re(s)<y$ we have 
$$\alpha e^{\beta|\im(s)|}\le |f(s)|\le \alpha'e^{\beta |\im(s)|}.$$
\end{definition}
\begin{proposition}\label{analyticTQFT}
Let $\Sigma$ be either a once-punctured torus or a 4 times punctured sphere. Let $\cc$ be an admissible coloring of of level $r$. Let $N$ be the dimension of $V_r(\Sigma,\cc)$ and consider the isometric spaces $V_r(\Sigma,\cc)\simeq \mathcal{H}_{N}$.

For any curve $\gamma\subset\Sigma$ of degree $k$, let $G^{\gamma}(\mu,s)$ be the analytic functions associated to $T^\gamma_r$ by Definition \ref{Gcoeff}. Then 
\begin{enumerate}
\item $G^{\gamma}(\mu,s)$ is holomorphic for  $M_1+k/2<\re(s)+\mu/2<M_2-k/2$. 
\item There exists $\beta$ depending only on $\gamma$ such that for all $\mu$, $G(\mu,s)$ have order $\frac{\beta}{r}$.
\end{enumerate}
Where $M_1=-a/2$ and $M_2=N-1+a/2$ in the torus case. In the sphere case, $M_1=-M$ and $M_2=N-1+r-M$ where $M=\frac{1}{2}\max(|a-d|,|b-c|)$.
\end{proposition}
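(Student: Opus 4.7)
The plan is to reduce to the explicit formulas of Section \ref{small} via the same Farey-triangulation induction that was used to prove Theorem \ref{thmprincipal}. I would first verify the proposition for the base curve operators, namely $T_r^\gamma$ and $T_r^\delta$ on the punctured torus (computed in Proposition \ref{torep}) and $T_r^\zeta$ and $T_r^\eta$ on the 4-times punctured sphere (computed in Proposition \ref{spherep}), and then establish closure of the class of operators whose $G(\mu,s)$ satisfies assertions (1) and (2) under the operations appearing in the recursion.

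For the base cases, each off-diagonal matrix entry $F_{s,s+\mu}$ is a square root of a ratio of sines and integer translates of $s$, and $E(N,\mu,s)$ is itself the square root of a rational function of $s$. The combined function $G^\gamma(\mu,s)^2 = F_{s,s+\mu}^2\,E(N,\mu,s)^2$ is an honest meromorphic function whose zeros and poles can be enumerated one factor at a time. The factors introduced by $E(N,\mu,s)^2$ are designed precisely so that, inside the strip $M_1 + k/2 < \Re(s) + \mu/2 < M_2 - k/2$, every simple zero and pole of $F_{s,s+\mu}^2$ is paired with a compensating factor of the same order from $E(N,\mu,s)^2$; the endpoints $M_1 = -a/2$, $M_2 = N-1+a/2$ (and their analogues in the sphere case) are exactly the values that exclude from the strip those uncompensated branch points which come from the admissibility constraint on the marked-point colors. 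Consequently $G^\gamma(\mu,s)^2$ is a perfect square of a meromorphic function on the strip, and $G^\gamma(\mu,s)$ itself is single-valued and holomorphic there. The order bound is then immediate from the elementary estimate $|\sin(\pi(s+c)/r)| \asymp e^{\pi |\Im s|/r}$ at bounded $\Re s$: a finite product/quotient of such factors has order proportional to $1/r$, and the polynomial factor $E(N,\mu,s)$ contributes only subexponential growth.

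For the inductive step, I would show that assertions (1) and (2) are stable under the two operations used in the Kauffman-algebra recursion of Section \ref{small}. Closure under multiplication follows from the product formula \eqref{starproduit}: matrix entries of $T_r^\gamma T_r^\delta$ are finite sums of shifted products of entries of the factors, so the corresponding $G$-functions are finite sums of products of shifted holomorphic $G$-functions, with orders adding. Closure under the Dehn twist $\tw^\gamma$ on the torus and under the half-twist on the sphere follows from their explicit action on the canonical basis by the phases $e^{i\pi(n^2-1)/(2r)}$ and $H(c;a,b)$, respectively: each is entire in $s$ and, on any vertical strip of bounded real part, of order $O(1/r)$, so holomorphy and the order bound persist under multiplication. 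The Farey-triangulation induction from Section \ref{small} then propagates assertions (1) and (2) from the base curves to every simple closed curve, and $\C[[\hbar]]$-linearity extends this to multicurves.

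The main obstacle is the base-case branch-point accounting: one must check by direct trigonometric computation that the square roots appearing in Propositions \ref{torep} and \ref{spherep} combine with $E(N,\mu,s)$ to yield a single-valued holomorphic function on precisely the strip prescribed, and in particular that the required cancellations survive all the way to the boundary values $M_1$ and $M_2$. Once this verification is in hand, the remainder of the argument is essentially formal.
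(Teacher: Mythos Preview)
Your proposal is correct and follows essentially the same route as the paper: verify the proposition on the explicit generators of the Kauffman algebra, establish a product lemma showing that properties (1) and (2) are stable under composition (with $k=k_1+k_2$ and orders adding, via the formula $G^{12}(\mu,s)=\sum_\nu G^1(\nu,s)G^2(\mu-\nu,s+\nu)$), and then propagate to all curves by the Farey induction of Section~\ref{small}. The only organizational difference is that the paper checks three base curves directly ($\gamma,\delta,\xi$ on the torus, and analogously on the sphere), whereas you check two and obtain the third via a separate ``closure under (half-)twist'' step; since conjugation by the twist multiplies matrix entries by the entire phase $e^{i\pi(2n\mu+\mu^2)/(2r)}$ of order $O(1/r)$, this is an equally valid way to reach the same starting triangle.
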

\begin{proof}
The proof consists in checking it for the generators and use Kauffman relations.
The second property is obvious in all the explicit formulas so we will not deal with it.

{\bf Torus case}

Let $\gamma$ be the curve of Figure \ref{fig:torep}. Then $\mu=0$ and $E(N,0,s)=1$. The two statements are obvious from the following expression:
$$G^\gamma_N(0,s)=-2\cos(\frac{\pi}{2r}(2s+a+1)).$$ 

The curve $\delta$ satisfies 
$$G^\delta_N(1,s)=-\sqrt{\frac{\sin(\frac{\pi}{r}(s+a+1))\sin(\frac{\pi}{r}(s+1))}
{\sin(\frac{\pi}{2r}(2s+a+1))\sin(\frac{\pi}{2r}(2s+a+3))}}E(N,1,s).$$
The square of the first factor has poles at $s=-(a+1)/2$ and $s=-(a+3)/2$ modulo $r$ and zeroes at $s=-1$ and $s=r-a-1$ modulo $r$. 
On the other hand, one has $E(N,1,s)=\sqrt{\frac{s+1}{N-s-1}}$. It follows that $G^{\delta}(1,s)$ is holomorphic for $-\frac{a+1}{2}<s<r-\frac{a+3}{2}$ as expected.

The same kind of computation shows that $G^{\delta}_N(-1,s)$ is holomorphic for $-\frac{a-1}{2}<s<r-\frac{a+1}{2}$. Finally, the formula $$G^{\xi}_N(\pm 1,s)=G^{\delta}_N(\pm 1,s)\exp\Big(\frac{i\pi}{2r}(\pm 2s+a+2)\Big)$$ shows that the analytic properties of $G^{\delta}$ and $G^{\xi}$ are the same.

We observe that we can set $M_1=-a/2$ and $M_2=r-a/2-1=N-1+a/2$. It is clear that all the $G$ functions involved here have ordre at most $\frac{\pi}{2r}$.

{\bf Sphere case}\\
It is very similar to the torus case so that we do not give more details. The general case follows from the following lemma:

\begin{lemma}
Let $T^1,T^2$ be two operators in $\mathcal{H}_N$ of respective degrees $k_1,k_2$ whose coefficients can be analytically continued that is, there is a family $G^i(\mu,s)$ for $i=1,2$ and $|\mu|\le k_i$ such that Equation \eqref{eq:Gcoeff} is satisfied and the $G^i(\mu,s)$ satisfy the properties of Theorem \ref{analyticTQFT}.

Then, the product $T^2T^1$ has the same property, meaning that its coefficients can be analytically continued via functions $G^{12}(\mu,s)$ satisfying the properties of Theorem \ref{analyticTQFT} with $k=k_1+k_2$.
\end{lemma}
\begin{proof}
From the matrix multiplication we get for any $n\in\Z$ the following formula: 
\be\label{produit1}
F^{12}_{n,n+\mu}=\sum_{\nu}F^1_{n,n+\nu}F^2_{n+\nu,n+\mu}
\ee
Using the identity $E(N,\mu,s)=E(N,\nu,s)E(N,\mu-\nu,s+\nu)$ we get:
\be\label{produit2}
G^{12}(\mu,s)=\sum_\nu G^1(\nu,s)G^2(\mu-\nu,s+\nu).
\ee
Take $\mu$ with $|\mu|\le k$ and consider one term of the sum with $|\nu|\le k_1$ and $|\mu-\nu|\le k_2$. 

The factor $G^1(\nu,s)$ is holomorphic if $M_1+\frac{k_1}2<\re(s)+\nu/2<M_2-\frac{k_1}2$, while the second factor is holomorphic if $M_1+\frac{k_2}2 <\re(s+\frac{\nu}2)+\frac{\mu-\nu}2<M_2-\frac{k_2}2$. We deduce that if $M_1+\frac k 2<\re(s)+\frac{\mu}2<M_2-\frac k 2$, the product is holomorphic, proving the first point.

It is obvious from Equation \eqref{produit1} that if $G(\nu,s)$ has order $\frac{\beta_1}{r}$ and $G^2(\nu,s)$ has order $\frac{\beta_2}r$ for all $\mu$ and $\nu$, then $G(\mu,s)$ has order $\frac{\beta_1+\beta_2}r$. This proves the last point. 

\end{proof}
The proof of Theorem \ref{analyticTQFT} follows from the particular cases and the product formula by an induction very similar to the one in the proof of Theorem \ref{thmprincipal}.
\end{proof}

\subsubsection{Asymptotic regime}\label{asymptoregime}

Let $\Sigma$ be either the punctured torus or the 4 times punctured sphere. Let $D$ be an odd level and $\cc$ be an admissible coloring of the marked points of $\Sigma$. For any odd integer $\br$, set $r=D\br$. The coloring $\br\cc$ is admissible, hence the family of vector spaces $V_{r}(\Sigma, \br\cc)$ is well-defined and its dimension grows linearly. 
To be more precise, consider the two cases we handle:
\begin{enumerate}
\item 
If $\Sigma$ is a torus, let $D$ be the level and $a$ the color of the marked point. The space $V_r(\Sigma,a\br)$ has dimension $\br\Delta$ where $\Delta=D-a$.
\item
 If $\Sigma$ is a sphere, we write $\cc=(a,b,c,d)$. The space $V_r(\Sigma,\br\cc)$ has dimension $\br\Delta$ where 
 $$\Delta=\frac{1}{2}\big(\min(a+d,2D-a-d,b+c,2D-b-c)-\max(|a-d|,|b-c|)\big)$$
\end{enumerate}
In any case, we define a sequence of isomorphisms $I_r:V_r(\Sigma,\br\cc)\simeq \mathcal{H}_{\br\Delta}$ by identifying the canonical hermitian basis of both spaces. 

\begin{theorem}\label{curvetoplitz}
Let $\Sigma$ be either the once punctured torus or the 4 times punctured surface and $\gamma$ be a curve in $\Sigma$. By Theorem \ref{analyticTQFT}, for any $r=\br D$, let $f^{\gamma}_r$ be the function on the standard sphere such that $T^\gamma_r=T_{f^\gamma_r}$. 

Suppose that in the sphere case one has either $a\ne d$ or $b\ne c$. Then the sequence $f^\gamma_r$ as an asymptotic expansion as 
$$f^\gamma_r=\sum_{n\ge 0}^{\infty} r^{-n} f^\gamma_{(n)}$$ where the functions $f^{\gamma}_{(n)}$ are $\ci$ on the sphere and $f^{\gamma}_{(0)}=f_{\gamma}$ 
In other words, the family $T^\gamma_r$ is a Toeplitz operator with principal symbol $f_\gamma$. 
\end{theorem}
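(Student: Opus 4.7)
The plan is to deduce this theorem from the general asymptotic expansion result Theorem \ref{asymptotic}, applied to the analytically-continued matrix elements produced by Proposition \ref{analyticTQFT}. By Theorem \ref{thmtoe2} one already has $T^\gamma_r=T_{f^\gamma_r}$ for every $\br$ large enough (so that $M_P,M_S,M_N\geq 0$), so only the asymptotic expansion in $1/r$ and the identification of the leading term with $f_\gamma$ remain. First I would set $g(\mu,\tau,1/N):=G^\gamma(\mu,N\tau)$ so that the interpolation condition \eqref{coeffs} of Theorem \ref{asymptotic} is automatic.

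It then suffices to verify the three hypotheses of Theorem \ref{asymptotic}: (i) the holomorphy strip for $g$, with bounds $M_1'<0<1<M_2'$, follows from the holomorphy strip of Proposition \ref{analyticTQFT} after rescaling, since the bounds $M_1,M_2$ there scale linearly with $\br$, so $M_1/N$ and $M_2/N$ converge to fixed constants; the admissibility conditions, together with the hypothesis $a\neq d$ or $b\neq c$ in the sphere case (which forces $M:=\max(|a-d|,|b-c|)/2>0$), give $M_1'<0<1<M_2'$; (ii) the Hermitian identity on $g$ is a direct translation of the hermiticity of $T^\gamma_r$ via the elementary relation $E(N,-\mu,s+\mu)=E(N,\mu,s)^{-1}$ on the real axis; (iii) the exponential bound $|g(\mu,\tau,1/N)|\leq\alpha e^{\beta|\Im\tau|}$ with $\beta<\pi$ comes from the order $\beta_0/r$ estimate of Proposition \ref{analyticTQFT}(2), which after rescaling yields order $\beta_0\Delta/D$ in $\tau$. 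The last point (iii) is the main obstacle: one must verify by induction through the Kauffman product formula \eqref{produit2} that the order remains controlled, checking on the generators of Propositions \ref{torep} and \ref{spherep} that the only source of exponential growth comes from outer trigonometric factors with $\beta_0\leq\pi$, and using the strict inequality $\Delta<D$ (torus case: $\Delta=D-a\leq D-1$; sphere case: $\Delta\leq D/2$ by admissibility) to obtain $\beta_0\Delta/D<\pi$.

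With the three hypotheses established, Theorem \ref{asymptotic} produces an asymptotic expansion $f^\gamma_r\sim\sum_n N^{-n}f^\gamma_{(n)}$ with smooth symbols, as the rescaled $g(\mu,\tau,1/N)$ depends smoothly on $(\tau,1/N)$ on the holomorphy strip (being built out of elementary trigonometric expressions in $\tau$ and $1/\br$). Since $N=\br\Delta$ and $r=\br D$ differ by the fixed constant ratio $\Delta/D$, this is equivalent to the expansion $\sum_n r^{-n}f^\gamma_{(n)}$ of the statement. The principal symbol is then $f^\gamma_{(0)}(\tau,\theta)=\sum_\mu\big(\lim_{\br\to\infty}F_{\lfloor N\tau\rfloor,\lfloor N\tau\rfloor+\mu}\big)e^{i\mu\theta}$, and by Theorem \ref{thmprincipal} this Fourier decomposition is precisely that of $-\tr\rho(\gamma)=f_\gamma$ in the action-angle coordinates $(\tau,\theta)$ on $p^{-1}(\mathrm{Int}(U))\subset\mathcal M(\Sigma,t)$, so $f^\gamma_{(0)}=f_\gamma$ as claimed.
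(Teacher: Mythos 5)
Your route is the same as the paper's: deduce Theorem \ref{curvetoplitz} by applying Theorem \ref{asymptotic} to the rescaled functions $g^\gamma(\mu,\tau,\frac1N)=G^\gamma(\mu,N\tau)$ furnished by Proposition \ref{analyticTQFT}, with the hypothesis $a\ne d$ or $b\ne c$ used exactly to make the rescaled strip satisfy $M_1<0<1<M_2$, and with the principal symbol identified via Theorem \ref{thmprincipal}. The only ingredient the paper singles out beyond Proposition \ref{analyticTQFT} is an ``analytic version'' of Theorem \ref{thmprincipal}, i.e.\ that $g^\gamma(\mu,\tau,\frac1N)$ admits a uniform asymptotic expansion in $1/N$ on compact subsets of the complex strip; your remark that $g$ is built from elementary trigonometric expressions, checked on the generators and stable under the Kauffman product, is exactly how the paper disposes of this point.

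One step of your verification is however wrong as stated, though harmlessly so. You insist on the growth bound with $\beta<\pi$ and justify it by claiming $\beta_0\le\pi$ for every curve. But the order constant of Proposition \ref{analyticTQFT} is additive under the product formula \eqref{produit2} (a product of operators of orders $\frac{\beta_1}{r}$ and $\frac{\beta_2}{r}$ has order $\frac{\beta_1+\beta_2}{r}$), so for curves of large degree $\beta_0$ exceeds $\pi$ and your inequality $\beta_0\Delta/D<\pi$ fails in general. Fortunately no such threshold is required where you need it: the condition $\beta<\pi$ enters only in the fixed-$N$ Theorem \ref{totalsymbolabs} (where it holds because the order $\frac{\beta_\gamma}{r}$ is small once $r$ is large, which is what Theorem \ref{thmtoe2} uses), whereas in Theorem \ref{asymptotic} the hypothesis is a bound $\alpha e^{\beta|\Im\tau|}$ with an arbitrary fixed $\beta$: after the substitution $s=N\tau$ the effective exponent is $\beta/N$, which is eventually dominated by the $e^{-\pi|\Im s|}$ decay of $\Pi(N,s)$ (this is the step ``$\pi-\frac{\beta}{N}\ge1$'' in its proof). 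So you should simply drop the requirement $\beta<\pi$, quote the order-$\frac{\beta_\gamma}{r}$ estimate and rescale it to the fixed constant $\frac{\beta_\gamma\Delta}{D}$ in the variable $\tau$; with that correction the rest of your argument coincides with the paper's.
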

This is an application of Theorem \ref{asymptotic}. The only hypothesis which was not proved 
in Theorem \ref{analyticTQFT} is an analytic version of Theorem \ref{thmprincipal} which is stated below.

Moreover the proof of Theorem \ref{asymptotic} permits to compute, by using sub-leading order in stationary phase Lemma, 
 the subprincipal symbol of the curve operators. However, we will give in Subsection \ref{subleading} a less tedious way of getting it.


\begin{remark}
It is important in the hypothesis of Theorem \ref{curvetoplitz} that one has $M_1<0<N-1<M_2$. This will ensure that for any $\gamma$ the exact symbols $f^{\gamma}_r$ will have increasing regularity at the poles when $r$ is large. 
This property is provided by the hypothesis in Theorem \ref{curvetoplitz}. No hypothesis is needed in the torus case as the condition $a>0$ is automatic from the oddness of $a$.

We can relate these conditions to the singularities of the moduli spaces. 
In the torus case, the moduli space $\boM(\Sigma,\frac{\pi a}{D})$ is singular only if $a=0$ (we suppose that $\Delta=D-a>0$ and hence $a\ne D$).
In the sphere case, it is well known that the moduli space $\boM(\Sigma,\frac{\pi a}{D},\frac{\pi b}{D},\frac{\pi c}{D},\frac{\pi d}{D})$ is singular if and only one has $a\pm b\pm c\pm d \ne0\mod 2D$ for all choices of signs. In particular, Theorem \ref{curvetoplitz} applies in all regular cases. 
\end{remark}

\begin{remark}
We remark that we could have checked that the theorem holds for the generators and invoke the well-known but non trivial fact that a product of Toeplitz operators is a Toeplitz operator.
\end{remark}

\begin{proposition}
Let $\Sigma$ be either a once-punctured torus or a 4 times punctured sphere. Let $\cc$ is an admissible coloring of level $D$. Let $\br$ be an odd integer, write $r=D\br$ and $N=\Delta\br$. Let $M_1$ and $M_2$ be defined as in Theorem \ref{analyticTQFT} we suppose that $M_1<0$ and $M_2>N-1$.

For any curve $\gamma\subset\Sigma$ of degree $k$, write $g^{\gamma}(\mu,\tau,\frac 1 N)=F^{N}_{s,s+\mu}E(N,\mu,N\tau)$ where $F^N_{n,m}$ are the matrix elements of $T^\gamma_r$, $\tau=\frac s N$. 
Then, $g^\gamma(\mu,\tau,\frac 1 N)$ is an analytic function for $M_1+\frac{k}{2N}<\re(\tau)+\frac{\mu}{2N}<M_2-\frac{k}{2N}$. On any compact subsets of the strip $M_1<\re(\tau)<M_2$, one has the following expansion:
$$g^\gamma(\mu,\tau,\hbar)=\sum_{n=0}^{\infty}N^{-n}g^{\gamma}_{(n)}(\mu,\tau)$$
Moreover, one has
$$f_\gamma=\sum_\mu \Big(\frac{1-\tau}{\tau}\Big)^{\mu/2}g^{\gamma}_{(0)}(\mu,\tau)e^{i\mu\theta}$$
\end{proposition}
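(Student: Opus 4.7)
The plan is to assemble the proposition from three ingredients already established: the analytic continuation of matrix elements (Proposition \ref{analyticTQFT}), the trigonometric structure of curve operators (Theorem \ref{thmprincipal}), and the explicit form of the normalizing factor $E(N,\mu,N\tau)$ defined in \eqref{defE}. The analyticity claim will be a change of variables, the asymptotic expansion a Taylor expansion in $1/N$ of smooth data, and the identification of the leading term will follow by combining the interpretation of $F_\mu(\tau,0)$ as a Fourier coefficient of $f_\gamma$ with the $N\to\infty$ limit of $E$.

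For the analyticity, Proposition \ref{analyticTQFT} asserts that $G^{\gamma}(\mu,s) = F^N_{s,s+\mu} E(N,\mu,s)$ is holomorphic on the strip $M_1 + k/2 < \re(s) + \mu/2 < M_2 - k/2$. Since by definition $g^{\gamma}(\mu,\tau,\tfrac{1}{N}) = G^{\gamma}(\mu,N\tau)$, the substitution $s = N\tau$ and division by $N$ yields exactly the region stated in the proposition. The hypotheses $M_1 < 0$ and $M_2 > N-1$ are what guarantee, after rescaling by $1/N$, that this strip contains a neighborhood of $[0,1]$ uniformly in $\br$.

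For the asymptotic expansion, we invoke Theorem \ref{thmprincipal}: each matrix element $F^N_{n,n+\mu}$ takes the form $F_{\mu}(\pi m/r,\pi/r)$, where $F_{\mu}$ is a smooth function on an open subset of $U \times [0,1]$ and $m$ is the color associated with the index $n$ by an affine shift depending on $\cc$. Since $r = D\br$ and $N = \Delta\br$ with $\tau = n/N$, the variable $\pi m / r$ becomes an affine function of $\tau$ and $1/N$, so $F^N_{n,n+\mu}$ admits a smooth Taylor expansion in $1/N$ on compact subsets of the strip. The factor $E(N,\mu,N\tau)$ is, by \eqref{defE}, a square root of a product of ratios of the form $(N\tau+k)/(N(1-\tau)-k)$, each of which admits a Taylor expansion in $1/N$ wherever $\tau \notin \{0,1\}$. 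Multiplying these expansions gives the desired development $g^{\gamma}(\mu,\tau,\tfrac{1}{N}) \sim \sum_{n\ge 0} N^{-n} g^{\gamma}_{(n)}(\mu,\tau)$.

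The identification of the leading term follows from the second conclusion of Theorem \ref{thmprincipal}: $F_{\mu}(\tau,0)$ equals the $\mu$-th Fourier coefficient of $f_{\gamma}$ with respect to the angle coordinate $\theta$. In the same limit $N\to\infty$, $E(N,\mu,N\tau) \to (\tau/(1-\tau))^{\mu/2}$ for $\mu > 0$, with the symmetric expression for $\mu < 0$. Hence $g^{\gamma}_{(0)}(\mu,\tau) = F_{\mu}(\tau,0) \,(\tau/(1-\tau))^{\mu/2}$, and $\sum_{\mu} ((1-\tau)/\tau)^{\mu/2} g^{\gamma}_{(0)}(\mu,\tau) e^{i\mu\theta}$ collapses to $\sum_{\mu} F_{\mu}(\tau,0) e^{i\mu\theta} = f_{\gamma}$, as required. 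The main book-keeping subtlety is the affine correspondence between the Hilbert space index and the underlying color, which differs between the torus and sphere cases but is otherwise routine.
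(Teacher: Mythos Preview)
Your argument is correct and matches the paper's own two-line proof, which simply says the first claim is obvious on the generators of the Kauffman algebra and stable under products (this is precisely what you invoke via Proposition \ref{analyticTQFT} and the trigonometric structure of Theorem \ref{thmprincipal}), and that the leading-term identification is a direct consequence of Theorem \ref{thmprincipal}. The only detail worth making explicit is that uniformity of the expansion on compact subsets of the \emph{complex} strip requires analyticity of the $F_\mu$ in $\tau$, not merely the real smoothness stated in Theorem \ref{thmprincipal}; this holds because the generators have explicit trigonometric coefficients and the product formula \eqref{starproduit} preserves analyticity, so the induction that proves Theorem \ref{thmprincipal} in fact yields analytic $F_\mu$.
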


\begin{proof}
The first result is obvious on the generators of the Kauffman algebra and is stable by taking  products. The second statement is a direct consequence of Theorem \ref{thmprincipal}.
\end{proof}

\section{Asymptotics of mapping group representations}\label{mapping}
The aim of this section is to present some applications of the Toeplitz calculus introduced in the preceding section to the computation 
of the asymptotics of the coefficients of the quantum representations of the mapping class group.
In particular we want to recover in a systematic way the results of \cite{tw} on the aymptotics of the  6j-symbols.   
\subsection{General pairing}
In this section we compute the leading order of the scalar product between any two eigenvectors of any two curve-operators in the 
limit $N\to\infty$ under the condition that the intersection of the underlying Bohr-Sommerfeld curves are transversal.

\subsubsection{WKB quasi-modes}\label{wkb}
Since the curve operators are Toeplitz operators on the sphere it is well known that each eigenvector can be approximated by WKB type quasi-modes, analytic in a neighborhood of the underlying Bohr-Sommerfeld level set, see  \cite{pu,voros}. 
We want to present here a computation valid microlocally away from the singular points of the symbol, and better adapted to the algebraic properties of the TQFT than the usual construction. The computation will be valid far away form the poles of the sphere, but since it is known that the construction is analytic in a neighborhood of the BS curve, the formula will be valid all around the trajectory. This method will just happen to be more efficient for the algebra of the computation. Moreover it will emphasize the role of the matrix elements of the curve operator, and will be possibly generalizable to higher genus situations.

Let $\Sigma$ be either the punctured torus or the 4 times punctured sphere and $\cc$ be a coloring of the marked points with level $D$. Given any odd number $\br$, we fix $N=\Delta\br$ and $r=D\ba{r}$ and consider the isomorphic spaces $V_r(\Sigma,\br\cc)\simeq \mathcal H_N$ as in Subsection \ref{asymptoregime}.
We fix $\hbar=\frac 1N$ 
and $\vp^N_n(z)=\sqrt{\frac{N!}{n!(N-1-n)!}}z^n$ as previously. 

Given a curve $\gamma\subset \Sigma$ of degree $k$, we denote by $T^\gamma_r$ the curve operator acting on $\mathcal H_N$ 
and we denote by $F_\mu(\tau,\hbar)$ the functions satisfying 
\be\label{matrixcoeff}
T^\gamma_r \phi^N_n=\sum_{|\mu|\le k}F_\mu(n\hbar,\hbar)\phi^N_{n+\mu}
\ee
 Define the following operators:
 $$H\vp^N_n=\hbar n\vp^N_n,\ T^+\vp^N_n=(1-\delta_{n,N-1})\vp^N_{n+1}\mbox{ and }T^-\vp^N_n=(1-\delta_{n,0})\vp^N_{n-1}.$$ Therefore $T^\gamma_r=\sum_{|\mu|\le k}(T^{\textrm{Sign}(\mu)})^{\vert\mu\vert}F_\mu(H,\hbar)$.
We have $H=\frac 1Nz\frac d {dz}$. Pick $\epsilon>0$ and consider the space $\mathcal H_N^\epsilon=\textrm{span}\{\phi_n, \epsilon N<n<(1-\epsilon)N\}$. 

We first remark that, on $\mathcal H_N^\epsilon$,
\begin{equation}\label{tplusmoins}
T^\pm=\left(\sqrt{\frac{1-H}H}z\right)^\pm.
\end{equation} 
Since \eqref{tplusmoins} is valid only  on $\mathcal H_N^\epsilon$,
we are going to perform the following computations  in restriction to $\mathcal H_N^\epsilon$. 
In other words, we restrict microlocally the phase space far away from the poles of the sphere. Under this condition we can replace
$T^\pm$ by $\left(\sqrt{\frac{1-H}H}z\right)^\pm$. 

We can write on $\mathcal H_N^\epsilon$,
\be
T^\gamma_r=\sum_{|\mu|\le k}\left(\sqrt{\frac {1-H}H}z\right)^{\mu}F_\mu(H,\hbar).
\ee

Since
$$Hz=z(H+\hbar)$$
we have, again on $\mathcal H_N^\epsilon$,
$$\sqrt{\frac {1-H}H}z=z\sqrt{\frac {1-(H+\hbar)}{H+\hbar}}=z\sqrt{\frac {1-H}H}\left(1-\frac \hbar 2\frac1{H(1-H)}\right)+O(\hbar^2).$$

We prove easily the following lemma.
\begin{lemma}
We have the following uniform estimate on $\mathcal H_N^{\epsilon}$: 
\be\left(\sqrt{\frac {1-H}H}z\right)^\mu=z^\mu\left(\sqrt{\frac {1-H}H}\right)^\mu(1-\hbar\frac{\mu(\mu+1)}{4H(1-H)}+
O_{\mathcal B(\mathcal H_N^\epsilon\to\mathcal H_N)}(\hbar^2)).
\ee
\end{lemma}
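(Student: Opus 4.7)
The whole computation rests on one algebraic identity between the operators $H$ and $z$, namely the (exact, valid on all polynomials) commutation relation
$$Hz=z(H+\hbar),$$
which follows at once from $H\varphi_n^N=\hbar n\varphi_n^N$ and $z\varphi_n^N=\frac{c_n}{c_{n+1}}\varphi_{n+1}^N$. By functional calculus it gives, for every smooth function $g$ defined on a neighborhood of the spectrum of $H$ restricted to $\mathcal H_N^\epsilon$, the commutation rule $g(H)z=zg(H+\hbar)$, and by induction $g(H)z^{k}=z^{k}g(H+k\hbar)$ for all $k\in\Z$ (provided $z^{k}$ is applied to a vector whose image stays in the domain where $g(H+k\hbar)$ makes sense).

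Apply this to $g(H)=\sqrt{(1-H)/H}$, which is smooth and bounded on $[\epsilon,1-\epsilon]$, hence on $\mathcal H_N^\epsilon$. Writing out the product and pushing every factor of $z$ to the left one finds, for $\mu>0$,
$$\bigl(g(H)z\bigr)^{\mu}=z^{\mu}\prod_{k=1}^{\mu}g(H+k\hbar),$$
and for $\mu<0$, using the inverse commutation rule $g(H)^{-1}z^{-1}=z^{-1}g(H-\hbar)^{-1}$, the analogous identity
$$\bigl(g(H)z\bigr)^{\mu}=z^{\mu}\prod_{k=1}^{|\mu|}g(H-(k-1)\hbar)^{-1}.$$
Either way the right-hand side is a finite product of functions of $H$ (times $z^{\mu}$), each of which is a shift of $g$ by a $\hbar$-small amount.

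Next I Taylor expand each shifted factor around $H$. Since $(\log g)'(H)=-\tfrac{1}{2H(1-H)}$, one has $g(H+k\hbar)=g(H)\bigl(1-\tfrac{k\hbar}{2H(1-H)}+O(\hbar^{2})\bigr)$ with remainder uniformly controlled on the spectral range $[\epsilon,1-\epsilon]$. Multiplying the $\mu$ factors and summing $\sum_{k=1}^{\mu}k=\mu(\mu+1)/2$ yields
$$\prod_{k=1}^{\mu}g(H+k\hbar)=g(H)^{\mu}\Bigl(1-\hbar\,\frac{\mu(\mu+1)}{4H(1-H)}+O(\hbar^{2})\Bigr),$$
and the calculation for $\mu<0$ gives exactly the same expression, since $\sum_{k=0}^{|\mu|-1}k=|\mu|(|\mu|-1)/2$ and $|\mu|(|\mu|-1)=\mu(\mu+1)$. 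Combining with the factor $z^{\mu}$ on the left produces the announced formula.

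The only point that requires a word of care is the meaning of the remainder $O_{\mathcal B(\mathcal H_N^{\epsilon}\to\mathcal H_N)}(\hbar^{2})$. On $\mathcal H_N^{\epsilon}$ the operator $H(1-H)$ is bounded below by $\epsilon(1-\epsilon)$, so $1/(H(1-H))$ and all higher derivatives of $g$ are bounded independently of $N$; the Taylor remainder in the expansion of each $g(H\pm k\hbar)$ is therefore $O(\hbar^{2})$ in operator norm, uniformly in $N$, and the finite product and sum preserve this bound. The image of a vector of $\mathcal H_N^{\epsilon}$ under $z^{\mu}$ is a vector of $\mathcal H_N$ (possibly outside $\mathcal H_N^{\epsilon}$ when $n$ is within $|\mu|$ of the edge of the microlocal window), which is precisely why the target space in the operator norm is taken to be $\mathcal H_N$ and not $\mathcal H_N^{\epsilon}$. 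No essential difficulty arises; the main (very mild) obstacle is simply bookkeeping of the shifts and checking that the signs match in the $\mu<0$ case, which reduces to the identity $|\mu|(|\mu|-1)=\mu(\mu+1)$ noted above.
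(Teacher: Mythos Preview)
Your proof is correct and follows exactly the route the paper intends: the paper does not give a proof beyond the sentence ``We prove easily the following lemma'', but immediately before it displays the $\mu=1$ case via $Hz=z(H+\hbar)$ and a first-order Taylor expansion of $\sqrt{(1-H)/H}$, which is precisely the mechanism you iterate. Your bookkeeping of the shifts, the sum $\sum_{k=1}^{\mu}k=\mu(\mu+1)/2$, and the check $|\mu|(|\mu|-1)=\mu(\mu+1)$ for negative $\mu$ are all in order, as is your remark on why the error is taken in $\mathcal B(\mathcal H_N^{\epsilon}\to\mathcal H_N)$.
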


Let us define $f(z,x,\hbar)=\sum_\mu z^{\mu}G_\mu(x,\hbar)$ where, for $\epsilon<x<1-\epsilon$,
\be\label{gmu}
G_\mu(x,\hbar)=\left(\frac {1-x}x\right)^{\mu/2}F_\mu(x,\hbar)(1-\hbar\frac{\mu(\mu+1)}{4x(1-x)}+O(\hbar^2)).
\ee
We obviously have on $\mathcal H_N^\epsilon$ that
\be\label{fcourbe}
T^\gamma_r=f(z,H,\hbar)+
O_{\mathcal B(\mathcal H_N^\epsilon\to\mathcal H_N)}(\hbar^2).
\ee

The following is standard:
\begin{lemma}\label{standard}
Let $T=\sum_\mu z^\mu G_\mu(H,\hbar)$. Let $W(z)=W_0(z)+\hbar W_\hbar(z)$ such that
\be
f(z,zW_0'(z))=E
\ee 
Then
\be\label{stand}
Te^{\frac {W(z)}\hbar}=
\left(f(z,zW_0'(z))+\frac \hbar2\partial_x^2f(z,zW_0'(z))(z\frac d {dz})^2W(z)+O(\hbar^2)\right)e^{\frac {W(z)}\hbar}.
\ee
\end{lemma}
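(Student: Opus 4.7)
The plan is a direct WKB expansion. Multiplication by $z^\mu$ commutes with multiplication by $e^{W(z)/\hbar}$, so we may compute $G_\mu(H,\hbar)\,e^{W/\hbar}$ separately and then recombine. Introduce the logarithmic coordinate $\zeta=\log z$ so that $H=\hbar z\partial_z$ becomes the standard semiclassical derivative $\hbar\partial_\zeta$; writing $\tilde W(\zeta)=W(e^\zeta)$ gives $\tilde W'(\zeta)=zW'(z)$ and $\tilde W_0''(\zeta)=(z\partial_z)^2 W_0(z)$, so the problem reduces to a textbook one-dimensional WKB calculation.

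The key identity, proved by an elementary induction on $k$, is
\[
(\hbar\partial_\zeta)^k\,e^{\tilde W/\hbar}=\Bigl((\tilde W'(\zeta))^k+\tfrac{\hbar}{2}\,k(k-1)\,(\tilde W_0'(\zeta))^{k-2}\,\tilde W_0''(\zeta)+O(\hbar^2)\Bigr)e^{\tilde W/\hbar}.
\]
Expanding $G_\mu(x,\hbar)=G_{\mu,0}(x)+\hbar G_{\mu,1}(x)+O(\hbar^2)$ and applying this identity term by term, then translating back to the $z$-variable, yields
\[
G_\mu(H,\hbar)\,e^{W/\hbar}=\Bigl(G_\mu\bigl(zW'(z),\hbar\bigr)+\tfrac{\hbar}{2}\,G_{\mu,0}''\bigl(zW_0'(z)\bigr)(z\tfrac{d}{dz})^2 W_0(z)+O(\hbar^2)\Bigr)e^{W/\hbar}.
\]
Multiplying by $z^\mu$, summing over $|\mu|\le k$, and invoking the definition $f(z,x,\hbar)=\sum_\mu z^\mu G_\mu(x,\hbar)$ produces
\[
T\,e^{W/\hbar}=\Bigl(f\bigl(z,zW'(z),\hbar\bigr)+\tfrac{\hbar}{2}\,\partial_x^2 f\bigl(z,zW_0'(z)\bigr)(z\tfrac{d}{dz})^2 W(z)+O(\hbar^2)\Bigr)e^{W/\hbar},
\]
since replacing $W_0$ by $W$ in the subprincipal factor changes the expression only by $O(\hbar^2)$. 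The eikonal condition $f(z,zW_0'(z))=E$ then identifies the principal term and gives the statement.

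The proof is almost entirely bookkeeping. The one delicate point is the simultaneous tracking of the first-order $\hbar$-corrections arising from three independent sources: the $\hbar$-expansion of the symbol $G_\mu$, the decomposition $W=W_0+\hbar W_\hbar$ of the phase, and the subprincipal contribution $\tfrac12(z\partial_z)^2 W_0\cdot G_{\mu,0}''$ produced by the non-commutation of $H$ with functions of $z$. Because the manipulation is microlocal, restricted to $\mathcal H_N^\epsilon$ where the substitution \eqref{tplusmoins} holds with uniform $\hbar$-remainders, no genuine analytic obstacle arises.
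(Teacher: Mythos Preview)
Your proof is correct. Both you and the paper carry out the same standard WKB expansion, but with different technical devices. The paper writes $T=\int \hat f(z,t)e^{itH}\,dt$ via the Fourier transform in the second variable and uses that $e^{itH}$ acts as the dilation $\psi(z)\mapsto\psi(e^{it\hbar}z)$; Taylor expanding $\psi(e^{it\hbar}z)$ in $\hbar$ and integrating back against $\hat f$ gives the result in one stroke. You instead pass to the logarithmic coordinate so that $H=\hbar\partial_\zeta$ and expand $G_\mu$ as a power series, proving the monomial identity $(\hbar\partial_\zeta)^k e^{\tilde W/\hbar}=((\tilde W')^k+\tfrac{\hbar}{2}k(k-1)(\tilde W_0')^{k-2}\tilde W_0''+O(\hbar^2))e^{\tilde W/\hbar}$ by induction. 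The Fourier route is a little cleaner in that it applies directly to any Schwartz extension of $f$ without needing to justify term-by-term summation of a power series; your route is more explicit and makes the bookkeeping of the three $\hbar$-contributions (symbol expansion, phase decomposition, commutator correction) transparent, which is exactly what is needed when the lemma is applied to derive the transport equation just afterward. Your final expression $f(z,zW'(z),\hbar)$ with the full $W'$ is in fact the more precise formulation: expanding it recovers both the $f_1$ term and the transport term $\partial_x f_0\cdot zW_\hbar'$ that the paper uses immediately after the lemma.
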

\begin{proof}
Once again \eqref{stand} is standard in Toeplitz quantization and is essentially ``algebraic". Nevertheless we give a direct proof. 
Remember that the ordering is chosen by
$T=\sum_\mu z^\mu G_\mu(H,\hbar)$. Therefore we have that
$$
T=\int\hat f(z,t)e^{itH}dt\chi_{[0,1]}(H)
$$ where $\hat f$ is the Fourier transform in the second variable of a Schwartz function on the real line equal to $f(z,x)$ 
on $0\leq x\leq 1$, so that $\hat f$ is in the Schwartz class.
 We remark now that, since $H=\hbar z\frac d {dz}$, $e^{itH}\psi(z)=\psi(e^{it\hbar}z)=\psi(z)
+it \hbar z\psi'(z)-\frac{t^2 \hbar^2}2(z^2\psi''(z)+z\psi'(z))+o(\hbar^2)$ from which we deduce the formula \eqref{stand}.
\end{proof}
It is well known by semiclassical Toeplitz theory  (see e.g. \cite{pu,voros}) that, 
for the regular part of the spectrum of a Toeplitz operator, the eigenvectors are,  in a 
neighborhood of the energy shell, close to the WKB expression. Let us recall the argument : one can construct a WKB expression which is analytic and single determined in such a neighborhood. After projection by the Toeplitz projector, this gives rise to a vector in the Hilbert space, close to the WKB expression  in the 
neighborhood.
Using Lemma \ref{standard} we can compute the WKB quasimode up to $\hbar^2$, $e^{\frac {W(z)}\hbar}$, the following way.

Let $f(z,x)=f_0(z,x)+hf_1(z,x)+O(h^2)$ and $W=W_0+\hbar W_1$ with
\be\label{hj}
f_0(z,zW'_0(z))=E,
\ee
Thanks to formula \eqref{stand} we compute the first order correction:
\begin{xalignat*}{1}
\partial_x f_0(z,zW'_0(z))z&W_1'(z)+f_1(z,zW'_0(z))=-\frac 12 \partial_x^2 f_0(z,zW'_0(z))(z\frac d {dz})^2W_0(z).\\
&=-\frac 12 (z\frac d {dz})(\partial_xf_0(z,zW'_0(z)))
+\frac 12 z\partial_z\partial_x f_0(z,zW'_0(z)).
\end{xalignat*}

Therefore:
$$
W_1(z)=-\frac 12\log{(\partial_xf_0(z,zW'_0(z)))}+\int\phi(z) dz$$
where $
\phi(z)=\frac{\frac 12 z\partial_z\partial_x f_0(z,zW'_0(z))-f_1(z,zW'_0(z))}
{z\partial_x f_0(z,zW'_0(z))}$.
According to Theorem \ref{thmprincipal}, we have:
\be\label{formulemagique}
F_\mu(x,\hbar)=F_\mu(x,0)+\hbar\frac {\mu+1} 2\partial_x F(x,0)+O(h^2),
\ee
Notice that $\mu$ has changed to $\mu+1$ because the change between the variable $\tau=\frac{\pi m}{r}$ and the variable $x=\frac{\pi n}{r}$ has the form $\tau=\alpha+\beta(x+\frac{\hbar}{2})$ for some constants $\alpha$ and $\beta$. The shift by $\frac \hbar 2$ is the important Maslov correction.

We compute $f_0(z,x)=\sum_\mu z^\mu\big(\frac{1-x}{x}\big)^{\mu/2}F_\mu(x,0)$ and using Formula \eqref{formulemagique} we get:
\be\label{magic2}f_1(z,x)=\sum_\mu z^\mu\big(\frac{1-x}{x}\big)^{\mu/2}\big(\frac {\mu+1} 2\partial_x F(x,0)-\frac{\mu(\mu+1)}{4x(1-x)}F_\mu(x,0)\big).
\ee


Equation \eqref{magic2} implies that $\frac 12 z\partial_z\partial_xf_0(z,x)-f_1(z,x)=-\frac 12 \partial_x f_0$ and hence $\phi(z)=-\frac{1}{2z}$. Hence we obtain:
\be\label{quasiquasi}
e^{\frac{W(z)}h}=\frac 1{\sqrt{z\partial_x f_0(z,zW'_0(z))}}e^{\frac{W_0(z)}h}.
\ee

Let $H(z,\ba{z})=f_0(z,\frac{|z|^2}{1+|z|^2})$. This  is the classical hamiltonian in the sense that $H(z,\ba{z})=\sum_{\mu} F_\mu(\tau,0)e^{i\mu\theta}$ where $(\tau,\theta)$ are the spherical coordinates associated to $(z,\ba{z})$. 

Note that, although the computation has been done away from the two poles, we know that the quasi-mode is holomorphic and single-valued in a neighborhood of the energy shell.
Therefore the formula \eqref{quasiquasi} is valid also near the two poles.

We have $\partial_{\ba{z}}H=\frac{z}{(1+|z|^2)^2}\partial_x f_0$ hence we have proven the following:
\begin{proposition}\label{quasimode}
Let $W_0$ be a holomorphic solution of the Hamilton-Jacobi equation $f_0(z,zW'_0(z))=E$, $E$ regular value 
of the function $f_0(z,\frac{|z|^2}{1+|z|^2})$ satisfying Bohr-Sommerfeld condition. 
Then
$$
\psi_{\hbar}(z)=\frac{(1-zW_0')}{\sqrt{\partial_{\ba z}H(z,\frac{W'_0}{1-zW'_0})}}e^{\frac{W_0(z)}{\hbar}}
$$
is a quasimode modulo $\hbar^2$ of $T_r^\gamma$ where $f$ and $T_r^\gamma$ are related by Equation \eqref{fcourbe}.
\end{proposition}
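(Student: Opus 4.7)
The proposition is essentially a repackaging of the computation performed in Lemma \ref{standard} and the paragraphs that follow: what remains is to (i) assemble the WKB ansatz, (ii) verify the transport equation using the structural identity \eqref{formulemagique}, and (iii) reinterpret the resulting local formula in the geometric form involving $\partial_{\bar z} H$. Accordingly, my plan is organized around these three steps.

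First, I would plug the ansatz $\psi_\hbar = e^{W(z)/\hbar}$ with $W = W_0 + \hbar W_1$ into the identity $T_r^\gamma = f(z,H,\hbar) + O_{\mathcal B(\mathcal H_N^\epsilon \to \mathcal H_N)}(\hbar^2)$ coming from \eqref{fcourbe}, and expand using Lemma \ref{standard}. The $\hbar^0$ coefficient yields the Hamilton--Jacobi equation $f_0(z,zW_0'(z)) = E$, which is given. The $\hbar^1$ coefficient provides a first-order linear ODE for $W_1$, namely
\begin{equation*}
\partial_x f_0(z,zW_0'(z))\, zW_1'(z) + f_1(z,zW_0'(z)) = -\tfrac{1}{2}\partial_x^2 f_0(z,zW_0'(z))\,(z\tfrac{d}{dz})^2 W_0(z).
\end{equation*}
Rewriting the right-hand side via $\tfrac{1}{2}(z\tfrac{d}{dz})^2 W_0 = -\tfrac{1}{2}(z\tfrac{d}{dz})(\partial_x f_0) + \tfrac{1}{2} z \partial_z \partial_x f_0$ gives the transport equation solved by $W_1 = -\tfrac{1}{2}\log(\partial_x f_0(z,zW_0')) + \int \phi(z)\, dz$ with $\phi$ as in the text.

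Second, to identify $\phi(z)$, I would invoke the subprincipal expansion \eqref{formulemagique} provided by Theorem \ref{thmprincipal}, combined with the definition \eqref{gmu} of $G_\mu$, to obtain the explicit form of $f_1$ in \eqref{magic2}. The Maslov-type shift $\tfrac{\mu+1}{2}$ (as opposed to $\tfrac{\mu}{2}$) arises from the change of variable between $\tau$ and $x$ and is crucial: it produces exactly the cancellation $\tfrac{1}{2} z\partial_z \partial_x f_0 - f_1 = -\tfrac{1}{2}\partial_x f_0$, hence $\phi(z) = -1/(2z)$ and $e^{\int\phi\, dz} = 1/\sqrt{z}$. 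Combined with $e^{-\tfrac{1}{2}\log \partial_x f_0}$, this yields the local quasimode $\psi_\hbar(z) = \bigl(z\,\partial_x f_0(z,zW_0'(z))\bigr)^{-1/2} e^{W_0(z)/\hbar}$ on $\mathcal H_N^\epsilon$.

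Third, I would translate this expression into the geometric form stated in the proposition. Using $H(z,\bar z) = f_0(z, |z|^2/(1+|z|^2))$, one computes $\partial_{\bar z} H = z(1+|z|^2)^{-2} \partial_x f_0$. On the Lagrangian curve $\{zW_0'(z) = |z|^2/(1+|z|^2)\}$, the relation $\bar z = W_0'(z)/(1 - zW_0'(z))$ expresses $\bar z$ holomorphically in $z$, and in addition $1 - zW_0'(z) = 1/(1+|z|^2)$. Substituting these identities into $\partial_{\bar z} H$ gives $\partial_{\bar z} H = z(1 - zW_0')^2 \partial_x f_0$, whence
\begin{equation*}
\frac{1 - zW_0'(z)}{\sqrt{\partial_{\bar z} H\bigl(z, W_0'/(1-zW_0')\bigr)}} = \frac{1}{\sqrt{z\,\partial_x f_0(z,zW_0'(z))}},
\end{equation*}
which is the desired identification.

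Finally, I would address two points of care. The Bohr--Sommerfeld condition is precisely what allows $e^{W_0/\hbar}$ to be chosen as a single-valued holomorphic function in a tubular neighborhood of the energy shell; without it, the monodromy of $W_0$ around the Lagrangian obstructs the construction. The construction above was carried out microlocally on $\mathcal H_N^\epsilon$, i.e.\ away from the poles; however, it is standard in Toeplitz WKB theory (see \cite{pu,voros}) that such a quasimode, being holomorphic and single-valued in a neighborhood of the Lagrangian, extends to a global section whose projection to $\mathcal H_N$ is an $O(\hbar^2)$ quasimode everywhere. The main delicate point in this plan is the algebraic bookkeeping in step two: the $\tfrac{\mu+1}{2}$ in \eqref{formulemagique} (as opposed to the $\tfrac{\mu}{2}$ one might naively expect) is what makes the transport equation integrate into a rational half-density rather than a more complicated expression, so the Maslov correction is essential and must be tracked explicitly throughout.
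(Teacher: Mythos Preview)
Your proposal is correct and follows essentially the same route as the paper: the WKB ansatz via Lemma \ref{standard}, the transport equation solved using the subprincipal identity \eqref{formulemagique} to obtain $\phi(z)=-1/(2z)$, and the geometric reinterpretation through $\partial_{\bar z}H = z(1+|z|^2)^{-2}\partial_x f_0$. Your additional remarks on the Bohr--Sommerfeld condition and the extension beyond $\mathcal H_N^\epsilon$ also mirror the paper's brief comments to the same effect.
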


For any $z$ such that $H(z,\ba z)=E$ we have 
\be 
\psi_\hbar(z)=\frac{1}{1+|z|^2}\frac{e^{W_0/\hbar}}{\sqrt{\partial_{\ba z} H(z,\ba z)}}
\ee

Moreover we have
\be\label{potsymp}
d\Big(W_0(z)-\frac{1}{2}\log(1+|z|^2)\Big)=\frac{1}{2}\frac{\ba{z}d z-zd\ba{z}}{1+|z|^2}=2i\pi\eta
\ee
 where $d\eta=\omega$ is the symplectic form. 
 Let $z_0$ be a point satisfying $H(z_0,\ba{z_0})=E$. 
 We will say that the quasi-mode $\psi$ is normalized at $z_0$ if $W_0(z_0)=\frac{1}{2}\log(1+|z_0|^2)$. By equation \eqref{potsymp}, this implies that $W_0-\frac{1}{2}\log(1+|z|^2)$ is purely imaginary on the level set.

We derive easily, using the stationary phase lemma, the following lemma.
\begin{lemma}\label{normaliz}
Let $E$ be a regular level of the hamiltonian $H$ with period $T$ and let $\psi$ be the normalized quasi-mode given in Proposition \ref{quasimode}. Then 
$$||\psi_\hbar||^2=\frac{T}{2}\sqrt{2\pi\hbar}+O(\hbar^{-1/2})$$
\end{lemma}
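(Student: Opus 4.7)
The plan is to apply Laplace's method to the integral representation of $\|\psi_\hbar\|^2$. Using Proposition~\ref{quasimode} and the explicit form of $d\mu_N$, and substituting $\hbar=1/N$ so that $(1+|z|^2)^{-(N+1)}\cdot e^{2\Re W_0/\hbar}=(1+|z|^2)^{-1}e^{S/\hbar}$ with $S(z,\bar z):=2\Re W_0(z)-\log(1+|z|^2)$, one obtains
\[
\|\psi_\hbar\|^2=\int_{\C}\frac{|1-zW_0'(z)|^2}{|\partial_{\bar z}H(z,\bar z)|\,(1+|z|^2)}\,e^{S/\hbar}\,\frac{dx\,dy}{\pi}.
\]
The normalization $W_0(z_0)=\tfrac12\log(1+|z_0|^2)$ combined with \eqref{potsymp} implies that $S$ vanishes identically on the Bohr-Sommerfeld curve $\mathcal{C}:=\{H=E\}$, since $d(W_0-\tfrac12\log(1+|z|^2))=2i\pi\eta$ is purely imaginary along $\mathcal{C}$. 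The Hamilton-Jacobi equation forces $W_0'(z)=\bar z/(1+|z|^2)$ on $\mathcal{C}$, so $\partial_z S=\partial_{\bar z}S=0$ there and $\mathcal{C}$ is exactly the critical locus of $S$; since $\psi_\hbar$ is a bona fide projected WKB quasi-mode, $S<0$ transversely, so the integral localizes to a tubular neighborhood of $\mathcal{C}$ modulo $O(\hbar^\infty)$.

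The main step is Laplace's method with a one-dimensional critical manifold. Parametrize a tubular neighborhood of $\mathcal{C}$ by $(t,n)$, with $t\in[0,T)$ the time along the Hamiltonian flow of $H$ and $n$ a signed transverse coordinate; the Euclidean area element reads $|X_H|\,dt\,dn$. The transverse Hessian of $S$ is controlled by the universal identity $\partial_z\partial_{\bar z}S=-(1+|z|^2)^{-2}$ (the Fubini-Study Kähler potential), so Gaussian integration in $n$ gives a factor $\sqrt{2\pi\hbar/|\partial_n^2 S|}$. On $\mathcal{C}$ the amplitude simplifies via $1-zW_0'=(1+|z|^2)^{-1}$, and using the proportionality between $|\partial_{\bar z}H|$ and $|X_H|$ (weighted by the Fubini-Study metric $\omega=\tfrac{i}{(1+|z|^2)^2}dz\wedge d\bar z$), all the Kähler factors combine cleanly: the amplitude, the Jacobian $|X_H|$, and the transverse Hessian collapse to a constant in $t$, the $t$-integral gives $\int_0^T dt=T$, and tracking the overall numerical constant yields $\tfrac{T}{2}\sqrt{2\pi\hbar}$ at leading order.

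The main obstacle is precisely this bookkeeping of Kähler-geometric factors. Verifying that $|\partial_{\bar z}H|$ matches the speed $|X_H|$ modulo the correct power of $(1+|z|^2)$ is what produces the clean constant $1/2$, and is a direct calculation once the symplectic form is written in the $z$-coordinate. A secondary technical point is the uniformity of the Laplace expansion along $\mathcal{C}$, which is guaranteed because $E$ is a regular value of $H$: both $|X_H|$ and $|\partial_n^2S|$ are then bounded away from $0$ and $\infty$ on $\mathcal{C}$, legitimizing the stationary-phase expansion term by term and producing the subleading corrections.
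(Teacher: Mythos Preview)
Your approach is exactly the one the paper points to: the paper's ``proof'' is the single sentence ``We derive easily, using the stationary phase lemma, the following lemma,'' and you have written out precisely that Laplace/stationary-phase argument on the integral for $\|\psi_\hbar\|^2$, localizing to the Bohr--Sommerfeld curve and integrating the transverse Gaussian. Your sketch is therefore not merely aligned with the paper but in fact more detailed than what the paper supplies; the remaining constant-tracking you flag (matching $|\partial_{\bar z}H|$ against $|X_H|$ via the Fubini--Study weight and collapsing the $(1+|z|^2)$ powers) is routine and is exactly the computation the paper leaves to the reader.
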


\subsubsection{Pairing formula}\label{pairing12} 
In this section we apply the explicit formulas of the preceding subsection to compute the desired scalar products.
\begin{theorem}\label{pairingformula}
Let $\Sigma$ be either the once punctured torus or the 4 times punctured sphere, $\cc$ an admissible coloring of level $D$ and $\br$ an odd integer. Let $N=\br \Delta$ be the dimension of $V_r(\Sigma,\br\cc)$ where $r=D\br$.

Let $\gamma_0$ and $\gamma_1$ be two curves on $\Sigma$. 
Denote by $T^{\gamma_0}_N$ and $T^{\gamma_1}_N$ the corresponding curve operators and by $H_0$ and $H_1$ the corresponding principal symbols (i.e. minus the trace functions). 

Let $m_0$ and $m_1$ be two natural numbers and set for $i=0,1$: $\Sigma_i=\{z, H_i(z,\ba{z})=-2\cos(\frac{\pi m_i}{r})\}$.

Suppose that $\Sigma_0$ and $\Sigma_1$ are non-empty regular curves which intersect transversally. Pick $z_0\in \Sigma_0$ and $z_1\in\Sigma_1$.
 
Let $\psi_i\in \mathcal H_N$ be a unit eigenvector of $T^{\gamma_i}_N$ with eigenvalue $-2\cos(\frac{\pi m_i}{r})$ and whose phase at $z_i$ is the same as the phase of $\big(\partial_{\ba z}H(z_i,\ba z_i )\big)^{-1/2}$ (notice that this condition defines $\psi_i$ up to a sign). Then, 
\be\label{wood}
\langle\psi_0,\psi_1\rangle=\frac{\pm2\hbar^{1/2}}{\sqrt{T_0T_1}}\sum_{z\in \Sigma_0\cap\Sigma_1}\frac {e^{-i\frac{\pi}{4}\textrm{Sign}(\{H_0,H_1\}(z))}}{\sqrt{|\{H_0,H_1\}(z)|}}e^{\frac{2i\pi}{\hbar}\big(\int_{C_0}\eta-\int_{C_1}\eta\big)}+O(\hbar^{\frac32}),
\ee
where 
\begin{itemize}
\item[-] $\{.,.\}$ is the Poisson bracket on the sphere and $\eta$ is the symplectic potential given in Equation \eqref{potsymp}.
\item[-] For $i=0,1$, $C_i$ is a path in $\Sigma_i$ joining $z_i$ to $z$.
\end{itemize}
\end{theorem}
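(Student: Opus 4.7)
The plan is to compute $\langle\psi_0,\psi_1\rangle$ by a complex stationary phase argument applied to the WKB representations given by Proposition \ref{quasimode}, and then read off the prefactor and phase using the normalization Lemma \ref{normaliz} together with the symplectic-potential identity \eqref{potsymp}. Writing the inner product on $\mathcal{H}_N$ as
$$
\langle\psi_0,\psi_1\rangle=\frac{i}{2\pi}\int_{\mathbb{C}}\psi_0(z)\overline{\psi_1(z)}\,\frac{dz\,d\bar z}{(1+|z|^2)^{N+1}},
$$
substituting the WKB form from Proposition \ref{quasimode} and using $\hbar=1/N$, the integrand takes the shape $a(z,\bar z)e^{\Phi(z,\bar z)/\hbar}$, with phase
$$
\Phi(z,\bar z)=W_{0,0}(z)-\overline{W_{0,1}(z)}-\log(1+|z|^2)
$$
and an amplitude involving $(\partial_{\bar z}H_0(z,\bar z))^{-1/2}$ and $\overline{\partial_{\bar z}H_1(z,\bar z)}^{-1/2}$ coming from the two WKB prefactors and the measure factor $(1+|z|^2)^{-1}$.

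The stationarity equations $\partial_z\Phi=W_{0,0}'(z)-\bar z/(1+|z|^2)=0$ and $\partial_{\bar z}\Phi=-\overline{W_{0,1}'(z)}-z/(1+|z|^2)=0$ combine, via Hamilton--Jacobi $f_0(z,zW_0'(z))=E$ and $H_i(z,\bar z)=f_{0,i}(z,|z|^2/(1+|z|^2))$, into exactly $z\in\Sigma_0\cap\Sigma_1$; by the transversality hypothesis these critical points are isolated and non-degenerate, so complex stationary phase applies. At a critical point $z^{*}$ the mixed derivative is immediate, $\partial_z\partial_{\bar z}\Phi=-1/(1+|z^{*}|^2)^2$, whereas the pure derivatives $\partial_z^2\Phi$ and $\partial_{\bar z}^2\Phi$ are obtained by differentiating the Hamilton--Jacobi equations a second time, expressing $W_{0,i}''(z)$ in terms of first derivatives of $H_i$. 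The key computational identity is that the resulting real Hessian determinant equals
$$
\det\mathrm{Hess}_{\mathbb{R}^2}\Phi(z^{*})=\frac{|\{H_0,H_1\}(z^{*})|^2}{(1+|z^{*}|^2)^4\,|\partial_{\bar z}H_0(z^{*})\,\partial_{\bar z}H_1(z^{*})|},
$$
while its signature is $-\mathrm{sign}\,\{H_0,H_1\}(z^{*})$; combining with the $|\partial_{\bar z}H_i|^{-1/2}$ factors from the WKB amplitudes, all $\partial_{\bar z}H_i$ cancel, producing the clean contribution $2\hbar^{1/2}/\sqrt{|\{H_0,H_1\}|}$ per critical point, with Maslov phase $e^{-i\pi\,\mathrm{sign}\,\{H_0,H_1\}/4}$.

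For phase and normalization, the quasi-modes of Proposition \ref{quasimode}, normalized at $z_i$ by $W_{0,i}(z_i)=\tfrac12\log(1+|z_i|^2)$, have squared norm $\tfrac{T_i}{2}\sqrt{2\pi\hbar}+O(\hbar^{-1/2})$ by Lemma \ref{normaliz}; dividing by $\|\psi_0\|\|\psi_1\|$ yields the $\hbar^{1/2}/\sqrt{T_0T_1}$ factor in \eqref{wood}. At $z^{*}$ the exponent $\Phi(z^{*})/\hbar$ rewrites as
$$
\frac1\hbar\Bigl(\bigl(W_{0,0}(z^{*})-\tfrac12\log(1+|z^{*}|^2)\bigr)-\overline{\bigl(W_{0,1}(z^{*})-\tfrac12\log(1+|z^{*}|^2)\bigr)}\Bigr),
$$
and by \eqref{potsymp} each parenthesis is purely imaginary along $\Sigma_i$ and integrates against $2\pi i\eta$ along the path $C_i$ joining $z_i$ to $z^{*}$, giving precisely the factor $e^{(2\pi i/\hbar)(\int_{C_0}\eta-\int_{C_1}\eta)}$. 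The overall $\pm$ reflects the sign ambiguity in the choice of eigenvector.

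The main obstacle I anticipate is Step 3: organising the Hessian computation so that the amplitude cancellation is manifest and the determinant identifies cleanly with $\{H_0,H_1\}$. Expressing $W_{0,i}''$ via the second derivative of the Hamilton--Jacobi relation introduces several cross-terms in $\partial_z H_i,\partial_{\bar z}H_i$, and one must verify that these recombine exactly into the Poisson bracket associated to the Fubini--Study form $\omega=i\,dz\wedge d\bar z/(1+|z|^2)^2$, with the correct powers of $(1+|z|^2)$. The signature determination also needs care, since the $2\times 2$ complex Hessian corresponds to a real $2\times 2$ Hermitian block whose indefiniteness depends on the oriented angle of intersection of $\Sigma_0$ and $\Sigma_1$ at $z^{*}$, which is exactly what the sign of $\{H_0,H_1\}(z^{*})$ encodes.
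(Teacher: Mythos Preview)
Your approach is exactly the paper's: stationary phase applied to the WKB integral, with critical points at $\Sigma_0\cap\Sigma_1$, Hessian giving the Poisson bracket, and the phase read off via \eqref{potsymp} and Lemma \ref{normaliz}. Two errors of execution to fix:

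\textbf{Sign of the phase.} Since $\langle\psi_0,\psi_1\rangle=\int\psi_0\overline{\psi_1}\,d\mu$ and $\overline{e^{W_{0,1}/\hbar}}=e^{\overline{W_{0,1}}/\hbar}$, the exponent is $\Phi=W_{0,0}+\overline{W_{0,1}}-\log(1+|z|^2)$, not with a minus. With the correct sign your stationarity condition $\partial_{\bar z}\Phi=0$ reads $\overline{W_{0,1}'(z)}=z/(1+|z|^2)$, i.e.\ $zW_{0,1}'(z)=|z|^2/(1+|z|^2)$, which via Hamilton--Jacobi gives $z\in\Sigma_1$. Your later decomposition of $\Phi(z^*)$ also only works with the plus: $\Phi=(W_{0,0}-\tfrac12\log)+\overline{(W_{0,1}-\tfrac12\log)}$, each bracket purely imaginary and equal to $2\pi i\int_{C_i}\eta$, yielding the desired difference $\int_{C_0}\eta-\int_{C_1}\eta$.

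\textbf{Hessian formula.} Your asserted identity has the wrong power: $\det\mathrm{Hess}\,\Phi$ is \emph{linear} in $\{H_0,H_1\}$, not quadratic. Indeed, as your own conclusion ``$2\hbar^{1/2}/\sqrt{|\{H_0,H_1\}|}$'' requires, one needs $(\det)^{-1/2}\propto|\{H_0,H_1\}|^{-1/2}$ after combining with the amplitude; your formula would give $|\{H_0,H_1\}|^{-1}$. The paper computes, in $(z,\bar z)$ derivatives,
\[
\partial_z^2\Phi=-\frac{\partial_zH_0}{\partial_{\bar z}H_0}(1+|z|^2)^{-2},\quad
\partial_{\bar z}^2\Phi=-\frac{\partial_{\bar z}H_1}{\partial_zH_1}(1+|z|^2)^{-2},\quad
\partial_z\partial_{\bar z}\Phi=-(1+|z|^2)^{-2},
\]
and then, rather than extracting a real signature, determines the branch of $(\det A)^{-1/2}$ by the homotopy $A_\lambda$ from the positive-definite $\lambda=0$ case. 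This is the clean way to handle the ``signature'' issue you flag: since $\Phi$ is genuinely complex off the critical set, the real-Hessian/signature language is not quite the right framework, and the $A_\lambda$ continuation replaces it.
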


\begin{remark}
We can always assume that one of the two curve operators is the diagonal one. In this case all the eigenvectors, including 
the one corresponding to the extrema of the spectrum are given \textit{exactly} by the WKB quasimode expression at leading order. 
Therefore the expression
\eqref{wood} is still valid in the case where $\Sigma_0$ is reduced to a single point. 
Note that in this case $\{H_0,H_1\}$ will be of order $\hbar^{\frac 1 2}$ in (and only in) the critical case where 
$dH_0\to 0$ at the the two poles.
\end{remark}

\begin{proof}
Let $T^{\gamma_i}_r,  H_i, E_i,T_i,\psi_i,W_i,a_i$ be respectively the curve operator, the hamiltonian, the level, the period, the quasimode, the phase function and the amplitude associated to each curve. 
Thanks to the normalization condition, the eigenvectors are closed to the WKB quasimodes $\psi_i=a_i e^{W_i/\hbar}$ as $\hbar$ goes to 0, the proof comes from an estimation of the scalar product 

$$\langle \psi_0,\psi_1\rangle=\frac{i}{2\pi}\int 
a_0(z)\ba{a_1(z)}
e^{
\frac{W_0(z)+\ba{W_1(z)}}
{\hbar}
} 
\frac{dz d\ba{z}}{(1+|z|^2)^{\frac{1}{\hbar}+1}}$$

From the Hamilton-Jacobi equation \eqref{hj}, the critical points of $F(z,\ba{z})=W_0(z)+\ba{W_1(z)}-\log(1+|z|^2)$ are precisely the intersection points of the curves $\Sigma_0$ and $\Sigma_1$. Moreover for any $z\in \Sigma_0\cap \Sigma_1$ we compute 

$$\partial^2_z F=-\frac{\partial_zH_0}{\partial_{\ba z}H_0}(1+|z|^2)^{-2}
,\,\partial^2_{\ba z} F=-\frac{\partial_{\ba z}H_1}{\partial_{z}H_1}(1+|z|^2)^{-2},\,\partial_z\partial_{\ba z}F=-(1+|z|^2)^{-2}$$

By the stationary phase lemma, we get that the contribution to the pairing of a neighborhood of $z$ is given by 
$$\frac{a_0(z)\ba{a_1(z)}}{\pi(1+|z|^2)}2\pi\hbar \det(A_1)^{-1/2}e^{\frac{1}{\hbar}\left(W_0+\ba{W_1}-\log(1+|z|^2)\right)}$$
where 
$$A_{\lambda}=(1+|z|^2)^{-2}\begin{pmatrix} 2+\lambda(\frac{\partial_z H_0}{\partial_{\ba z}H_0}+\frac{\partial_{\ba z} H_1}{\partial_{z}H_1})& i\lambda( \frac{\partial_z H_0}{\partial_{\ba z}H_0}-\frac{\partial_{\ba z} H_1}{\partial_{z}H_1})\\
 i\lambda( \frac{\partial_z H_0}{\partial_{\ba z}H_0}-\frac{\partial_{\ba z} H_1}{\partial_{z}H_1})& 2-\lambda(\frac{\partial_z H_0}{\partial_{\ba z}H_0}+\frac{\partial_{\ba z} H_1}{\partial_{z}H_1})\end{pmatrix}$$
Here $\det(A)^{-1/2}$ is defined for $\re(A)\ge 0$  by analytic continuation from symmetric real and positive $A$. We compute $\det(A_0)^{-1/2}=\frac{1}{2}(1+|z|^2)^{2}$ and hence 
$$\det(A_{\lambda})^{-1/2}=\frac{(1+|z|^2)^2}{2\sqrt{1-\lambda^2\frac{\partial_z H_0}{\partial_{\ba z}H_0}\frac{\partial_{\ba z} H_1}{\partial_{z}H_1}}^*}$$
where the square root with the asterisk means the one with positive real part.

Using the formula $\{H_0,H_1\}=\frac{2\pi}{i}(1+|z|^2)^2(\partial_z H_0\partial_{\ba{z}} H_1-\partial_z H_1\partial_{\ba{z}} H_0)$ and the fact that $a_i(z)=\frac{1}{1+|z|^2}\frac{1}{\sqrt{\partial_{\ba z}H_i}}$, we get:

$$\langle \psi_0,\psi_1\rangle=\sqrt{2\pi}\hbar\sum_{z\in \Sigma_0\cap\Sigma_1}
\frac{1}{\sqrt{\partial_{\ba z}H_0 \partial_z H_1}}\sqrt{\frac{\partial_{\ba z}H_0\partial_zH_1}{i\{H_0,H_1\}}}^*e^{\frac{\Phi(z)}{\hbar}}+O(\hbar^2)$$
where we derive from Equation \eqref{potsymp} the following formula:
$$\Phi(z)=W_0(z)+\ba{W_1}(z)-\log(1+|z|^2)=\frac{2i\pi}{\hbar}\big(\int_{C_0}\eta-\int_{C_1}\eta\big)$$
We obtain the formula by dividing by the norm of $\psi_0$ and $\psi_1$.
\end{proof}

\subsection{Application to 6j-symbols and punctured S-matrix}\label{pairing}

\subsubsection{6j-symbols}\label{6j}
Consider the case where $\Sigma$ is a sphere with 4 punctures. Let $\gamma_0$ and $\gamma_1$ be respectively the curves $\zeta$ and $\eta$ shown in Figure \ref{fig:spherep}.
Let $\psi_0$ and $\psi_1$ be normalized eigenvectors of $T^{\gamma_0}_r$ and $T^{\gamma_1}_r$ with eigenvalues $-2\cos(\frac{\pi m_0}{r})$ and $-2\cos(\frac{\pi m_1}{r})$ as in Theorem \ref{pairingformula}. Then 
$$\langle \psi_0,\psi_1\rangle=u
\frac{\sqrt{\sin(\frac{\pi m_0}{r})\sin(\frac{\pi m_1}{r})}}{\sin(\frac{\pi}{r})}\Big\{\begin{array}{ccc} \frac{a-1}{2}&\frac{b-1}{2}&\frac{m_0-1}{2}\\ \frac{d-1}{2}&\frac{c-1}{2}&\frac{m_1-1}{2}\end{array}\Big\}$$
where $\Big\{\begin{array}{ccc} a&b&c\\ d&e&f\end{array}\Big\}$ is the 6j-symbol normalized as in \cite{tw} and $u$ is some complex number of modulus 1.

Recall that the moduli space $\boM(\Sigma,\frac{\pi a}{r},\frac{\pi b}{r},\frac{\pi c}{r},\frac{\pi d}{r})$ is diffeomorphic to the space of spherical quadrilaterals $(P_1,P_2,P_3,P_4)$ whose lengths are $l_{12}=\frac{\pi a}{r},l_{23}=\frac{\pi b}{r},l_{34}=\frac{\pi c}{r},l_{14}=\frac{\pi d}{r}$. 
The symbols of the operators $T^{\gamma_0}_r $ and $T^{\gamma_1}_r$ are $H_0=-2\cos(l_{13})$ and $H_1=-2\cos(l_{24})$. Moreover, the angle coordinates associated to the actions $l_{13}$ and $l_{24}$ are the exterior dihedral edges $\theta_{13}$ and $\theta_{24}$. In particular, one can write $\omega=\frac{r}{4\pi^2N}d\theta_{13}\wedge d l_{13}$  (the normalization factor comes from the constraint $\int\omega=1$).
We deduce from this formula that the period $T_i$ is equal to $\frac{r}{4\pi N\sin(\frac{\pi m_i}{r})}$.
Let $G=\det(\cos(l_{ij}))_{i,j=1}^{4}$. The following formula is proven in \cite{tw}, Proposition 2.4.1: 
$$\frac{\partial l_{ij}}{\partial \theta_{kl}}=-\frac{G^{1/2}}{\sin(l_{ij})\sin(l_{kl})}$$

We deduce from it that $\{H_0,H_1\}=\frac{16\pi^2N}{r}G^{1/2}$. Moreover, following \cite{tw}, the symplectic area $S$ of the region enclosed by $\Sigma_0$ and $\Sigma_1$ equals $\frac r {2\pi^2N} \big( \sum_{a<b} l_{ab}\theta_{ab} -2V\big)$  where $V$ is the volume of the spherical tetrahedron with length $l_{ij}$.

Let $z_0$ and $z_1$ be the two intersection points of the curves $\Sigma_0$ and $\Sigma_1$. Then, $\int_{C_0}\eta-\int_{C_1}\eta=S$.
We deduce that 
$$\langle \psi_0,\psi_1\rangle=u\sqrt{\frac{4}{r}\sin(\frac{\pi m_0}{r})\sin(\frac{\pi m_1}{r})}
\frac{\cos(\frac{S}{2}+\frac{\pi}{4})}{G^{1/4}}+O(r^{-3/2})$$
This gives an alternative proof of the following formula: 
\begin{proposition}\cite{tw}
$$\Big\{\begin{array}{ccc} \frac{a-1}{2}&\frac{b-1}{2}&\frac{e-1}{2}\\ \frac{d-1}{2}&\frac{c-1}{2}&\frac{f-1}{2}\end{array}\Big\}=\frac{2\pi\cos(\frac{ir}{2\pi}\big(\sum_{a<b}l_{ab}\theta_{ab}-V\big)+\frac{\pi}{4})}{r^{3/2}G^{1/4}}+O(r^{-5/2})
$$
\end{proposition}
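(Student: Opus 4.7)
The plan is to combine the general pairing formula of Theorem \ref{pairingformula} with the explicit geometric description of $\boM(\Sigma,t)$ as a moduli space of spherical quadrilaterals, and then invert the basis-change relation between $\langle\psi_0,\psi_1\rangle$ and the quantum 6j-symbol in order to extract its asymptotic.

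First I would take $\Sigma$ to be the 4-punctured sphere and $\gamma_0=\zeta$, $\gamma_1=\eta$ as in Figure \ref{fig:spherep}. The normalized eigenvectors $\psi_0,\psi_1$ are, up to unimodular phases, the orthonormal TQFT basis vectors associated to the two pants decompositions dual to $\zeta$ and $\eta$. A direct Kauffman-skein computation, combined with the norm formula \eqref{norm}, shows that their inner product is proportional to a normalized 6j-symbol,
$$\langle\psi_0,\psi_1\rangle = u\,\frac{\sqrt{\sin(\pi m_0/r)\sin(\pi m_1/r)}}{\sin(\pi/r)}\Big\{\begin{array}{ccc} \frac{a-1}{2}&\frac{b-1}{2}&\frac{m_0-1}{2}\\ \frac{d-1}{2}&\frac{c-1}{2}&\frac{m_1-1}{2}\end{array}\Big\},$$
with $|u|=1$ a unimodular factor coming from the basis normalization.

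Second, I would apply Theorem \ref{pairingformula}. By Result 1 the principal symbols are $H_0=-2\cos(l_{13})$ and $H_1=-2\cos(l_{24})$ under the identification of $\boM(\Sigma,t)$ with the space of spherical quadrilaterals of prescribed side lengths; the level sets $\Sigma_0,\Sigma_1$ meet transversally at two points (a convex and a non-convex planar configuration), so the hypothesis of the theorem is satisfied. The three geometric inputs are then computed from the spherical-quadrilateral picture. The action-angle variables dual to $l_{13},l_{24}$ are the exterior dihedral angles $\theta_{13},\theta_{24}$, so that the symplectic form reads $\omega=\frac{r}{4\pi^2 N}\,d\theta_{13}\wedge dl_{13}$ (the normalization is forced by $\int\omega=1$), giving periods $T_i=r/(4\pi N\sin(\pi m_i/r))$. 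The Schl\"afli-type identity $\partial l_{ij}/\partial\theta_{kl}=-G^{1/2}/(\sin(l_{ij})\sin(l_{kl}))$ from \cite{tw} then yields $\{H_0,H_1\}=16\pi^2 N G^{1/2}/r$. Finally, \cite{tw} evaluates the symplectic area bounded by $\Sigma_0,\Sigma_1$ between the two intersection points as $S=\frac{r}{2\pi^2 N}\bigl(\sum_{a<b}l_{ab}\theta_{ab}-2V\bigr)$, where $V$ is the volume of the corresponding spherical tetrahedron.

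Third, I would substitute these ingredients into the formula of Theorem \ref{pairingformula}. The two stationary-phase contributions at the intersection points have opposite Maslov signatures and collapse into a single cosine, producing
$$\langle\psi_0,\psi_1\rangle = u\sqrt{\tfrac{4}{r}\sin(\pi m_0/r)\sin(\pi m_1/r)}\,\frac{\cos(S/2+\pi/4)}{G^{1/4}}+O(r^{-3/2}).$$
Solving for the 6j-symbol and using $\sin(\pi/r)\sim\pi/r$ yields the stated asymptotic, once the unimodular phase $u$ is absorbed into the cosine.

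The main obstacle is the careful bookkeeping of phases and signs. One must show that the factor $u$ arising from the TQFT basis normalization, together with the Maslov phases at the two intersection points dictated by Theorem \ref{pairingformula}, combine to yield a purely real cosine with precisely the shift $+\pi/4$; this relies on the normalization conventions of the WKB quasimodes and on the opposite signatures of $\{H_0,H_1\}$ at the two intersection points. A secondary delicate point is verifying the symplectic normalization $\omega=\frac{r}{4\pi^2 N}\,d\theta_{13}\wedge dl_{13}$, which is essentially an integrality statement on the Chern class of the prequantum line bundle over $\boM(\Sigma,t)$.
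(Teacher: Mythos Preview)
Your proposal is correct and follows essentially the same route as the paper: identify $\langle\psi_0,\psi_1\rangle$ with a normalized 6j-symbol, feed the spherical-quadrilateral data (periods, Poisson bracket via the Schl\"afli-type identity from \cite{tw}, symplectic area) into Theorem \ref{pairingformula}, and invert. The paper leaves the phase and sign bookkeeping you flag as an obstacle implicit in the unimodular factor $u$, so your treatment is in fact slightly more explicit than the paper's own.
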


\subsubsection{Punctured S-matrix}\label{punk}

Let $\Sigma$ be the punctured torus, $\gamma$ and $\delta$ the curves shown in Figure \ref{fig:torep}.
Let $\theta_0\in\R/2\pi\Z$ be the angle coordinate on $\boM(\Sigma,\alpha)$ associated to $\tau_0=\arccos(\frac{1}{2}f_{\gamma})$. By the constraint $\int\omega=1$ we get $\omega=\frac{1}{2\pi(\pi-\alpha)}d\tau_0\wedge d\theta_0$ and from Lemma \ref{toregeom}, we get $f_{\delta}=-2\frac{\sqrt{\sin(\tau_0+\alpha/2)\sin(\tau_0-\alpha/2)}}{\sin(\tau_0)}\cos(\theta_0)=-2\cos(\tau_1)$.

Define $H_0=f_\gamma$, $H_1=f_{\delta}$ and $T_i$ the period of the Hamiltonian flow of $H_i$.
We compute $T_i=\big((\pi-\alpha)2\sin(\tau_i)\big)^{-1}$ and $\{H_0,H_1\}=8\pi(\pi-\alpha)\sin(\theta_0)\sqrt{\sin(\tau_0+\frac\alpha 2)\sin(\tau_0-\frac\alpha 2)}$.

Consider the intersection of the level sets $H_0=-2\cos(\tau_0)$ and $H_1=-2\cos(\tau_1)$. We have the symmetric formula:
$$\{H_0,H_1\}=8\pi(\pi-\alpha)\sqrt{\cos(\frac \alpha 2)^2-\cos(\tau_0)^2-\cos(\tau_1)^2+\cos(\tau_0)^2\cos(\tau_1)^2}.$$
and hence we deduce from Theorem \ref{pairingformula} the following proposition:
\begin{proposition}\label{prop-smat}
Let $\Sigma$ be a punctured torus, $\gamma_0$ and $\gamma_1$ be respectively the curves $\gamma$ and $\delta$ of Figure \ref{fig:torep}.
Let $\psi_0$ and $\psi_1$ be normalized eigenvectors of $T^{\gamma_0}_r$ and $T^{\gamma_1}_r$ with eigenvalues $-2\cos(\tau_0)$ and $-2\cos(\tau_1)$ as in Theorem \ref{pairingformula} where $\tau_i=\frac{\pi m_i}{r}$. Set 
$$G=\cos(\frac \alpha 2)^2-\cos(\tau_0)^2-\cos(\tau_1)^2+\cos(\tau_0)^2\cos(\tau_1)^2.$$
Then we have the following asymptotic formula:
$$\langle\psi_0,\psi_1\rangle=\sqrt{\frac {8\sin(\tau_0)\sin(\tau_1)} {r}}\frac{\cos(\frac{r}{2\pi}\int_{\mathcal D}d\theta\wedge d\tau+\frac\pi 4)}{G^{1/4}}+O(r^{-3/2})$$

In this formula $\mathcal D$ is the domain of $\boM(\Sigma,\alpha)$ defined by the equations $f_{\gamma_0}\ge-2\cos(\tau_0)$ and $f_{\gamma_1}\ge -2\cos(\tau_1)$.
\end{proposition}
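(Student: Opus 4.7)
The plan is to apply Theorem \ref{pairingformula} to the punctured torus with $\gamma_0=\gamma$ and $\gamma_1=\delta$, identifying each of its ingredients using the symplectic data on $\boM(\Sigma,\alpha)$ derived in Lemma \ref{toregeom} and in the paragraph immediately preceding the statement.

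First I would fix the classical data. By Lemma \ref{toregeom}, in the action--angle coordinates $(\tau,\theta)\in(\alpha/2,\pi-\alpha/2)\times(\R/2\pi\Z)$ the principal symbols are
\[
H_0=-2\cos\tau,\qquad H_1=-2\,\frac{\sqrt{\sin(\tau+\alpha/2)\sin(\tau-\alpha/2)}}{\sin\tau}\cos\theta.
\]
The level sets $\Sigma_0=\{\tau=\tau_0\}$ and $\Sigma_1=\{H_1=-2\cos\tau_1\}$ intersect transversally in exactly the two points $z_\pm=(\tau_0,\pm\theta^\ast)$ where $\cos\theta^\ast=\sin\tau_0\cos\tau_1/\sqrt{\sin(\tau_0+\alpha/2)\sin(\tau_0-\alpha/2)}$.

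Second, I would extract the two scalar ingredients. The condition $\int\omega=1$ forces $\omega=\frac{1}{2\pi(\pi-\alpha)}d\tau\wedge d\theta$, whence $T_i=\bigl(2(\pi-\alpha)\sin\tau_i\bigr)^{-1}$ and $\{H_0,H_1\}=-8\pi(\pi-\alpha)\sin\theta\sqrt{\sin(\tau+\alpha/2)\sin(\tau-\alpha/2)}$, as written just before the statement. Using $\sin(\tau+\alpha/2)\sin(\tau-\alpha/2)=\sin^2\tau-\sin^2(\alpha/2)$ together with the defining equation of $\Sigma_1$, one computes
\[
\sin^2\theta^\ast\,\sin(\tau_0+\alpha/2)\sin(\tau_0-\alpha/2)=\sin^2\tau_0\sin^2\tau_1-\sin^2(\alpha/2)=G,
\]
so $|\{H_0,H_1\}(z_\pm)|=8\pi(\pi-\alpha)\sqrt G$ with opposite signs at $z_\pm$. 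Combining the prefactor of Theorem \ref{pairingformula} with the factor $2$ coming out of the sum gives
\[
\frac{2\hbar^{1/2}}{\sqrt{T_0T_1}}\cdot\frac{2}{\sqrt{|\{H_0,H_1\}|}}=\sqrt{\frac{8(\pi-\alpha)\hbar\sin\tau_0\sin\tau_1}{\pi}}\cdot G^{-1/4},
\]
and the identities $\hbar=1/N$, $N=\br(D-a)$, $r=D\br$, $\alpha=\pi a/D$ collapse $(\pi-\alpha)\hbar/\pi$ to $1/r$, so the prefactor becomes exactly $\sqrt{8\sin\tau_0\sin\tau_1/r}\,G^{-1/4}$.

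Third, for the phase I would choose reference points $z_0, z_1$ symmetric between $z_+$ and $z_-$ (possible because of the involution $\theta\mapsto-\theta$), making $\Phi_++\Phi_-=0$ so that the overall phase of the sum is trivial. Writing $S=\Phi_+-\Phi_-$ and using the opposite Maslov signs at $z_\pm$, the sum collapses to $\frac{2}{\sqrt{|\{H_0,H_1\}|}}\cos(\pi S/\hbar+\pi/4)$. By Stokes' theorem $S=\pm\int_{\mathcal D}\omega$, where $\mathcal D=\{f_{\gamma_0}\ge-2\cos\tau_0,\ f_{\gamma_1}\ge-2\cos\tau_1\}$ is the oriented region appearing in the statement, and the sign is fixed by the loop orientation of $\partial\mathcal D$. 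Substituting $\omega=-\frac{1}{2\pi(\pi-\alpha)}d\theta\wedge d\tau$ and repeating the $\hbar$--to--$r$ reduction gives $\pi S/\hbar=\frac{r}{2\pi}\int_{\mathcal D}d\theta\wedge d\tau$, which is precisely the argument of the cosine in the proposition. The residual $\pm$ of Theorem \ref{pairingformula} is absorbed in the phase normalization of $\psi_0, \psi_1$, yielding the stated formula.

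The main obstacle is purely bookkeeping: tracking the $\pi$-factors that relate the Liouville normalization $\int\omega=1$ to the semiclassical parameter $\hbar=1/N$ and to the TQFT level $r$, choosing the symmetric reference points that eliminate the overall phase, and orienting $\partial\mathcal D$ so that the Maslov phase and the symplectic area combine into $\frac{r}{2\pi}\int_{\mathcal D}d\theta\wedge d\tau+\pi/4$ rather than $-\pi/4$. No new geometric or analytic input is required beyond Theorem \ref{pairingformula} and the trigonometric identity identifying $\sin^2\theta^\ast\sin(\tau_0+\alpha/2)\sin(\tau_0-\alpha/2)$ with the discriminant $G$ of the statement.
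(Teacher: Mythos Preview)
Your proposal is correct and follows essentially the same route as the paper: both apply Theorem \ref{pairingformula} directly, using the action--angle data of Lemma \ref{toregeom} to compute $T_i=\bigl(2(\pi-\alpha)\sin\tau_i\bigr)^{-1}$ and $|\{H_0,H_1\}|=8\pi(\pi-\alpha)\sqrt{G}$, and then collapse the constants via $(\pi-\alpha)\hbar/\pi=1/r$. Your write-up is in fact more explicit than the paper's on two points the text leaves implicit---the trigonometric identity $\sin^2\theta^\ast\,\sin(\tau_0+\alpha/2)\sin(\tau_0-\alpha/2)=G$ and the Stokes/ symmetry argument converting the two-point sum into a single cosine with area phase---so there is nothing to correct.
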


Let us deduce Result \ref{smatrix} from this proposition. Let $(\Gamma,c)$ be the colored graph given in Figure \ref{fig:smat} where $c=(m_0,m_1,a)$. Then from TQFT axioms, denoting by $\Gamma_i$ the colored graphs shown in Figure \ref{fig:torep} we have:

$$\langle \Gamma,c\rangle =\frac{||\Gamma_0||||\Gamma_1||}{\eta}\langle \psi_0,\psi_1\rangle$$
where the first bracket stands for the Kauffman bracket and $\eta=\sqrt{\frac 2 r}\sin(\frac \pi r)$ is the quantum invariant of $S^3$. From Formula \eqref{norm}, we get 
\scriptsize
$$\frac{||\Gamma_0||||\Gamma_1||}{\eta}\sim\frac{r^{3/2}}{\sqrt{2}\pi}\left(\frac{\langle m_0+\frac{a-1}{2}\rangle!\langle m_0-\frac{a+1}{2}\rangle!\langle m_1+\frac{a-1}{2}\rangle!\langle m_1-\frac{a+1}{2}\rangle!\langle\frac{a-1}{2}\rangle!^4}
{\langle m_0\rangle!\langle m_0-1\rangle!\langle m_1\rangle!\langle m_1-1\rangle!\langle a-1\rangle!^2}\right)^{1/2}$$
\normalsize
which proves Result \ref{smatrix}, remarking that $\langle m_i\rangle=\sin(\tau_i)$.

\subsection{Wick symbol and the sub-principal symbol}\label{subleading}
The aim of this section is to prove Formula \ref{sub} from the asymptotic of the Wick symbol of the curve operators. Let $\sigma_0,\sigma_1$ be the two first terms in the expansion of the total Toeplitz symbols of the curve operator $T^\gamma_N$. From Equation \eqref{lap}, we get 
$$W(z)=\frac{\la T^\gamma_N \rho_z,\rho_z\ra}{\la \rho_z,\rho_z\ra}=\sigma_0+\frac{1}{N}(\sigma_1+\Delta_S\sigma_0)+O(N^{-2})$$
On the other hand, we have $T^\gamma_N=f_0(z,H)+\frac{1}{N}f_1(z,H)+O(N^{-2})$. 
By a computation similar to Lemma \ref{standard}, we get for a smooth function $g$ the expansion
\begin{eqnarray*}
\frac{g(H)\rho_{z_0}}{\rho_{z_0}}&=&g(\frac{z\ba{z_0}}{1+z\ba{z_0}})+\frac{1}{N}\Big(
-\frac{z\ba{z_0}}{1+z\ba{z_0}}g'(\frac{z\ba{z_0}}{1+z\ba{z_0}})\\
&&+\frac{z\ba{z_0}}{2(1+z\ba{z_0})^2}g''(\frac{z\ba{z_0}}{1+z\ba{z_0}})
\Big)+O(N^{-2})
\end{eqnarray*}

This gives that the Wick symbol of $T^\gamma_N$ is up to $O(N^{-2})$ the same as the one of the Toeplitz operator with symbol
$$f_0(z_0,x_0)+\frac{1}{N}\Big(f_1(z_0,x_0)-x_0\partial_x f_0(z_0,x_0)
+\frac{1}{2}\frac{|z_0|^2}{(1+|z_0|^2)^2}\partial_x^2 f_0(z_0,x_0)
\Big)$$

where we have set $x_0=\frac{|z_0|^2}{1+|z_0|^2}$.
We have $\sigma_0(z_0)=f_0(z_0,x_0)$ and from Equation \eqref{magic2}, we get 
$$\frac{1}{2}\Delta_S \sigma_0(z_0)=f_1(z_0,x_0)-x_0\partial_x f_0(z_0,x_0)+\frac{1}{2}\frac{|z_0|^2}{(1+|z_0|^2)^2}\partial_x^2f_0(z_0,x_0)$$
which implies that $\sigma_1=\frac{1}{2}\Delta_S \sigma_0$ and proves Equation \eqref{sub}.

\section{The genus 2 case}\label{genus2}
 \subsection{The Hilbert space}
Consider the case where $\Sigma$ has genus 2 and no marked points. Pick a pants decomposition such that the graph $\Gamma$ associated to it is a theta graph with edges $e_1,e_2,e_3$ as in Figure \ref{fig:genus2}.
Then, set 
$$U=\{(\tau_1,\tau_2,\tau_3)\in [0,\pi]^3, \,\forall i,j,k\,\tau_i\le \tau_j+\tau_k\textrm{ and }\tau_1+\tau_2+\tau_3\le 2\pi\}.$$
As in Subsection \ref{representation}, the map $p:\boM(\Sigma)\to U$ is an integrable system which has a natural section $s$. We can compare this integrable system with the following well-known system on $\P^3$.

Denote by $Z=[Z_0,Z_1,Z_2,Z_3]$ the homogeneous coordinates on $\P^3$ and set $h_1(Z)=\frac{|Z_2|^2+|Z_3|^2}{|Z|^2}, h_2(Z)=\frac{|Z_1|^2+|Z_3|^2}{|Z|^2},h_3(Z)=\frac{|Z_1|^2+|Z_2|^2}{|Z|^2}$. The map $p=(\pi h_1,\pi h_2,\pi h_3)$ defines an integrable system $\P^3\to U$ with a preferred section $s$ consisting in choosing real and positives values for the $Z_i$s. 

Using the preferred sections and the angle coordinates, we can construct a continuous map $\Phi:\boM(\Sigma)\to \P^3$ such that it commutes with the projections $p$ and the sections $s$, and it is a symplectomorphism over the pre-image of the interior of $U$. We observe that the two spaces are nevertheless distinct as the pre-image of the point $(0,0,0)$ in $\boM(\Sigma)$ is a 3-dimensional variety whereas its pre-image in $\P^3$ is just a point.

A the quantum level, it is natural to expect that $\Phi$ induces an isomorphism between $V_r(\Sigma)$ and the geometric quantization of $\P^3$ at some level. We explicit this isomorphism in the following lines.
Let $\boL\to\P^3$ be the canonical bundle of $\P^3$. We endow it with its hermitian structure and connection such that the curvature is $\frac{1}{i}\omega$ where $\omega=\frac{i}{2\pi}\partial\overline{\partial}\log ||Z||^2$.
The space $H^0(\P^3,\boL^r)$ of holomorphic sections of $\boL^r$ is canonically isomorphic to the space $\C[Z_0,Z_1,Z_2,Z_3]_r$ of homogeneous polynomials of degree $r$.

The scalar product of two sections $\phi,\psi\in H^0(\P^3,\boL^r)$ is defined by $\langle \phi,\psi\rangle=\int_{\P^3}\langle \phi(x),\psi(x)\rangle\mathrm{d}\mu(x)$ where $\mu=\frac{1}{3!}\omega^{\wedge 3}$. At the level of polynomials, this formula makes the monomials orthogonal to each other and 
$$|| \prod_{i=0}^3 Z_i^{n_i}||^2=\frac{\prod_i n_i!}{(\sum n_i+3)!}$$

Let $\C^3\subset \P^3$ be the affine chart defined by $Z_0=1$ and denote by $z_1,z_2,z_3$ the corresponding affine coordinates. Denote by $t$ the coordinate $z_0$ viewed as holomorphic section of $\boL$. Any section of $\boL^r$ has over $\C^3$ the form $f t^r$ for some holomorphic $f$. We compute that in coordinates 
\begin{eqnarray*}
|t|^2&=&\frac{1}{1+|z|^2}\\
\omega&=&\frac{i}{2\pi}\frac{(\sum \mathrm{d}z_i\wedge\mathrm{d}\overline{z}_i)(1+|z|^2)-(\sum \overline{z}_i\mathrm{d}z_i)(\sum z_i\mathrm{d}\overline{z}_i)}{(1+|z|^2)^2}\\
\mu&=&\frac{1}{(2\pi)^3}\frac{\bigwedge_i i\mathrm{d}z_i\wedge\mathrm{d}\overline{z}_i}{(1+|z|^2)^4}\\
||f t^r||^2&=&\frac{1}{(2\pi)^3}\int_{\C^3}\frac{|f(z_1,z_2,z_3)|^2}{(1+|z|^2)^{r+4}}\bigwedge_i i\mathrm{d}z_i\wedge\mathrm{d}\overline{z}_i
\end{eqnarray*}
Hence, we can identify the space $H^0(\P^3,\boL^r)$ with the space of holomorphic functions from $\C^3$ to $\C$ such that $||f t^r||^2<\infty$. 

Let $A_r=\{(\al_1,\al_2,\al_3)\in \mathbb{N}, \al_1+\al_2+\al_3\le r-2\}$ 
this set parametrizes a hermitian basis for $H^0(\P^3,\boL^{r-2})$ where we set 
\be\label{basis2}
\phi^r_{\al_1,\al_2,\al_3}=D(r;\al_1,\al_2,\al_3)z_1^{\al_1}z_2^{\al_2}z_3^{\al_3}
\ee
and $$D(r;\al_1,\al_2,\al_3)=\left( \frac{(r+1)!}{\al_1!\al_2!\al_3!(r-2-\al_1-\al_2-\al_3)!}\right)^{1/2}$$

Recall now that the set of admissible colorings of the theta graph shown in Figure \ref{fig:genus2} is the set 
$$I_r=\{(c_1,c_2,c_3)\in \mathcal{C}_r, \forall i,j,k, c_i< c_j+c_k\text{ and }c_i+c_j+c_k<2r\text{ and odd}\}$$
There is a natural bijection between $A_r$ and $I_r$ given by putting for all $i,j,k$ $c_i=\al_j+\al_k+1$ or equivalently $\alpha_i=\frac{c_j+c_k-c_i-1}{2}$.

Using this bijection, we can identify the spaces $V_r(\Sigma)$ and $H^0(\P^3,\boL^{r-2})$ as Hermitian vector spaces.

Let us also remark that, for any distinct $i,j,k\in\{1,2,3\}$, we have:
\be\label{voila}
\phi^r_{\al_1,\al_2,\al_3}(z_1,z_2,z_3)=\vp^{r+1}_{\alpha_i}(z_i)\vp^{r-\alpha_i}_{\alpha_j}(z_j)
\vp^{r-1-\alpha_i-\alpha_j}_{\alpha_k}(z_k),
\ee
where $\vp^r_\alpha$ is given by \eqref{basis}, out of which we easily derive the expression for the reproducing kernel 
$\rho_{z'_1z'_2z'_3}$:
\be\label{repgen2}
\rho_{z'_1z'_2z'_3}(z_1,z_2,z_3)=(r-1)r(r+1)(1+\overline{z'_1}z_1+\overline{z'_2}z_2+\overline{z'_3}z_3)^{r-2}
\ee
\subsection{Toeplitz operators and matrix elements}

Let us first derive a result partially similar to Proposition \ref{mel1} and Theorem \ref{totalsymbolabs}:
\begin{proposition}\label{mel2}
Let $T$ be an Hermitian operator on $H^0(\P^3,\boL^{r-2})$ whose matrix elements on the basis $\{\phi^r_{\al_1,\al_2,\al_3}, (\al_1,\al_2,\al_3)\in A_r\}$
are denoted by $F_{\al,\al'}$.

Let us suppose that for any $\mu$, there is an ordering $i,j,k\in\{1,2,3\}$  and an analytic extension of 
\be\label{choice}
F_{\al,\al+\mu}E(r,\mu_i,\al_i)E(r-1-\al_i,\mu_j,\al_j)E(r-2-\al_i-\al_j,\mu_k,\al_k)
\ee 
to an holomorphic  function 
$G(\mu,\al)$ 
on a ``strip" $\Omega$ defined by
$$ M<\re(\al_1),M<\re(\al_2),M<\re(\al_3),\re(\al_1+\al_2+\al_3)<M'$$
where $M<0<r-2<M'$ and suppose moreover that, uniformly on $\Omega$ we have 
$$
|G(\mu,s)|\le a \exp(b\sum|\im \al_i|),
$$
for some constants $a$ and $b<\pi$. Then the formulas (write $\rho=\sum_i\rho_i$)
\be\label{mellin3}
f_\mu(\rho_1,\rho_2,\rho_3)=\frac{\prod_i\rho_i^{-1-\frac{\mu_i} 2}}{(2i\pi)^3}(1+\rho)^{r-4}
\int
F_{\al,\al+\mu}(\Pi\rho_i^{-\al_i})C(r,\mu,\al)d\al,
\ee
where $C(r,\mu,\alpha)=D(r;\al)D(r;\alpha+\mu)$ and the integral is over a product of three vertical lines in $\Omega$
\be\label{polar}
f=\sum_{\mu\in\Z^3}f_\mu(\rho_1,\rho_2,\rho_3)e^{i\mu.\theta},\ \theta\in (\R/2\pi\Z)^3,
\ee

define a smooth function on the dense open set $V=\{[z_0,z_1,z_2,z_3]\in \P^3\text{ such that }z_0z_1z_2z_3\ne 0\}$. 
This function satisfies the equation:
\be\label{topgenus2}
T=\mathcal T_f.
\ee
Moreover, writing $\al=\sum_i\al_i$, one has for any $\epsilon>0$ and for $\epsilon<\frac{\al_i}{r}$ and $\al<1-\epsilon$:

$$
F_{\al,\al+\mu}=
f_\mu\left(\frac {\al_1}{r-\al},\frac {\al_2}{r-\al},
\frac {\al_3}{r-\al}\right)+
O(r^{-1}).
$$
\end{proposition}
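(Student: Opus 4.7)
The strategy is to extend the one-dimensional Mellin-transform machinery of Section \ref{mateletotal} to three complex variables, exploiting that the basis \eqref{basis2} is a weighted monomial in $z_1,z_2,z_3$, that the reproducing kernel \eqref{repgen2} is a shifted power of $1+\sum\bar z'_iz_i$ (just as on $\P^1$), and that the factorization \eqref{voila} writes each $\phi^r_\alpha$ as a triple product of sphere-basis vectors, which will let the one-dimensional arguments be recycled in each coordinate.

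First, starting from $\langle\mathcal{T}_f\phi^r_\alpha,\phi^r_{\alpha+\mu}\rangle=\langle f\phi^r_\alpha,\phi^r_{\alpha+\mu}\rangle$ and the explicit scalar product in the chart $\{z_0=1\}$, passing to polar coordinates $z_i=\sqrt{\rho_i}\,e^{i\theta_i}$ and integrating the three angles against the Fourier decomposition \eqref{polar} yields a triple Dirichlet-type integral
\[
F_{\alpha,\alpha+\mu}=C(r,\mu,\alpha)\int_{[0,\infty)^3} f_\mu(\rho)\,\prod_i\rho_i^{\alpha_i+\mu_i/2}\,\frac{d\rho_1\,d\rho_2\,d\rho_3}{(1+\sum_i\rho_i)^{K}},
\]
where $K$ is the weight appearing in the norm on $H^0(\P^3,\boL^{r-2})$. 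This is the three-variable analogue of \eqref{me2} and identifies the right-hand side with a shifted triple Mellin transform in $(\alpha_1,\alpha_2,\alpha_3)$.

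Second, I would invert this Mellin transform. The sequential choice of $E$-factors in \eqref{choice} is designed so that, in view of the factorization \eqref{voila}, multiplying $F_{\alpha,\alpha+\mu}$ by $\prod E(\cdots)$ absorbs the square-root ramifications coming from each one-variable normalization; the resulting $G(\mu,\alpha)$ is thus single-valued on $\Omega$. The exponential bound $|G(\mu,\alpha)|\le a\exp(b\sum|\im\alpha_i|)$ with $b<\pi$ then ensures absolute convergence of iterated vertical-line integrals, exactly as in the proof of Theorem \ref{totalsymbolabs}. Three successive one-dimensional Mellin inversions produce the formula \eqref{mellin3}; smoothness of $f_\mu$ on the positive octant $\{\rho_i>0\}$ follows from differentiation under the integral (justified by the same bound), and summing over $\mu\in\Z^3$ gives the function $f$ on $V=\{z_0z_1z_2z_3\ne 0\}$. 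The identity $T=\mathcal{T}_f$ is then immediate: the matrix-element computation of the preceding paragraph, applied to this $f$, returns the original $F_{\alpha,\alpha+\mu}$ since by construction both sides have the same Mellin transform on the lattice $A_r$.

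Finally, for the asymptotic statement I would rescale $\alpha_i=r t_i$ in \eqref{mellin3} and apply three-dimensional stationary phase as in Theorem \ref{asymptotic}. The phase $\sum_i t_i\log\rho_i-\log(1+\sum_j\rho_j)$ has a unique non-degenerate critical point at $\rho_i=t_i/(1-\sum_j t_j)$, which in the original variables reads $\rho_i=\alpha_i/(r-\alpha)$; combined with the Stirling expansion of the Gamma factors hidden in $C(r,\mu,\alpha)$, the leading order of stationary phase gives the claimed $F_{\alpha,\alpha+\mu}=f_\mu(\alpha_1/(r-\alpha),\alpha_2/(r-\alpha),\alpha_3/(r-\alpha))+O(r^{-1})$, uniformly on $\{\epsilon<\alpha_i/r,\ \alpha/r<1-\epsilon\}$. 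The principal difficulty lies in the second step rather than the third: one must check that the strip $\Omega$, cut out by three lower bounds $\re\alpha_i>M$ and the single upper bound $\re(\alpha_1+\alpha_2+\alpha_3)<M'$, admits a genuine product of vertical lines along which the three successive Mellin inversions can all be performed, and verify that the freedom in ordering $(i,j,k)$ in \eqref{choice} is coherent, i.e.\ different orderings produce the same $G(\mu,\alpha)$ on the overlap of their domains. Once this coherence is in place, the remainder of the argument is a direct three-variable translation of the $\P^1$ analysis of Section \ref{mateletotal}.
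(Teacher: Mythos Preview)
Your proposal is correct and follows essentially the same approach as the paper: the paper's proof is a two-line pointer stating that one adapts the $\P^1$ argument of Theorem \ref{totalsymbolabs} using the factorization \eqref{voila}, which is exactly what you outline in detail. The only nuance worth noting is that the paper does not ask for coherence of different orderings on overlaps; rather, for each fixed $\mu$ a single ordering $(i,j,k)$ is chosen (possibly together with the conjugate version analogous to \eqref{me3bis}), and the bounds $M<0<r-2<M'$ are invoked only to guarantee integrability of $f$ near the divisor $\{z_0z_1z_2z_3=0\}$, in the spirit of Remark \ref{derder}, rather than any additional regularity there.
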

The proof is an easy but tedious adaptation of the proof of Theorem \ref{totalsymbolabs}. The situation here ``factorizes" 
thanks to 
the property \eqref{voila}, which induces the hypothesis \eqref{choice}. The examples below show that 
different choices of orderings in \eqref{choice} and a version of \eqref{mellin3} similar to \eqref{me3bis}  will be needed.
As shown in Remark \ref{derder}, the existence of the Toeplitz operator requires only some integrability properties (and no regularity
) of the function defined by \eqref{me3}, \eqref{me3bis} so the bounds on $M,\ M'$ are sufficient in this case.
\begin{remark}
Although we will not do it here let us mention that we can derive the regularity properties 
of $f$ at the boundary of $V$ out of holomorphy properties of
the functions $G(\mu,.)$.

The interesting trace functions in the genus-2 case being singular, it is useless to derive a general formula for
the asymptotics of $f$ in the entire $\P^3$. 

\end{remark}

\subsection{Examples}\label{examples}
The first example is an extension of one of the generator for the 4 times punctured sphere.
Let us denote by $\gamma,\delta,\eta$ the curves shown in Figure \ref{fig:genus2}.

\begin{figure}[htbp]
\centering
  \def\svgwidth{8cm}
 \executeiffilenewer{genus2.svg}{genus2.pdf}%
 {inkscape -z -D --file=genus2.svg %
 --export-pdf=genus2.pdf --export-latex}%
 \input{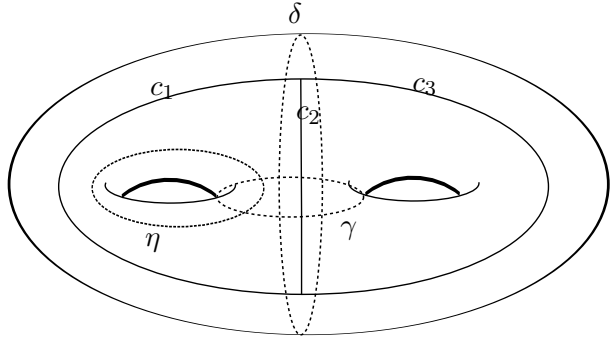}%

  \caption{Curves on a genus 2 surface}
  \label{fig:genus2} 
\end{figure}
One has immediately that the curve operator $T^\gamma_r$ is diagonal and one has $F_{\al,\al}^{\gamma}=-2\cos(\pi\frac{\al_1+\al_3}{r})$. 
 The matrix elements of the curve operator $T^\delta_r$ associated to $\delta$ can be obtained from Proposition \ref{spherep}. We first remark that the only non zero
 ones are given by $\mu=0$ and $\mu=\pm\nu$ where $\nu=(1,-1,1)$. We get
 \begin{eqnarray*}\label{matgenus20}
 F^\delta_{\al,\al}&=&
 -2\cos\frac{\pi(2\al_1+2\al_2+1)}r\\
 &&-4\frac{\sin^2\frac{\pi}r\al_2\sin^2\frac{\pi}r(\al_3+1)}{\sin\frac{\pi}r(\al_1+\al_3+1)\sin\frac{\pi}r(\al_1+\al_3+2)}\\
 &&-4\frac{\sin^2\frac{\pi}r(\al_1+\al_2+\al_3+1)\sin^2\frac{\pi}r(\al_1)}{\sin\frac{\pi}r(\al_1+\al_3)\sin\frac{\pi}r(\al_1+\al_3+1)},
 \end{eqnarray*}
 
 \be\label{matgenus21}
 F^\delta_{\al,\al+\nu}=4\left(
 \frac{\langle\al_1+\al_2+\al_3+2\rangle\langle\al_3\rangle\langle\al_2\rangle\langle\al_1\rangle\langle\al_1+1\rangle}{\langle\al_1+\al_3+2\rangle\langle\al_1+\al_3+3\rangle\langle\al_1+\al_3+2\rangle\langle\al_1+\al_3+3\rangle}\right)^{1/2}
\ee
 and a similar expression for $F^\delta_{\al,\al-\nu}$.

Our second example deals with the first non-reducible case, that is the curve $\eta$ shown in Figure \ref{fig:genus2}. 
A standard computation with fusion rules gives the following formulas where $\mu=(0,0,1)$ and $\nu=(1,-1,0)$. 
\begin{eqnarray*}
F_{\alpha,\alpha+\mu}&=&\frac{\langle\al_1+\al_2+\al_3+2\rangle\langle\al_3+1\rangle}{\big(\langle\al_1+\al_3+1\rangle\langle\al_1+\al_3+2\rangle\langle\al_2+\al_3+1\rangle\langle\al_2+\al_3+2\rangle\big)^{1/2}}\\
F_{\alpha,\alpha-\mu}&=&\frac{\langle\al_1+\al_2+\al_3+1\rangle\langle\al_3\rangle}{\big(\langle\al_1+\al_3\rangle\langle\al_1+\al_3+1\rangle\langle\al_2+\al_3\rangle\langle\al_2+\al_3+1\rangle\big)^{1/2}}\\
F_{\alpha,\alpha+\nu}&=&-\frac{\langle\al_2\rangle\langle\al_1+1\rangle}{\big(\langle\al_1+\al_3+1\rangle\langle\al_1+\al_3+2\rangle\langle\al_2+\al_3\rangle\langle\al_2+\al_3+1\rangle\big)^{1/2}}\\
F_{\alpha,\alpha-\nu}&=&-\frac{\langle\al_1\rangle\langle\al_2+1\rangle}{\big(\langle\al_1+\al_3\rangle\langle\al_1+\al_3+1\rangle\langle\al_1+\al_2+1\rangle\langle\al_1+\al_2+2\rangle\big)^{1/2}}
\end{eqnarray*}

\begin{theorem}\label{symbolgenus}
The operators $T^\gamma_r,\ T^\delta_r,\ T^\eta_r$ are Toeplitz operators with exact symbols $f^\gamma_r,\ f^\delta_r,\ f^\eta_r$ 
defined on $\P^3$, that is
\be\label{tooo}T^\xi_r=\Pi_rf^\xi_r\Pi_r,\quad  \xi\in\{\gamma,\delta,\eta\},\ee
where $\Pi_r$ is the Toeplitz projector associated to $\P^3$.

Moreover $f^\gamma_r,\ f^\delta_r,\ f^\eta_r$ admit an  asymptotic expansion in powers of $\frac 1r$  
smooth in $V$ with leading orders, writing $\tau_i=\frac{\rho_i}{1+\rho}$, 
\begin{eqnarray*}
f^\gamma_0(\tau,\theta)&=&-2\cos(\pi\frac{\tau_1+\tau_3}{r})\\
f^\delta_0(\tau,\theta)&=&2\cos{(\tau_1+\tau_2)}+4\frac{\sin^2{\tau_2}\sin^2{\tau_3}+\sin^2{\tau_1}\sin^2{(\tau_1+\tau_2+\tau_3)}}{\sin^2{(\tau_1+\tau_3)}}\\
&&-8\frac{\sin\tau_1\sin \tau_2\sin \tau_3\sin{(\tau_1+\tau_2+\tau_3)}}{\sin^2{(\tau_1+\tau_3)}}\cos{(\theta_1-\theta_2+\theta_3)}\\
f^\eta_0(\tau,\theta)&=&2\frac{\sin(\tau_1+\tau_2+\tau_3)\sin(\tau_3)}{\sin(\tau_1+\tau_3)\sin(\tau_2+\tau_3)}\cos(\theta_3)\\
&&-2\frac{\sin(\tau_1)\sin(\tau_2)}{\sin(\tau_1+\tau_3)\sin(\tau_2+\tau_3)}\cos(\theta_1-\theta_2).
\end{eqnarray*}
Finally $f^\gamma_0=-\tr\rho(\gamma)$ and $f^\delta_0=-\tr\rho(\delta)$. 
\end{theorem}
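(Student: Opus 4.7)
The plan is to apply Proposition \ref{mel2} to each of the three operators $T^\gamma_r$, $T^\delta_r$, $T^\eta_r$, using the matrix elements tabulated in Subsection \ref{examples}, and then identify the leading terms of the resulting exact symbols with the claimed expressions, invoking the geometric argument of Lemma \ref{spheregeom} for the trace-function identifications.

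First I would dispatch $T^\gamma_r$, which is the easy case: it is diagonal with coefficient $-2\cos(\pi(\al_1+\al_3)/r)$ depending only on the single combination $\al_1+\al_3$. The function $-2\cos(\pi h_1)$ on $\P^3$ is smooth everywhere (in particular on $V$) and Proposition \ref{mel2} produces an exact symbol $f^\gamma_r$ differing from it by $O(1/r)$ terms smooth on $V$; under the identification $\al_i/r\to\tau_i$ the leading term is $-2\cos(\pi(\tau_1+\tau_3))$, which equals $-\tr\rho(\gamma)$ by the geometric meaning of the action coordinates coming from the integrable system $p:\boM(\Sigma)\to U$.

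For $T^\delta_r$ and $T^\eta_r$ the heart of the argument is the verification of the hypotheses of Proposition \ref{mel2}. The matrix elements listed in Subsection \ref{examples} are products and quotients of $\langle n\rangle=\sin(\pi n/r)$ evaluated at affine integer combinations of the $\al_i$'s; each such $\langle\cdot\rangle$ extends to an entire function of $\al$, so the candidate $G(\mu,\al)$ defined via \eqref{choice} is meromorphic on all of $\C^3$. For each shift vector $\mu$ one picks the ordering $(i,j,k)$ in \eqref{choice}, possibly combined with the variant of \eqref{mellin3} analogous to \eqref{me3bis} for components of $\mu$ of mixed sign, so that the simple zeros of the three $E$-factors cancel the simple poles produced by $\sin$ factors in the denominators at integer points, yielding a holomorphic $G(\mu,\al)$ on a strip $M<\Re(\al_i)$, $\Re(\al_1+\al_2+\al_3)<M'$ with $M<0<r-2<M'$. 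The exponential bound $|G(\mu,\al)|\le a\exp(b\sum|\Im\al_i|)$ with $b<\pi$ is immediate from $|\sin(\pi n/r)|\le\exp(\pi|\Im n|/r)$, since $b=O(1/r)$. Proposition \ref{mel2} then produces the exact symbols $f^\delta_r,f^\eta_r$ on $V$, together with the asymptotic expansion in $1/r$; the principal terms are read off by the final clause of Proposition \ref{mel2} applied with $\al_i=r\tau_i$, and elementary trigonometric simplification of the formulas of Subsection \ref{examples} produces precisely the closed forms $f^\delta_0, f^\eta_0$ stated in the theorem.

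The identification $f^\delta_0=-\tr\rho(\delta)$ proceeds geometrically: cutting $\Sigma$ along $\gamma$ yields a $4$-times punctured sphere whose boundary holonomies are governed by $\tau_1,\tau_2,\tau_3$ (and by $\tau_1+\tau_2+\tau_3$ through the global constraint), and Lemma \ref{spheregeom} applied in this setting computes $\tr\rho(\delta)$ as exactly the displayed $-f^\delta_0$, matching the output of the Mellin computation. The principal obstacle is hypothesis (i) of Proposition \ref{mel2} for the mixed-sign shifts $\nu=(1,-1,1)$ in $T^\delta_r$ and $\nu=(1,-1,0)$ in $T^\eta_r$: no single ordering in \eqref{choice} cancels all denominator zeros simultaneously, so the proof must combine several orderings (equivalently, both \eqref{mellin3} and its conjugate variant) and check that the resulting pieces agree on overlaps to define a single holomorphic $G(\mu,\al)$ on the entire strip. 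Once this holomorphy is secured, the exponential estimate, the identity \eqref{tooo}, the $1/r$-expansion, and the identification of the leading trace functions all follow along the lines sketched above.
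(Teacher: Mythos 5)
Your overall strategy is the paper's: feed the explicit matrix elements of Subsection \ref{examples} into Proposition \ref{mel2}, choosing orderings in \eqref{choice} (and the conjugate variant of \eqref{mellin3} analogous to \eqref{me3bis}) for each shift $\mu$, then read off the leading symbols and identify $f^\delta_0$ with $-\tr\rho(\delta)$ via the $4$-holed sphere computation. However, there is a genuine gap at the step you yourself flag as the ``principal obstacle'', and you both misplace the obstacle and propose a resolution that cannot work. The problematic terms are not the mixed-sign off-diagonal shifts $\nu=(1,-1,1)$ and $\nu=(1,-1,0)$ (those are precisely what the freedom of ordering and the \eqref{me3bis}-type formula are designed to handle); they are the \emph{diagonal} ($\mu=0$) terms of $T^\delta_r$. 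For $\mu=0$ all the $E$-factors in \eqref{choice} are identically $1$, so no choice of ordering and no conjugate variant changes the candidate $G(0,\al)=F^\delta_{\al,\al}$. A term such as $\sin^2\frac{\pi\al_1}{r}\big/\big(\sin\frac{\pi}{r}(\al_1+\al_3)\sin\frac{\pi}{r}(\al_1+\al_3+1)\big)$ has a non-removable polar hypersurface $\al_1+\al_3=0$ that meets any strip with $M<\Re(\al_i)$, $M<0$, because the numerator vanishes only on $\al_1\in r\Z$; the same happens near $\al_1+\al_3=r-2$ for the term with $\sin^2\frac{\pi}{r}(\al_3+1)$ in the numerator. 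So the holomorphy hypothesis of Proposition \ref{mel2} simply fails for $T^\delta_r$, and your plan ``once this holomorphy is secured, everything follows'' cannot be completed as stated.

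What is missing is the substitute argument the paper supplies: for the exact identity \eqref{tooo} at fixed $r$ one does not need holomorphy of $G(\mu,\al)$ on a strip with $M<0<r-2<M'$, only integrability of the resulting function $f^\delta_r$ at the divisor $\{z_0z_1z_2z_3=0\}$ (Remark \ref{derder}). The paper obtains this by restricting the contours essentially to $\Re(\al_1)=\Re(\al_3)=0$, $\Re(\al_2)=-\epsilon$, noting that $\partial^2_{\al_1}F^\delta_{\al,\al}$ and $\partial^2_{\al_3}F^\delta_{\al,\al}$ are integrable near the origin, and performing two integrations by parts, which gives a bound of the type $|f^\delta_r|\le C\,\rho_2^{\epsilon}(\log\rho_1)^{-2}(\log\rho_3)^{-2}(\rho_1\rho_2\rho_3)^{-1}$, integrable at the origin; the pathology near $\al_1+\al_3=r-2$ is treated the same way. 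In other words, Proposition \ref{mel2} has to be applied in a weakened form (contours at $M=0$, $M'=r-2$ with boundary regularity/integrability conditions), not with the holomorphy hypothesis you assume. Your treatment of $T^\gamma_r$, the exponential bound of order $O(1/r)$, the extraction of the leading symbols, and the identification $f^\delta_0=-\tr\rho(\delta)$ via Lemma \ref{spheregeom} are all consistent with the paper (note the paper leaves the analogous identification of $f^\eta_0$ to future work), but without the integrability argument for the singular diagonal terms your proof of \eqref{tooo} for $\xi=\delta$ is incomplete.
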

\begin{proof}
 The proof consists in checking that the hypothesis of Proposition \ref{mel2} are satisfied for  precise choices of orderings in \eqref{choice}.
This is easily done for most of the terms in $T^\gamma_r$ and $T^\eta_r$, with similar arguments to the ones in section \ref{toe}. 
Still there are some pathological terms which are the ones containing denominators vanishing for extreme values of $\al_1,\al_2,\al_3$.
These terms, though well defined thanks to the numerators, might be non holomorphic. One can check that they still provide the integrability condition necessary for the existence of the Toeplitz operator.
For example the term  $\frac{\sin^2\frac{\pi}r(\al_1)}{ \sin\frac{\pi}r(\al_1+\al_3)}$ in 
$F^\delta_{\al,\al}$  is not holomorphic for negative values of $\al_1$ and $\al_3$. 
Nevertheless we first remark that $\partial^2_{\al_1}F^\delta_{\al,\al}$ and $\partial^2_{\al_3}F^\delta_{\al,\al}$ are integrable near the origin.
Therefore we get by integrating $\al_1,\al_3$ on the pure imaginary axis and $\Re\al_2=-\epsilon<0$, that, after two integrations by part,  
$\vert f^\delta_r \vert\leq C\rho 2^\epsilon(\log(\rho_1)^{-2}(\log(\rho_3)^{-2}(\rho_1\rho_2\rho_3)^{-1}$ which is integrable at the origin 
(see Remark \ref{derder}).
The same pathology appears  with the term
$\frac{\sin^2\frac{\pi}r(\al_3+1)}{ \sin\frac{\pi}r(\al_1+\al_3+2)}$ near $\al_1+\al_3-r-2$ and can be solved the same way.

Let us remark that this discussion shows that the condition of holomorphy of $G(\mu,\al)$  in Proposition 
\ref{mel2} can be weakened by restricting $\Omega$ to $M=0$ and $M'=r-2$ and adding some regularity conditions at the boundary.
\end{proof}
Note that the absence of square roots in the expressions of $f^\delta_0$ and $f^\eta_0$ shows clearly that both are 
singular on the divisor $\{z_0z_1z_2z_3=0\}$.

We leave for a future work the identification of $f^\eta_0$ with the trace function $f_\eta$ on $\boM(\Sigma)$ as suggested by Conjecture \ref{principal}. 

\appendix
\newpage
\section{Computations with fusion rules}\label{fusion}

Here is a toolkit to obtain the formula of Propositions \ref{torep} and \ref{spherep}. 
There is a calculus for colored trivalent graphs in 3-space invented by Kauffman.  We collect here the formulas which are necessary for our computations where we set $[n]=\frac{A^{2n}-A^{-2n}}{A^2-A^{-2}}$. Here an edge colored by $n$ has to be interpreted as $n$ parallel copies of the same edge cabled by the Jones-Wenzl idempotent $f_n$. Be careful that the corresponding colored graph in TQFT has a shifted coloring and an alternate sign, see Subsection \ref{jones-wenzl}. We refer to \cite{mv} for the proofs.

\begin{figure}[htbp]
\centering
  \def\svgwidth{\columnwidth}
 \executeiffilenewer{panneau-idempotent.svg}{panneau-idempotent.pdf}%
 {inkscape -z -D --file=panneau-idempotent.svg %
 --export-pdf=panneau-idempotent.pdf --export-latex}%
 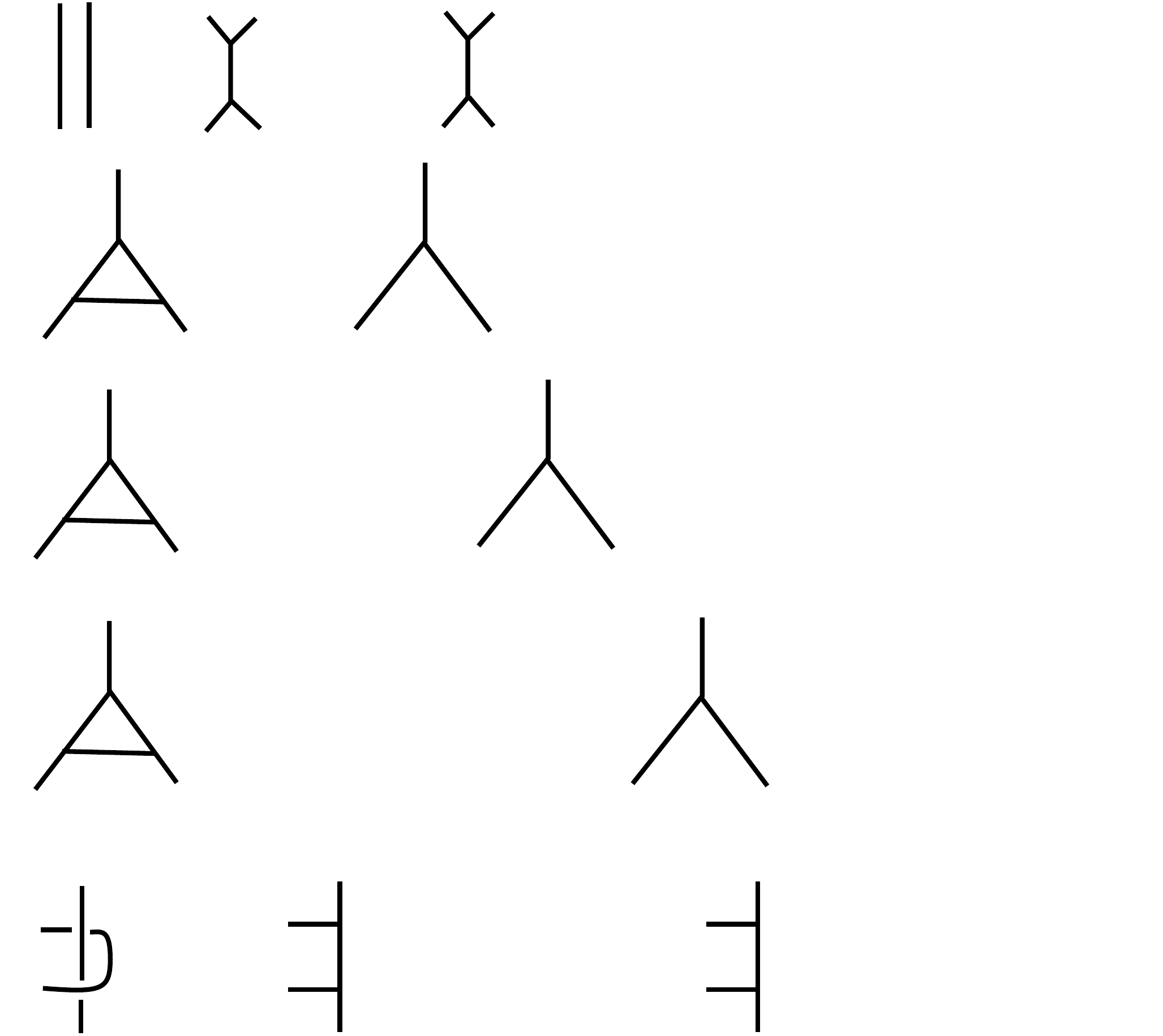%

  \caption{Fusion rules}
  \label{fig:panneau} 
\end{figure}

\end{document}